\theoremstyle{plain}
\newtheorem{asm}{Assumption}[section]
\newtheorem{corollary}{Corollary}
\newtheorem{lemma}{Lemma}
\newtheorem{thm}{\protect\theoremname}
\newtheorem{proposition}{Proposition}
\theoremstyle{remark}
\newtheorem{remark}{Remark}[section]
\newtheorem{example}{Example}[section]
\newcommand{\beginsupplement}{%
        \setcounter{section}{0}  
        \renewcommand{\thesection}{S\arabic{section}}
        \setcounter{subsection}{0}  
        \renewcommand{\thesubsection}{S\arabic{section}.\arabic{subsection}}
        \setcounter{table}{0}  
        \renewcommand{\thetable}{S\arabic{table}}
        \setcounter{figure}{0}  
        \renewcommand{\thefigure}{S\arabic{figure}}
        \setcounter{equation}{0}  
        \renewcommand{\theequation}{S\arabic{equation}}
     }
\begin{document}

\begin{frontmatter}

\title{An Adversarial Approach to Identification}
\runtitle{An Adversarial Approach to Identification}

\begin{aug}
%
%
\author[id=au1,addressref={add1}]{\fnms{Irene}~\snm{Botosaru}\ead[label=e1]{botosari@mcmaster.ca}}
\author[id=au2,addressref={add2}]{\fnms{Isaac}~\snm{Loh}\ead[label=e2]{lohi@uncw.edu}}
\author[id=au3,addressref={add1}]{\fnms{Chris}~\snm{Muris}\ead[label=e3]{muerisc@mcmaster.ca}}
\address[id=add1]{%
\orgdiv{Department of Economics},
\orgname{McMaster University}}

\address[id=add2]{%
\orgdiv{Department of Economics},
\orgname{University of North Carolina Wilmington}}

\end{aug}

\support{We thank Bo Honor\'e, Jiaying Gu, Hide Ichimura, and Jim Powell for discussions and suggestions. Botosaru gratefully acknowledges financial support from the Canada Research Chairs Program. This paper was presented at the University of Arizona in March 2024 and the Southern Economic Association in November 2024.}
\begin{abstract}

We introduce a new framework for characterizing identified sets of structural and counterfactual parameters in econometric models. By reformulating the identification problem as a set membership question, we leverage the separating hyperplane theorem in the space of observed probability measures to characterize the identified set through the zeros of a discrepancy function with an adversarial game interpretation. The set can be a singleton, resulting in point identification. A feature of many econometric models, with or without distributional assumptions on the error terms, is that the probability measure of observed variables can be expressed as a linear transformation of the probability measure of latent variables. This structure provides a unifying framework and facilitates computation and inference via linear programming. We demonstrate the versatility of our approach by applying it to nonlinear panel models with fixed effects, with parametric and nonparametric error distributions, and across various exogeneity restrictions, including strict and sequential.
\end{abstract}

\begin{keyword}
\kwd{partial identification}
\kwd{nonlinear panel models}
\kwd{counterfactual parameters}
\kwd{linear programming}
\end{keyword}

\end{frontmatter}

\section{Introduction}
\label{sec:intro}

Identification of structural and counterfactual parameters is a central challenge in econometric models with unobserved heterogeneity. In many cases, the distribution of unobserved heterogeneity is not point identified, leading to partial identification of structural and/or counterfactual parameters. This issue is pervasive in nonlinear panel models with fixed effects. 
Fixed effects obstruct the point identification of \textit{both} structural parameters and partial effects in all but a narrow class of models \citep{ArellanoBonhomme2012}.
This highlights the need for methods that achieve sharp identification for both types of parameters, while remaining computationally feasible and enabling valid inference. 

Addressing these issues in nonlinear panel models is challenging. Many existing approaches are tailored to specific model features, relying on parametric assumptions about error distributions or support and exogeneity restrictions on the covariates. Additionally, existing methods focus on either structural or counterfactual parameters. This has led to a fragmented literature with various methods addressing only isolated aspects of the broader identification problem.

We introduce a novel and unifying framework for characterizing identified sets of structural and counterfactual parameters in econometric models with unobserved heterogeneity. Departing from existing approaches, we reformulate the identification problem as a set membership question in the space of \textit{observed} probability measures (i.e., probability measures of observed random variables). This reformulation allows us to leverage the separating hyperplane theorem in the infinite-dimensional space of observed probability measures, providing a characterization of the identified set via the zeros of a new discrepancy function. The discrepancy function checks whether there exists at least one hyperplane that separates the observed probability measure from the set of model probabilities, defined as the collection of all probability measures of observed random variables consistent with a given parameter value. By aggregating over all hyperplanes and all measures in the set of model probabilities, the discrepancy function reveals whether the observed measure belongs to the set of model probabilities. The discrepancy function has a maximin representation or adversarial game interpretation, inspiring the name of our approach.\footnote{Our approach is distinct from the simulation-based adversarial \textit{estimation} method of \citet{Kaji2023}. Ours is an identification approach.}

We establish new sufficient and necessary conditions for sharp identification: when the set of model probabilities is convex, our approach obtains a sharp characterization of the identified set; when convexity fails, our approach characterizes an outer set. When the identified set is a singleton, we obtain point identification. Remarkably, many econometric models naturally feature a convex set of model probabilities.

A key innovation of our framework is its ability to accommodate both parametric and nonparametric error distributions, along with a wide range of exogeneity restrictions and conditioning variables --- whether continuous or discrete, strictly exogenous or predetermined. To demonstrate the power of our approach, we characterize the identified set for structural and counterfactual parameters in the semiparametric binary choice panel model under sequential exogeneity (or ``predeterminedness'') and without parametric assumptions on the distribution of the error terms, addressing a long-standing gap in the literature on nonlinear panel models. We further highlight the versatility of our method by applying it to the binary choice panel model with error terms that follow a fixed but arbitrary distribution. For the nested case of logistic errors, we recover established results for the structural parameter and recently derived results for counterfactual parameters. These applications illustrate the versatility of our framework and its potential to advance partial identification in nonlinear econometric models.

Let $Z$ denote the observed random variables, and \(\mu^*_Z\) the observed probability measure characterizing the distribution of \(Z\). Let $\theta\in\Theta$ denote the parameter of interest, which can include both structural parameters and arbitrary functionals of the latent probability measure, and  $\overline{\mathcal{M}}_\theta$ denote the closure of the set of model probabilities.\footnote{\label{takingtheclosure} Explicitly taking the closure ensures the inclusion of limit measures, such as those arising from degenerate distributions when latent random variables reach extreme values (e.g., fixed effects approaching \(\pm \infty\)).
} We reformulate the identification problem as a set membership question: For a candidate parameter value \(\theta\), does \(\mu^*_Z\) belong to \(\overline{\mathcal{M}}_\theta\)? Accordingly, the identified set for \(\theta\) is given by:
\begin{align}
\label{E:essentialIDset}
    \Theta_{\mathrm{I}} = \{\theta \in \Theta: \mu^*_Z \in \overline{\mathcal{M}}_{\theta} \}.
\end{align}
This set contains all parameter values compatible with \(\mu^*_Z\). If \(\mu^*_Z \in \overline{\mathcal{M}}_\theta\), the observed probability measure and the model probabilities are indistinguishable at \(\theta\); otherwise, \(\theta\) is incompatible with \(\mu^*_Z\). 

To determine membership of $\mu^*_Z$ in $\overline{\mathcal{M}}_\theta$, we leverage the separating hyperplane theorem. Specifically, when $\mu_Z^*\notin\overline{\mathcal{M}}_\theta$ and $\mathcal{M}_\theta$ is convex, there exists a hyperplane $\phi$ that separates $\mu_Z^*$ from $\mathcal{M}_\theta$. Aggregating across all hyperplanes $\phi$ leads to the following discrepancy function:
\begin{align}
    T(\theta) 
    \equiv 
    \sup_{\phi \in \Phi_b(\mathcal Z)} 
        \inf_{\mu \in \mathcal{M}_\theta} 
        \left(
            \EE{\mu_Z^*}{\phi} - \EE{\mu}{\phi}
        \right),
    \label{def:T_theta}
\end{align}
where $\mathcal{Z}$ denotes the support of $Z$ and $\Phi_b(\mathcal{Z})$ a set of bounded functions supported on $\mathcal{Z}$, defined in Section \ref{sec:mainIDresult}.

The discrepancy function $T(\theta)$ evaluates to zero if and only if no separating hyperplane $\phi$ exists. Consequently, the zeros of $T(\theta)$ can be used to characterize $\Theta_\mathrm{I}$. This characterization is sharp provided that $\mathcal{M}_\theta$ is convex --- a property shared by many econometric models. When  $\mathcal{M}_\theta$ is not convex, the zeros of $T(\theta)$ describe an outer set. When $\Theta_\mathrm{I}$ is a singleton, our framework obtains point identification of $\theta$. 

In many econometric models, the discrepancy function admits a low-dimensional representation through an extreme point characterization. This facilitates the computation of zeros of $T(\theta)$ using linear programming. Two features of many models are central to this result: (i) the observed probability measure can be expressed as a linear transformation of the latent probability measure (i.e., the measure of latent random variables),\footnote{In semiparametric models with unrestricted error distributions, this transformation corresponds to a pushforward measure, whereas in models with parametric error distributions, it takes the form of a linear operator.} and (ii) the latent probability measures lie within a convex set.

For example, the semiparametric binary choice panel model under strict exogeneity exhibits both features, allowing for efficient computation of the identified set for the structural and counterfactual parameters using linear programming. Conversely, the semiparametric binary choice panel model under sequential exogeneity satisfies the linearity condition but not the convexity of the set of latent probabilities; instead, this model features a convex set of model probabilities $\mathcal M_\theta$, which yields sharp identification. However, computing the identified set in this case requires an extension of our linear programming approach. Beyond these computational advantages, the two features enable the sample analog of the discrepancy function to serve as a test statistic, facilitating valid inference. 
We establish the asymptotic distribution of this test statistic and construct critical values for hypothesis tests that are uniformly valid across the underlying probability distributions and parameter values.

A few key features distinguish our proposed method. The first distinguishing feature underscores the simplicity of our method in establishing sharp identification. A sufficient condition for sharpness is the convexity of $\mathcal M_\theta$. This can be established directly, or, given the linearity of the transformation, via the convexity of the set of latent probability measures. The latter holds in two key cases: (i) when no assumptions are imposed on the latent probability measure, such as in panel models with no distributional assumptions on the error terms, and (ii) when the latent probability measure is required to satisfy a finite number of linear restrictions. Such linear restrictions typically arise in two contexts: (i) when analyzing counterfactual parameters, many of which can be expressed as linear functionals of the latent probability measure,\footnote{See, e.g., \citet{ChristensenConnault2023, torgovitsky2019}.} and (ii) when imposing exogeneity conditions, such as zero-mean or zero-median constraints on the error terms. We show this in our examples.

The second distinguishing feature is the versatility and broad applicability of our approach. Our method applies to many econometric models with unobserved heterogeneity, regardless of whether the models involve error terms that follow parametric distributions, or outcomes and covariates that are discrete or continuous. Our method accommodates various exogeneity restrictions on the covariates, and, if desired, restrictions on the latent probability measures. 

The third distinguishing feature of our method is its computational ease. We implement our procedure and examine the size of the identified set in the canonical semiparametric binary choice model without parametric restrictions on the error terms, both in the cross-sectional case and in the panel case with strictly exogenous  regressors, time effects, and fixed effects.
With predetermined regressors, although the set of latent measures is not convex, the set of model probabilities is, so the identified set can be computed via an extension of our linear programming method. These illustrations to the semiparametric binary choice panel model with strictly exogenous or predetermined covariates contribute new results to the nonlinear panel literature.

The fourth distinguishing feature is that the sample analog of the discrepancy function can be used as a test statistic for valid inference.

\subsection{Related literature}
\label{relatedlit}

This paper contributes to the literature on sharp identification in general classes of models and to the literature on nonlinear panel models.

A wide range of methods has been developed to characterize identified sets across different models.\footnote{For reviews, see \citet{BontempsMagnac2017,CanayShaikh2017,Molinari2020,ChesherRosen2020,KlineTamer2023}.} Approaches include the use of random set theory \citet{beresteanuSharpIdentificationRegions2011, ChesherRosen2017}, optimal transport \citet{GalichonHenry2009,EkelandGalichonHenry2010,GalichonHenry2011}, and information-theoretic methods \citet{schennachEntropicLatentVariable2014}. Other contributions, such as \citet{torgovitsky2019}, focus on extending subdistributions, while more recent work explores minimum relevant partition and latent space enumeration \citet{Tebaldi2023,GuRussellStringham2022}. Some of these methods focus exclusively on complete models, while others are explicitly designed to also address incomplete models.\footnote{In particular, models where the relationship between the observed random variables and the latent random variables is a correspondence, e.g., \citet{Tamer2003}.} Many of these methods, like ours, leverage convex analysis for sharp identification or low-dimensional representations. 

Our framework differs by reformulating the identification problem as a set membership question in the space of observed probability measures, treating \(\mathcal{M}_\theta\) as primitive. This allows us to (i) apply the separating hyperplane theorem in the space of observed probability measures, (ii) link sharpness to the convexity of \(\mathcal{M}_\theta\) (with outer sets characterized when convexity fails), and (iii) exploit two features common to many econometric models: the linearity of the transformation between observed and latent probability measures and the convexity of the set of latent probability measures. These properties facilitate the computation of the identified set via linear programming. This structure enables a unified approach for structural and counterfactual parameters, accommodating both parametric and nonparametric error distributions. This versatility is noteworthy. For example, methods designed for nonparametric error distributions rarely handle parametric restrictions (e.g., \citet{guDualApproachWassersteinRobust2023, ChesherRosenZhang}), while parametric approaches often rely on specific distributional assumptions (e.g., \citet{Bonhomme2012,daveziesFixedEffectsBinary2023,GuRussellStringham2022}). 

While we focus on models where the relationship between observed and latent random variables is a mapping rather than a correspondence, our results can be extended to models involving correspondences. This is possible since our approach operates in the space of observed probability measures, and a sufficient assumption for \textit{all} our results is that the observed probability measure is a linear map of the latent probability measure. Correspondences between random variables can induce such linear maps. However, as we show in the paper, such linear maps often arise in models commonly used in the literature on nonlinear panel models. Since our methodology is tailored to address a long-standing gap in the literature on nonlinear panel models, we leave a detailed investigation of correspondence-defined models for future research.

Identification challenges have long been a hallmark of nonlinear panel models due to the presence of fixed effects, see, e.g., \citet{arellanoPanelDataModels2001},  \citet{honoreBoundsParametersPanel2006}. As \citet{ArellanoBonhomme2012} note, point identification of structural parameters is rare, and even then, functionals of the fixed effects distribution, such as partial effects, remain only partially identified.\footnote{For exceptions of point-identified counterfactual parameters in specific models, see \citet{honoreMarginalEffectsSemiparametric2008},  \citet{aguirregabiriaIdentificationAverageMarginal2024}, and \citet{danoTransitionProbabilitiesIdentifying2023}.} 
Addressing dynamic exogeneity in nonlinear models remains an open challenge, especially when both structural \textit{and} counterfactual parameters are of interest, see \citet{HonoredePaula2021}, \citet{ArkhangelskyImbens}.

We showcase adversarial identification by applying it to the semiparametric binary choice panel model (see \cite{Manski1987}). We obtain new results for the identified set for both structural parameters and the average structural function (ASF), without imposing parametric restrictions on the distribution of the error terms and under various dynamic exogeneity assumptions, such as strict and sequential exogeneity. 

Most work on the semiparametric binary choice panel model focuses on the identification of structural parameters; for a non-exhaustive list, see \citet{Manski1987},
\citet{aristodemouSemiparametricIdentificationPanel2021},
\citet{khanIdentificationDynamicBinary2023},
\citet{botosaruIdentificationTimevaryingTransformation2021},
\citet{mbakopIdentificationDiscreteChoice2023}, 
\citet{gaoIdentificationNonlinearDynamic2024}, 
\citet{ChesherRosenZhang}. 
The identification of partial effects has received less attention: \citet{ChernValHahnNewey2013} derive results under time-homogeneity and restrictions on the support of the covariates; \citet{BotosaruMuris2024} relax the latter assumptions while assuming that the structural parameters are a priori either point- or partially-identified. 

When the error distribution is fixed but arbitrary, the structural parameters can be point-identified in a narrow class of models, see \citet{Bonhomme2012}. Recent work focuses on partial identification of counterfactual parameters starting from point-identified or $\sqrt{n}$-estimable structural parameters, see  \citet{dobronyiIdentificationDynamicPanel2021,daveziesFixedEffectsBinary2023,pakelBoundsAverageEffects2024}. Seminal contributions by  \citet{honoreBoundsParametersPanel2006} and \citet{ChernValHahnNewey2013} treat partial identification of structural parameters and average treatment effects, relying on discrete outcomes and covariates, and support conditions on the fixed effects. 

Even with parametric restrictions on the error terms, identification of both structural and counterfactual parameters under sequential exogeneity is challenging, cf. \citet{arellanoPanelDataModels2001}, \citet{ArellanoBonhomme2012}, and
\citet{HonoredePaula2021}. Results for the identification of structural parameters under sequential exogeneity and error terms that follow parametric distributions can be found in \citet{arellanoBinaryChoicePanel2003}, \citet{piginiConditionalInferenceBinary2022},    \citet{chamberlainIdentificationDynamicBinary2023}, and for partial effects in \citet{bonhommeIdentificationBinaryChoice2023}. 

In contrast, our approach obtains the identified set of structural and counterfactual parameters for models with known or unknown error distributions, does not require functional form restrictions such as an index structure or additivity in the unobservables, time-homogeneity, or support restrictions on the covariates or on the fixed effects. This flexibility extends to models with various forms of dynamic exogeneity --- strict, sequential, or lagged, offering new insights for nonlinear panel models. To the best of our knowledge, ours are the only identification results in nonlinear panel models with sequential exogeneity and without parametric restrictions on the error terms. 

\subsubsection*{Organization.} 
We introduce our main result in Section \ref{sec:mainIDresult}. This result applies to a very general class of models. We specialize the result to semiparametric models with unobserved heterogeneity in Section \ref{sec:semiparametric_pushforward} and with observed covariates in Section \ref{sec:convexityinmodels}. We also illustrate our approach on the semiparametric binary choice model with a discrete regressor in Section \ref{sec:example_manski}. Section \ref{sec:adversarial} shows that the discrepancy function and identified set can be computed via linear programming.
Section \ref{sec:nonlinearpanels} extends all results to models with parametric error terms: Section \ref{sec:panel_ID_theta} focuses on the identification of common parameters, Section \ref{sec:panel_ID_counterfactuals} on the identification of both structural and counterfactual parameters, and Section \ref{linprog_parametric} discusses computation via linear programming. Section \ref{sec:application_to_panels} applies the results to panel models. Appendix \ref{sec:additional_remark} contains additional remarks, including inference in Section \ref{sec:inference}, while Appendix \ref{sec:proofs} contains proofs not contained in the main text.

\subsubsection*{Notation.}

For a Polish space $\mathcal{S}$ endowed with its Borel $\sigma$-algebra, $\mathfrak{B}(\mathcal{S})$ denotes the set of all Borel measures on the set $\mathcal{S}$, $\mathcal{P}(\mathcal{S}) \subseteq \mathfrak{B}(\mathcal{S})$ denotes the set of all Borel probability measures supported on $\mathcal{S}$, and $\delta_{s}$ denotes the Dirac measure at $s\in\mathcal{S}$. For an index $\theta$, $
\Gamma_\theta(\mathcal{S})\subseteq \mathcal{P}(\mathcal{S})$ denotes a generic set of Borel probability measures supported on $\mathcal{S}$. The set $C_c(\mathcal{S})$ denotes the space of compactly supported continuous functions defined on $\mathcal{S}$. The product of two or more Polish spaces is endowed with the product topology. 

For an arbitrary convex set $\mathcal{C}$ in a linear space, $\mathrm{ext}( \mathcal{C})$ denotes the set of all extreme points of $\mathcal{C}$, $\mathrm{co}(\mathcal{C})$ denotes the convex hull of $\mathcal{C}$, and $\overline{\text{co}}(\mathcal{C})$ denotes the smallest closed convex set containing the set $\mathcal{C}$.

We use $\subset$ to denote a strict subset, i.e. $A \subset B$ means that $x\in A \Rightarrow x \in B$ and $\exists b \in B: b \not \in A$. We use $A \subseteq B$ to denote that $A \subset B$ or $A = B$.

For arbitrary Borel measures $\mu, \mu'$, the total variation norm between $\mu$ and $\mu'$ is denoted by $\norm{\mu -\mu'}_{\mathrm{TV}} \equiv \sup_{B \text{ Borel}} |\mu(B) - \mu'(B)|$. 
For a given arbitrary measure $\mu$ and an arbitrary vector-valued function $f$, $f \in L^1(\mu)$ if each component function of $f$ is integrable with respect to $\mu$. We let $\EE{\mu}{\cdot}$ denote integration against a general probability measure $\mu$.

We denote by $\Phi(\cdot)$ the cumulative distribution function of the standard normal distribution, and by $\Lambda(\cdot)$ the cumulative distribution function of the logistic distribution.

\section{Main result}
\label{sec:mainIDresult}

Let $Z$ denote an observable Borel measurable random variable supported on a space $\mathcal{Z}$, and let $\mu^*_{Z}$ denote the true probability measure of $Z$. Denote by $\Theta$ the set of possible values of the parameter of interest $\theta$, and by \(\Gamma_\theta\) a set of auxiliary parameters $\gamma$ that may vary with $\theta$. The parameter of interest may include functions of $\gamma\in\Gamma_\theta$. For example, in the binary choice model of Section \ref{sec:example_manski}, $\gamma$ is the unknown distribution of an unobservable error term and $\theta$ includes counterfactual choice probabilities, which are functionals of $\gamma$. 

For each $\theta \in \Theta$ and $\gamma \in \Gamma_\theta$, the econometric model for $Z$ specifies a probability measure $\mu_{Z,(\theta,\gamma)}$, which we call the \textit{model probability}. For fixed $\theta$, define the set of model probabilities as:
\begin{align} 
\label{eq:Mtheta}
    \mathcal{M}_\theta &\equiv \{\mu_{Z,(\theta, \gamma)}: \gamma \in \Gamma_\theta\}.
\end{align}
This set is the collection of all probability measures of $Z$ that are consistent with the econometric model under parameter value $\theta$. The set is a fundamental object for our analysis, and its geometric properties are essential for our main result in Theorem \ref{P:main} below. 

The identified set for $\theta$ is defined as the set of parameter values compatible with the probability measure $\mu_Z^*$, formally defined in \eqref{E:essentialIDset}, where $\overline{\mathcal{M}}_{\theta}$ is the closure of $\mathcal{M}_\theta$  with respect to the topology induced by the total variation (TV) norm:
\begin{align}
\label{eq:closureMtheta}
    \overline{\mathcal{M}}_{\theta} &\equiv \{m \in \mathcal P(\mathcal Z): \forall \varepsilon >0, \exists \gamma \in \Gamma_\theta \text{ such that } d_{\mathrm{TV}}(m, \mu_{Z,(\theta, \gamma)}) <\varepsilon \}.
\end{align}

Defining the identified set through the closure \(\overline{\mathcal{M}}_{\theta}\) offers two benefits. First, it explicitly includes limit measures, which may arise when, e.g., $\gamma$ converges to a degenerate measure.\footnote{Alternatively, such measures could be directly incorporated into $\mathcal M_\theta$.}
Second, it ensures that the identified set includes all parameter values $\theta$ for which the model measures are indistinguishable from the observed measures in the total variation norm, thereby avoiding issues related to impossible inference. Further details are provided in Section \ref{IDnormclosure}.

Computing the identified set based on \eqref{E:essentialIDset} involves a search over $\overline{\mathcal M}_\theta$, which can be challenging when $\Gamma_\theta$ is an infinite-dimensional space. This motivates the alternative characterization of the identified set in Theorem \ref{P:main} below, which uses the following assumptions.

\begin{asm} 
\label{A:zero}
    $\mathcal{Z}$ is a Polish space.
\end{asm}

\begin{asm} 
\label{A:one}
    For all $\theta \in \Theta$, there exists some $\sigma$-finite positive measure $\lambda_\theta \in \mathfrak{B}(\mathcal{Z})$ with respect to which every $\mu \in \M_\theta$ is continuous.
\end{asm}
Assumption \ref{A:zero} accommodates random variables $Z$ supported on various separable spaces, and excludes spaces that are non-metrizable, which rarely arise in econometric applications. Assumption \ref{A:one} requires that measures in $\mathcal{M}_\theta$ be well-behaved with respect to a $\sigma$-finite Borel measure $\lambda_{\theta}$ that is allowed to vary with $\theta$. The assumption is mild, allowing for a broad class of random variables, including continuous, discrete, and mixed types, and excluding singular measures. We discuss the necessity of this assumption in Remark \ref{A2.2_necessary} in Section \ref{sec:additional_remark}, while in Section \ref{sec:nonlinearpanels} we relax it.

Consider the discrepancy function defined in \eqref{def:T_theta} with
\begin{align}
\label{phi_b}
    \Phi_b(\mathcal{Z}) \equiv \{\phi: \mathcal{Z} \to [0,1]: \phi \text{ is Borel measurable}\},
\end{align}
and define the set of parameter values that set \eqref{def:T_theta} to zero as:
\begin{align}
\label{E:MI}
    \Thetam \equiv \{\theta \in \Theta: T(\theta) = 0\}.
\end{align}

Our main result below clarifies the connection between $\Theta_\mathrm{I}$ in \eqref{E:essentialIDset} and $\Thetam$ in \eqref{E:MI}, and serves as a building block for the subsequent analysis.

\begin{thm}[Main result]
\label{P:main}
     Let Assumptions \ref{A:zero} and \ref{A:one} hold. For any $\mu^*_Z \in \mathcal{P}(\mathcal{Z})$, 
     \begin{align} 
     \label{E:ineqthm}
         \Theta_{\mathrm{I}} \subseteq \Thetam. 
     \end{align}
     
     Additionally, let $\overline{\mathcal{M}}_\theta$ be convex for all $\theta$. Then $\Theta_{\mathrm{I}} = \Theta_{\mathrm{MI}}$.
\end{thm}
\begin{proof}
    Theorem \ref{P:main} is a direct implication of Proposition \ref{P:main_appendix}, see Section \ref{sec:proofs}.
\end{proof}

Convexity of \(\mathcal{M}_\theta\) plays a central role in our approach.\footnote{While this convexity may influence the geometry of the identified set, convexity of the identified set itself is neither implied by nor required for our results.} First, it serves as a sufficient condition for sharp identification by ensuring that \(\overline{\mathcal{M}}_\theta\) is convex. When convexity of \(\mathcal{M}_\theta\) fails, the characterization via \eqref{def:T_theta} provides an outer set. Second, if extreme points of \(\mathcal{M}_\theta\) exist, convexity of \(\mathcal{M}_\theta\) reduces the search in \eqref{def:T_theta} over \(\mu \in \mathcal{M}_\theta\) to a search over these extreme points. While characterizing these extreme points can be a complex, model-specific task, we show that many econometric models have a specific feature that renders this step unnecessary. This feature refers to the fact that the model probability can be expressed as a linear map on the space \(\Gamma_\theta\). That is,
\begin{equation}
\label{IB:linearoperator}
\mu_{Z,(\theta,\gamma)} = \mathcal{T}_\theta \gamma, \; \gamma\in\Gamma_\theta,
\end{equation}  
where \(\mathcal{T}_\theta: \Gamma_\theta \to \mathcal{P}(\mathcal{Z})\) is a linear map. We illustrate this structure with concrete examples in subsequent sections: semiparametric models where the error distribution is either unrestricted or linearly restricted (see Section \ref{sec:semiparametric_pushforward}) or parametrically specified (see Section \ref{sec:nonlinearpanels}).
The representation in \eqref{IB:linearoperator} implies that convexity of \(\Gamma_\theta\) is sufficient for Theorem \ref{P:main} to apply,\footnote{This assumption is not necessary. We show in Section \ref{sec:predetermined_panels} that under sequential exogeneity, \(\mathcal{M}_\theta\) is convex despite \(\Gamma_\theta\) failing to be convex.}enabling dimensionality-reduction via a search over the extreme points of the convex set \(\Gamma_\theta\), which are easier to characterize. 

Both linearity of the map and convexity of \(\Gamma_\theta\) enable the computation of the zeros of \(T(\theta)\) via a linear program (see Sections \ref{sec:adversarial} and \ref{linprog_parametric}), making computation of \(\Theta_{\mathrm{I}}\) through \(\Theta_{\mathrm{MI}}\) a feasible task, even when direct computation using \eqref{E:essentialIDset} may be impractical.

\begin{remark}
    The discrepancy function in \eqref{def:T_theta} follows directly from our definition of the identified set in \eqref{E:essentialIDset} and is new to the literature. We provide a comparison to other criterion functions used in the literature in Remark \ref{discrepacycomparisons} in Section \ref{IDnormclosure}.
\end{remark}

\subsubsection*{Intuition}
\label{intuition}

We provide a heuristic interpretation for the discrepancy function $T(\theta)$ and Theorem \ref{P:main}. The maintained assumption for the discussion here is that $\mathcal{M}_\theta$ is convex.

Given a \(\theta \in \Theta\) and a probability measure \(\mu^*_Z \in \mathcal{P}(\mathcal{Z})\), there may exist several \(\gamma \in \Gamma_\theta\) such that the corresponding model probabilities \(\mu_{Z,(\theta,\gamma)} \in \mathcal{M}_\theta\) are indistinguishable from the true \(\mu^*_Z\) in the sense of \eqref{eq:closureMtheta}. If there exists at least one such \(\gamma\),\footnote{Or if a sequence of \(\gamma\)'s can be constructed such that \(\mu_{Z, (\theta, \gamma)}\) converges to \(\mu^*_Z\).} then \(\mu^*_Z \in \overline{\mathcal{M}}_\theta\), and consequently, \(\theta \in \Theta_{\mathrm{I}}\). Theorem \ref{P:main} uses insights from convex analysis to solve this existence problem. In particular, the proof of Theorem \ref{P:main} shows that $\mu^*_Z \in \overline{\mathcal{M}}_\theta$ if and only if
\begin{align}
\EE{\mu^*_Z}{\phi} \le \sup_{\mu \in \mathcal{M}_\theta} \EE{\mu}{\phi} \text{ for all } \phi \in \Phi_b(\mathcal{Z}). 
\label{eq:ThetaMI_via_inequalities}
\end{align}
The discrepancy function \(T(\theta)\) in \eqref{def:T_theta} is then obtained after rearranging and taking the supremum over \(\phi\). The decision rule is based on the sign of \(T(\theta)\): If \(T(\theta) > 0\), the parameter \(\theta\) is excluded from the identified set, while if \(T(\theta) \leq 0\), \(\theta\) is included in the identified set. When $\mathcal M_\theta$ is convex, this characterization is sharp.

The discrepancy function $T(\theta)$ has an adversarial formulation where two opposing players, a critic and a defender, interact strategically. The critic selects a feature $\phi\in\Phi_b(\mathcal Z)$ and the defender selects a measure $\gamma\in\Gamma_\theta$. The critic seeks to maximize the discrepancy between the feature observed in the data, i.e. $\mathbb{E}_{\mu^*_Z}[\phi]$, and the corresponding feature predicted by the model, i.e. $\mathbb{E}_{\mu}[\phi]$, for a given parameter value $\theta$ and taking into account the defender's action. In response, the defender adjusts $\gamma$ to minimize this discrepancy. The resulting discrepancy function captures the maximum discrepancy that the critic can enforce, even after the defender optimally adjusts the probability measure of the unobserved heterogeneity. The sign of the discrepancy function determines a decision rule: A positive value means that the critic has identified a feature where the model, under a specified parameter value, fails to replicate the observed data; the parameter value is then excluded from the identified set. A non-positive value means that the defender can always find a measure that aligns the model's prediction with the observed data; the parameter value is then included in the identified set. Note that if \(T(\theta) \leq 0\) for all \(\phi\), the critic selects \(\phi = 0\) to ensure \(T(\theta) = 0\).

This interpretation, together with the central role of the discrepancy function in our identification, computation, and inference results, forms the basis of our adversarial approach.

\section{Models without parametric restrictions}
\label{sec:semiparametric_pushforward}

Models in this section are described as follows. Let the input variables be denoted by \(W \in \mathcal{W}\) (which may include latent variables) and continue to denote the observed random variables by \(Z \in \mathcal{Z}\).
\footnote{For example, \(Z\) may consist of observed outcomes \(Y\) and conditioning covariates \(X\), such as regressors and instrumental variables, whereas \(W\) may contain \(X\) along with stochastic error terms or other latent random variables. See Section \ref{sec:convexityinmodels} for further discussion.}  
In this setting, \(\Gamma_\theta \subseteq \mathcal{P}(\mathcal{W})\) is the set of probability measures supported on \(\mathcal{W}\) that are allowed by the model under the parameter value \(\theta\); we denote this set by \(\Gamma_\theta(\mathcal{W})\) to emphasize its dependence on \(\mathcal{W}\). 

For any $\theta\in\Theta$, there exists a measurable map $\psi_\theta:\mathcal{W} \mapsto \mathcal{Z}$ known up to $\theta$, 
such that 
\begin{equation}
\label{ZWas}
    Z=\psi_\theta(W) \textrm{ almost surely}.
\end{equation}
This specification includes semiparametric models with, e.g., outcome equations such as $Z=h(\beta,W)$, $\theta=(h,\beta)$, with $h$ an unknown function and $\beta$ an unknown finite-dimensional parameter, and models with outcome equations such as $Y=m(W)$, where $\theta=m$ is an unknown function. 

For fixed $\theta$, $\psi_\theta$ induces a pushforward measure $(\psi_\theta)_*: \Gamma_\theta(\mathcal W) \ra \mathcal{P}(\mathcal{Z})$ defined as:
\begin{align}
\label{eq:define_pushforward}
    ((\psi_\theta)_* \gamma)(S) \equiv \gamma(\psi_\theta^{-1}(S)) \quad \text{for all measurable sets } S \subseteq \mathcal{Z},
\end{align}
where \(\psi_\theta^{-1}(S) = \{ w \in \mathcal{W} : \psi_\theta(w) \in S \}\) is the preimage of \(S\) under \(\psi_\theta\). 
Then,
\begin{equation}
\label{mtheta_pushforward}
\mathcal{M}_\theta = \{(\psi_\theta)_* \gamma : \gamma \in \Gamma_\theta(\mathcal{W})\} = (\psi_\theta)_* \Gamma_\theta(\mathcal{W}).
\end{equation}

\begin{asm}
\label{A:pushforward}
    For any $\theta$, there exists a measurable map $\psi_\theta:\mathcal W \mapsto \mathcal Z$ such that for any $\gamma\in\Gamma_\theta$, 
    the model probability is given by:
    $$\mu_{Z,(\theta,\gamma)}(S) = (\psi_\theta)_*\gamma (S), \text { for any Borel } S \subseteq \mathcal{Z}.$$
\end{asm}

Assumption \ref{A:pushforward} is naturally satisfied in the class of models with outcome equations  described by \eqref{ZWas}. Here, the linear map in \eqref{IB:linearoperator} is the pushforward measure $(\psi_\theta)_*$, so that measures on \(\mathcal{Z}\) are obtained by ``pushing forward'' \(\gamma \in \Gamma_\theta(\mathcal{W})\) via \((\psi_\theta)_*\). \footnote{While $\psi_\theta$ here is not a correspondence, our result in Corollary \ref{thm:pushforward_sharp} may still apply to such models. This is because a sufficient condition for that result is the existence of $\mathcal T_\theta$ in \eqref{IB:linearoperator}, which can be induced by a correspondence between $\mathcal W$ and $\mathcal Z$. In Section \ref{sec:nonlinearpanels}, we extend our results to cover linear \textit{operators} from \(\Gamma_\theta(\mathcal{W})\) to \(\mathcal{P}(\mathcal{Z})\). Notably, a correspondence between \(\mathcal{W}\) and \(\mathcal{Z}\) can induce a map or an operator between spaces of probability measures on \(\mathcal{W}\) and \(\mathcal{Z}\). Deriving sufficient conditions for when such correspondences lead to well-defined \textit{linear} maps or operators is left for future work.} 

\begin{corollary}
\label{thm:pushforward_sharp}
    Let Assumptions  \ref{A:zero}, \ref{A:one}, and \ref{A:pushforward} hold, and assume that $\Gamma_\theta(\mathcal W)$ is convex. Then,
    $\mathcal{M}_\theta$ is convex 
    and $\Theta_{\mathrm{I}} = \Theta_{\mathrm{MI}}$. 
\end{corollary}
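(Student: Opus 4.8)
The plan is to derive Corollary \ref{thm:pushforward_sharp} as a direct consequence of Proposition \ref{P:main}, so the only real work is to show that the hypotheses of that proposition's second (``sharpness'') clause are met — namely that $\overline{\mathcal{M}}_\theta$ is convex — given that $\Gamma_\theta(\mathcal{W})$ is convex and Assumptions \ref{A:zero}, \ref{A:one}, \ref{A:pushforward} hold. First I would record that, by Assumption \ref{A:pushforward} and the identity \eqref{mtheta_pushforward}, $\mathcal{M}_\theta = (\psi_\theta)_*\Gamma_\theta(\mathcal{W})$, i.e. $\mathcal{M}_\theta$ is the image of the convex set $\Gamma_\theta(\mathcal{W})$ under the pushforward map $(\psi_\theta)_*$.

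The key observation is that $(\psi_\theta)_*$ is an affine (indeed linear) map between the linear spaces of finite signed Borel measures on $\mathcal{W}$ and on $\mathcal{Z}$: for measures $\gamma_0,\gamma_1$ and $t\in[0,1]$ one checks on every Borel $S\subseteq\mathcal{Z}$ that $((\psi_\theta)_*((1-t)\gamma_0 + t\gamma_1))(S) = ((1-t)\gamma_0+t\gamma_1)(\psi_\theta^{-1}(S)) = (1-t)\gamma_0(\psi_\theta^{-1}(S)) + t\gamma_1(\psi_\theta^{-1}(S)) = (1-t)((\psi_\theta)_*\gamma_0)(S) + t((\psi_\theta)_*\gamma_1)(S)$, using only finite additivity of the real-valued maps $S\mapsto\gamma_i(\psi_\theta^{-1}(S))$ and the fact that preimages commute with the relevant set operations. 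Since the image of a convex set under an affine map is convex, $\mathcal{M}_\theta$ is convex; and since $\Gamma_\theta(\mathcal{W})\subseteq\mathcal{P}(\mathcal{W})$ and pushforwards of probability measures are probability measures, $\mathcal{M}_\theta\subseteq\mathcal{P}(\mathcal{Z})$, so this is convexity in the desired ambient space.

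It remains to pass from convexity of $\mathcal{M}_\theta$ to convexity of its total-variation closure $\overline{\mathcal{M}}_\theta$ (as defined in \eqref{eq:closureMtheta}); here I would invoke the standard fact that the closure of a convex subset of a normed (or, more generally, topological) vector space is convex — the TV norm makes the space of finite signed measures a normed space, addition and scalar multiplication are continuous, so the closure of a convex set is convex, and $\mathcal{P}(\mathcal{Z})$ is itself TV-closed in that space so the closure stays inside $\mathcal{P}(\mathcal{Z})$ and agrees with the set defined in \eqref{eq:closureMtheta}. With $\overline{\mathcal{M}}_\theta$ convex and Assumptions \ref{A:zero}, \ref{A:one} in force, the second clause of Proposition \ref{P:main} gives $\Theta_{\mathrm{I}} = \Theta_{\mathrm{MI}}$, which is the claim.

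I do not anticipate a genuine obstacle — the content is essentially ``affine image of convex is convex, closure of convex is convex, then cite Proposition \ref{P:main}.'' The only points requiring a sentence of care are (i) confirming that $(\psi_\theta)_*$ really is well-defined and linear on the relevant spaces of measures, i.e. that $\psi_\theta$ is measurable enough for preimages of Borel sets to be measurable (this is implicit in Assumption \ref{A:pushforward}, since $\mu_{Z,(\theta,\gamma)}$ is asserted to be a bona fide probability measure), and (ii) making sure the notion of closure used in \eqref{eq:closureMtheta} coincides with the topological closure in the normed space of signed measures so that the ``closure of convex is convex'' lemma applies verbatim; both are routine. I would also remark that Assumption \ref{A:one} is not needed for the convexity conclusion itself — it enters only through Proposition \ref{P:main} to guarantee the separating-hyperplane/duality machinery underlying $\Theta_{\mathrm{I}} = \Theta_{\mathrm{MI}}$ — so the corollary's hypothesis list is simply inherited from the proposition.
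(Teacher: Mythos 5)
Your proposal is correct and follows essentially the same route as the paper: convexity of $\mathcal{M}_\theta$ as the image of the convex set $\Gamma_\theta(\mathcal{W})$ under the linear pushforward map, then an appeal to Proposition \ref{P:main}. You are in fact slightly more careful than the paper in explicitly noting that Proposition \ref{P:main} requires convexity of the TV-closure $\overline{\mathcal{M}}_\theta$, which you supply via the standard fact that the closure of a convex set in a normed space is convex — a step the paper's proof leaves implicit.
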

\begin{proof}
    The set $\mathcal{M}_\theta$ is convex since it is the image of a convex set under a linear map. This follows from Assumption \ref{A:pushforward}, which defines elements of $\mathcal{M}_\theta$ via the application of the linear pushforward measure $(\psi_\theta)_*$ to elements of the convex set $\Gamma_\theta(\mathcal{W})$.
    That $\Theta_{\mathrm{I}} = \Theta_{\mathrm{MI}}$ follows from Theorem \ref{P:main}, and from the convexity of $\mathcal M_\theta$.
\end{proof}

The set $\Gamma_\theta(\mathcal W)$ plays a central role in our analysis. In some semiparametric models, it is unrestricted, i.e.  $\Gamma_\theta(\mathcal{W}) = \mathcal{P}(\mathcal{W})$ is the set of all probability measures on $\mathcal W$, while in other models, $\gamma$ is known to satisfy certain linear restrictions, i.e. $\Gamma_\theta(\mathcal{W}) = \mathcal{P}(\mathcal{W})^g$ which we define below. Models with parametric restrictions on $W$ are discussed in Section \ref{sec:nonlinearpanels}.

For a known vector of functions $g:\Theta\times\mathcal{W}\to\mathbb{R}^{d_g}$, define 
\begin{align}
\label{Gamma_restricted}
        \mathcal{P}(\mathcal{W})^{g} \equiv \{\gamma\in  \mathcal{P}(\mathcal{W}): g \in L^1(\gamma) \text{ and } \EE{\gamma}{g(\theta, w)} = 0\}.
\end{align}
This is the set of all probability measures on $\mathcal W$ that satisfy a set of $d_g<\infty$ linear restrictions that may depend on $\theta$. 
As will become clear from examples throughout this paper, such restrictions are important in many econometric models.
First, many restrictions commonly made on the distribution of latent variables, such as mean- or median-independence, can be expressed as in \eqref{Gamma_restricted}. 
Second, many counterfactuals of interest can be cast in the form $\EE{\gamma}{g(\theta, w)}=0$, implicitly imposing linear restrictions on \(\gamma\).
Section \ref{sec:example_manski} illustrates this for a semiparametric binary choice model.

Corollary \ref{thm:pushforward_sharp} shows that convexity of $\Gamma_{\theta}(\mathcal W)$ is sufficient for sharp identification.
Under Assumption \ref{A:probabilitymeasure} below, each of $\mathcal{P}(\mathcal{W})$ and $\mathcal{P}(\mathcal{W})^g$ is convex, and Corollary \ref{thm:pushforward_sharp} applies. 
\begin{asm}
\label{A:probabilitymeasure}
    $\mathcal{W}$ is a Polish space. 
\end{asm}
We are now ready to establish an extremal point characterization of the result in Theorem \ref{P:main}. Characterizing the identified set using \eqref{def:T_theta} involves a search over the space $\mathcal{M}_\theta$. 
Given Assumption \ref{A:pushforward}, the search can instead be conducted over $\Gamma_\theta(\mathcal{W})$, i.e. for any $\phi\in\Phi_b(\mathcal{Z})$,
    \begin{align}
    \label{E:ps12}
    \sup_{\mu \in \mathcal{M}_\theta} \EE{\mu}{\phi} = \sup_{\gamma \in \Gamma_\theta(\mathcal{W})} \EE{\gamma}{\phi \circ \psi_\theta}.
    \end{align}
Proposition \ref{thm:convexity} below shows that this search can further be confined to a smaller space. 
\begin{proposition}[Extremal point characterization of the identified set]
\label{thm:convexity}
    Let Assumptions  \ref{A:zero}, \ref{A:one}, \ref{A:pushforward} and \ref{A:probabilitymeasure} hold. Additionally,
    \begin{itemize}
        \item[(i)] if $\Gamma_\theta(\mathcal{W}) = \mathcal{P}(\mathcal{W})$, then $\theta \in \Theta_\mathrm{I}$ if and only if
        \begin{align}
            \label{IDset_W}
            \EE{\mu^*_Z}{\phi} \le \sup_{w \in \mathcal{W}} (\phi \circ \psi_\theta)(w) \text{ for all } \phi\in \Phi_b(\mathcal{Z}).
        \end{align}
        \item[(ii)] if $\Gamma_\theta(\mathcal{W}) = \mathcal{P}(\mathcal{W})^g$, then $\theta \in \Theta_\mathrm{I}$ if and only if
        \begin{align}
            \EE{\mu^*_Z}{\phi} \le &\sup_{\{c_j, w_j\}_{j=1}^{d_g+1}} \sum_{j=1}^{d_g+1} c_j \left(\phi \circ \psi_\theta(w_j)\right)  \text{ for all } \phi\in \Phi_b(\mathcal{Z}) \nonumber \\ &\text{subject to }  \sum_{j=1}^{d_g+1} c_j g(\theta, w_j) = 0, \, \sum_{j=1}^{d_g+1} c_j = 1, \, c_j \geq 0 .\label{IDset_restricted}
        \end{align} 
    \end{itemize}
\end{proposition}

This result has important implications. 
For instance, when \(\gamma\) is unrestricted, so that \(\Gamma_\theta(\mathcal{W}) = \mathcal{P}(\mathcal{W})\), Proposition \ref{thm:convexity}(i) implies that the identified set can be characterized by searching over the set of extreme points of \(\Gamma_\theta(\mathcal{W})\), all of which correspond to Dirac measures on \(\mathcal{W}\). Thus, the search is restricted to \(\mathcal{W}\) rather than the much larger set of probability measures on \(\mathcal{W}\). This refinement is particularly useful in nonlinear panel models, as it means that determining whether \(\theta \in \Theta_I\) requires only a search over the \emph{values} of the fixed effects, rather than over all possible conditional distributions of the fixed effects.

\begin{remark}
\label{rem:LP_extreme}
The discrepancy function $T(\theta)$ in \eqref{def:T_theta} simplifies. For example, under the conditions for case (i) in Proposition \ref{thm:convexity},
\begin{equation}
T(\theta) = \sup_{\phi \in \Phi_b(\mathcal Z)} 
        \inf_{w \in \mathcal W} 
        \left(
            \EE{\mu_Z^*}{\phi} -  (\phi \circ \psi_\theta)(w) 
        \right).
        \label{eq:bilinear_extreme}
\end{equation}
Section \ref{sec:adversarial} uses this insight to show that $T(\theta)$ can be recast as a linear program.
\end{remark}

\begin{remark}
Under additional regularity conditions, the dimensionality of the search can be further reduced by considering all continuous functions $\phi:\mathcal{Z}\to [0,1]$, see Section \ref{sec:inference_dualrole}.
\end{remark}

\subsection{Semiparametric regression models} 
\label{sec:convexityinmodels}

In this section, we focus on semiparametric models with both observed and unobserved heterogeneity. Our primary goal is to clarify how conditioning variables (including regressors and instrumental variables) are treated within our framework. We also provide a blueprint for verifying the assumptions of Corollary \ref{thm:pushforward_sharp} for models with discrete and continuous conditioning variables. 
Section \ref{sec:example_manski} provides an example.

Many semiparametric models have an outcome equation of the form:
\begin{equation}
\label{eq:model_y_h}
    Y = h(X,U;\theta),
\end{equation}
where $Y \in \mathcal{Y}$, $X \in \mathcal{X}$, and $U \in \mathcal{U}$ are random variables, $\theta \in \Theta$ is an unknown parameter, and $h:\mathcal{X} \times \mathcal {U} \times \Theta \to \mathcal{Y}$ is a (structural) function known up to $\theta$. $Y$ denotes outcome variables, $X$ denotes observed variables, and $U$ unobserved variables.  

Using the notation from Section \ref{sec:semiparametric_pushforward}, let \(Z = (Y, X)\), \(W = (X, U)\), \(\mathcal{Z} = \mathcal{Y} \times \mathcal{X}\), and \(\mathcal{W} = \mathcal{X} \times \mathcal{U}\). Denote the distribution of $W$ supported on $\mathcal W$ by \(\gamma \in \Gamma_\theta(\mathcal{W})\subseteq\mathcal{P}(\mathcal{W})\), the marginal distribution of \(X\) under \(\gamma\) by \(\gamma_\mathcal{X}\in\Gamma_{\theta, \mathcal{X}}(\mathcal{X})\subseteq \mathcal{P}(\mathcal{X})\), and the conditional distribution of \(U \mid X = x\) for each \(x \in \mathcal{X}\) by \(\gamma_{U \mid x}\in\Gamma_{\theta, x}(\mathcal{U})\subseteq \mathcal{P}(\mathcal{U})\). By the disintegration theorem, measures \(\gamma \in \Gamma_\theta(\mathcal{W})\) have differential \(\mathrm{d}\gamma = \mathrm{d}\gamma_{U|x} \, \mathrm{d} \gamma_X\)
for all \(x \in \mathcal{X}\), allowing thus for correlation between $X$ and $U$.

Finally, for any \(\theta \in \Theta\), the mapping 
\begin{equation}
\label{lem:map}
    \psi_\theta: W \mapsto (h(X,U; \theta), X)
\end{equation}
induces, for each \(\gamma \in \Gamma_\theta(\mathcal{W})\), the pushforward measure \((\psi_\theta)_*\) on \(\mathcal{P}(\mathcal{Z})\), as defined in \eqref{eq:define_pushforward}. Consequently, the set \(\Gamma_\theta(\mathcal{W})\) induces the set \(\mathcal{M}_\theta = (\psi_\theta)_* \Gamma_\theta(\mathcal{W})\) as in \eqref{mtheta_pushforward}.

The assumption below allows us to specialize the assumptions of Corollary \ref{thm:pushforward_sharp} to models with outcome equation as in \eqref{eq:model_y_h}.

\begin{asm} \label{Asm:semip_sharpness}
    (i) $\mathcal{Y}$ and $\mathcal{X}$ are Polish spaces; (ii) The set $\Gamma_{\theta, \mathcal{X}}$ is convex, and the set  $\Gamma_{\theta, x}$ is convex for every $x\in\mathcal X$; (iii) There exists a $\sigma$-finite $\lambda_{\theta, \mathcal{X}} \in \mathfrak{B}(\mathcal{X})$ with respect to which every $\gamma_X \in \Gamma_{\theta, \mathcal{X}}(\mathcal{X})$ is continuous; (iv) There is a collection of $\sigma$-finite $\mathcal{X}$-measurable measures $\{\lambda_{\theta, x}: x \in \mathcal{X}\} \subseteq \mathcal{P}(\mathcal{Y})$ such that, for all $x \in \mathcal{X}$ and $\gamma_{U|x} \in \Gamma_{\theta, x}(\mathcal{U})$, the pushforward of $\gamma_{U|x}$ under $h(x,\cdot; \theta): \mathcal{U} \mapsto \mathcal{Y}$ is continuous with respect to $\lambda_{\theta, x}$.
\end{asm}

Assumption \ref{Asm:semip_sharpness}(i) is a regularity condition which fulfills the requirements of Assumption \ref{A:zero}. Assumption \ref{Asm:semip_sharpness}(ii) specifies sufficient conditions for the convexity of $\Gamma_\theta(\mathcal{W})$, and consequently the convexity of $\mathcal{M}_\theta$. Typically, $X$ are treated as conditioning variables, in the sense that the model specifies assumptions on features of the conditional distribution $\gamma_{U|x}, \; x\in\mathcal X$. By including $X$ in $W$, typical conditions on $\gamma_{U|x}$ can be formulated as linear restrictions on \(\gamma\) as in \eqref{Gamma_restricted}. If, additionally, $\mathcal U$ is a Polish space so that Assumption \ref{A:probabilitymeasure} holds, convexity of $\Gamma_\theta(\mathcal W)$ is preserved and straightforward to verify by determining whether assumptions on $U \mid X$ can be expressed as in \eqref{Gamma_restricted}. Alternatively, convexity of $\Gamma_\theta(\mathcal W)$ can be established by verifying Assumption \ref{Asm:semip_sharpness}(ii). Note that here $\gamma_\mathcal X$ is treated as a latent distribution. Since $X$ enters $Z$, it is possible to treat $\gamma_X$ as known and equal to the marginal distribution of $X$ under the observed distribution of $Z$. In this case, $\Gamma_{\theta,\mathcal X}$ is a singleton and trivially convex. Convexity of $\Gamma_{\theta, x}$ is enforced via assumptions on $\gamma_{U \mid x}$ that guarantee convexity of the set for every $x\in\mathcal X$.  

Assumptions \ref{Asm:semip_sharpness}(iii) and (iv) are mild continuity conditions that guarantee that all measures in $\mathcal{M}_\theta$ are continuous with respect to the $\sigma$-finite measure $\lambda_\theta \in \mathfrak{B}(\mathcal{Y} \times \mathcal{X})$ with differential $\d \lambda_\theta = \d \lambda_{\theta, x} \, \d \lambda_{\theta, \mathcal{X}}$ for all $x \in \mathcal{X}$, thereby satisfying Assumption \ref{A:one}. 
These assumptions require, respectively, that the marginal and conditional distributions of measures in $\mathcal{M}_\theta$ have density with respect to $\sigma$-finite measures, which may depend upon $\theta$. These dominating measures may be continuous, discrete, or a mixture of both. For example, when $\mathcal{Y}$ is discrete, $\lambda_{\theta, x}$ is the counting measure on $\mathcal{Y}$ for all $x$, in which case Assumption \ref{Asm:semip_sharpness} is fulfilled. In contrast to much of the existing relevant literature, Assumptions \ref{Asm:semip_sharpness}(iii) and (iv) allow for $(Y,X)$ to be continuous, discrete, or a mixture. Importantly, Corollary \ref{corr1_prime} pertains to semiparametric models with outcome equations of the form \eqref{eq:model_y_h}, where, for example, $X$ is continuous with respect to Lebesgue measure $\lambda_{\theta,\mathcal{X}}$ and either $Y$ is discrete with $\lambda_{\theta,x}$ the counting measure or $Y$ is continuous with $\lambda_{\theta,x}$ the Lebesgue measure (or some mixture of these cases).

\begin{corollary}
\label{corr1_prime}
    Consider the outcome equation in \eqref{eq:model_y_h}. Let Assumption \ref{Asm:semip_sharpness} hold for all $\theta \in \Theta$. Then $\Thetaw = \Thetam$.
\end{corollary}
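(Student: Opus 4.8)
The plan is to show that Assumption \ref{Asm:semip_sharpness} supplies exactly the hypotheses of Corollary \ref{thm:pushforward_sharp} for the model \eqref{eq:model_y_h} with $Z=(Y,X)$, $W=(X,U)$, and $\psi_\theta$ as in \eqref{lem:map}, and then to invoke that corollary. Two of the hypotheses are immediate. Assumption \ref{A:zero} follows from Assumption \ref{Asm:semip_sharpness}(i), since a finite product of Polish spaces is Polish, so $\mathcal{Z}=\mathcal{Y}\times\mathcal{X}$ is Polish. Assumption \ref{A:pushforward} holds by construction: the section defines $\mathcal{M}_\theta=(\psi_\theta)_*\Gamma_\theta(\mathcal{W})$, i.e.\ $\mu_{Z,(\theta,\gamma)}=(\psi_\theta)_*\gamma$ for every $\gamma\in\Gamma_\theta(\mathcal{W})$. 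It therefore remains to verify (a) Assumption \ref{A:one} --- a common $\sigma$-finite dominating measure for all $\mu\in\mathcal{M}_\theta$ --- and (b) convexity of $\Gamma_\theta(\mathcal{W})$.

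For (a), I would take $\d\lambda_\theta=\d\lambda_{\theta,x}\,\d\lambda_{\theta,\mathcal{X}}$ with $\lambda_{\theta,\mathcal{X}}$ and $\{\lambda_{\theta,x}\}$ from Assumption \ref{Asm:semip_sharpness}(iii)--(iv). First I would check $\sigma$-finiteness: writing $\mathcal{X}=\bigcup_n X_n$ with $\lambda_{\theta,\mathcal{X}}(X_n)<\infty$, one has $\lambda_\theta(\mathcal{Y}\times X_n)=\int_{X_n}\lambda_{\theta,x}(\mathcal{Y})\,\d\lambda_{\theta,\mathcal{X}}=\lambda_{\theta,\mathcal{X}}(X_n)<\infty$ because each $\lambda_{\theta,x}$ is a probability measure, and the sets $\mathcal{Y}\times X_n$ exhaust $\mathcal{Z}$. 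Then I would fix $\gamma\in\Gamma_\theta(\mathcal{W})$, disintegrate $\d\gamma=\d\gamma_{U|x}\,\d\gamma_X$, and compute via change of variables against a bounded Borel test function that the pushforward $(\psi_\theta)_*\gamma$ disintegrates with $X$-marginal $\gamma_X$ and conditional at $x$ equal to $\bigl(h(x,\cdot\,;\theta)\bigr)_*\gamma_{U|x}$; measurability of this family in $x$ follows from measurability of $h$ and the $\mathcal{X}$-measurability of $\{\gamma_{U|x}\}$, so the uniqueness clause of the disintegration theorem (Corollary 10.4.15 in \citet{bogachev2007measure2}) applies. By Assumption \ref{Asm:semip_sharpness}(iii), $\gamma_X\ll\lambda_{\theta,\mathcal{X}}$, and by (iv), $\bigl(h(x,\cdot\,;\theta)\bigr)_*\gamma_{U|x}\ll\lambda_{\theta,x}$ for every $x$; chaining these through the disintegration yields $(\psi_\theta)_*\gamma\ll\lambda_\theta$, which is Assumption \ref{A:one}.

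For (b), let $\gamma^1,\gamma^2\in\Gamma_\theta(\mathcal{W})$, $t\in(0,1)$ (the cases $t\in\{0,1\}$ being trivial), and set $\gamma=t\gamma^1+(1-t)\gamma^2$. Its $X$-marginal is $\gamma_X=t\gamma_X^1+(1-t)\gamma_X^2\in\Gamma_{\theta,\mathcal{X}}$ by convexity of $\Gamma_{\theta,\mathcal{X}}$ from Assumption \ref{Asm:semip_sharpness}(ii). Since $\gamma_X^i\ll\gamma_X$, set $f_i=\d\gamma_X^i/\d\gamma_X$, so $tf_1+(1-t)f_2=1$ holds $\gamma_X$-a.e.; disintegrating the mixture then gives, for $\gamma_X$-a.e.\ $x$, $\gamma_{U|x}=tf_1(x)\,\gamma^1_{U|x}+(1-t)f_2(x)\,\gamma^2_{U|x}$, a convex combination of elements of $\Gamma_{\theta,x}$ with nonnegative weights summing to one. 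Convexity of $\Gamma_{\theta,x}$ for each $x$ (Assumption \ref{Asm:semip_sharpness}(ii)) gives $\gamma_{U|x}\in\Gamma_{\theta,x}$ a.e., hence $\gamma\in\Gamma_\theta(\mathcal{W})$. Having verified the hypotheses of Corollary \ref{thm:pushforward_sharp}, I would conclude that $\mathcal{M}_\theta$ is convex and $\Theta_{\mathrm{I}}=\Theta_{\mathrm{MI}}$.

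I expect the main obstacle to be the bookkeeping in step (b): expressing the conditional of a mixture of measures with \emph{different} $X$-marginals correctly in terms of the Radon--Nikodym derivatives $f_i$, and confirming that $tf_1(x)$ and $(1-t)f_2(x)$ are genuinely valid convex weights $\gamma_X$-almost everywhere. The pushforward-disintegration identity in step (a) is routine once the change of variables is written out, though some care is needed with measurability of the conditional families so that the uniqueness part of the disintegration theorem can legitimately be used.
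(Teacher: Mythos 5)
Your proposal is correct and follows essentially the same route as the paper: verify that $\lambda_\theta$ with differential $\d\lambda_{\theta,x}\,\d\lambda_{\theta,\mathcal{X}}$ dominates every pushforward measure (Assumption \ref{A:one}), establish convexity of $\Gamma_\theta(\mathcal{W})$, and invoke Corollary \ref{thm:pushforward_sharp}. Your step (b) is exactly the content of the paper's Lemma \ref{L:convexreg1}, which the paper cites rather than re-derives; your Radon--Nikodym weighting of the conditionals is the same argument used in that lemma's proof.
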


Corollary \ref{corr1_prime} provides a blueprint for checking the assumptions of Corollary \ref{thm:pushforward_sharp}, and, consequently, for establishing sharp identification in semiparametric models with an outcome equation as in \eqref{eq:model_y_h}.

\begin{remark}[Extremal point representation]
Consider now the extremal point representation of Proposition \ref{thm:convexity}. Restrictions on the probability measure of $X$ in $Z$, such as Assumption \ref{Asm:semip_sharpness}(iii), impose restrictions on the measure of $W$; these restrictions \emph{may} constrain $\mathcal P(\mathcal W)$ and $\mathcal P(\mathcal W)^g$ in measure theoretic ways that are not allowed by Proposition \ref{thm:convexity}. 

When $X$ is discrete and $\lambda_{\theta,\mathcal X}$ in Assumption \ref{Asm:semip_sharpness}(iii) is the counting measure, this phenomenon does not occur. In this case, the assumptions of Proposition \ref{thm:convexity} hold, and $\Thetam(=\Thetaw)$ is characterized by the extremal point representation in either \eqref{IDset_W} or \eqref{IDset_restricted}. 

When $X$ is continuous and $\lambda_{\theta,\mathcal X}$ in Assumption \ref{Asm:semip_sharpness}(iii) is the Lebesgue measure, the extremal point representation in \eqref{IDset_W} or \eqref{IDset_restricted} describes an \emph{outer set} for $\Thetam$. It is still the case that $\Thetaw=\Thetam$, but \eqref{IDset_W} or \eqref{IDset_restricted} may give an outer set for $\Thetam$.\footnote{In Section \ref{sec:nonlinearpanels}, we recover an extremal point representation result for $\Thetam$ when $X$ is continuous. 
In that setting, some components of $U$ satisfy parametric restrictions. 
The additional structure on the conditional distribution of $Y$  allows us to further relax the already mild requirements on the distribution of $X$ in Corollary \ref{corr1_prime}.}
\end{remark}

\subsection{Example: semiparametric binary response}
\label{sec:example_manski}

We illustrate our approach using the model in \citet{manskiMaximumScoreEstimation1975} with a discrete regressor. We characterize the identified set for both regression coefficients and counterfactual choice probabilities.
\footnote{The purpose of this section is to illustrate our approach rather than to obtain new results. For existing results, see \citet{komarovaBinaryChoiceModels2013}, \citet{blevinsNonStandardRates2015}, and \citet{torgovitsky2019}.
} 

Consider the binary choice model with outcome equation:
\begin{equation}
Y = 1\{\beta_0 + \beta_1 X - U \geq 0\} \equiv h(X,U; \beta),
\label{eq:maxscore_outcome}
\end{equation}
where $Y\in\{0,1\} = \mathcal Y$, 
$X\in\{x_1,\cdots,x_K\} = \mathcal X$ has $K$ points of support,  
$U\in\mathcal U = \mathbb{R}$ is a scalar error term that satisfies the following conditional median-zero assumption:
\begin{align}
\label{medzero}
    \mathbb P(U \leq 0 | X = x_k) = \frac{1}{2}, \quad k = 1,\cdots,K,
\end{align} 
and $\beta = (\beta_0,\beta_1)$ are unknown structural parameters. The counterfactual choice probabilities are given by: 
\begin{align}
    \tau_k (x^*) 
    \equiv 
    \E{\left.1\{\beta_0 + x^*\beta_1 - U \geq 0 \right| X = x_k\}},
    \label{eq:restriction_due_to_ASF}
\end{align}
which correspond to the average counterfactual outcome obtained by exogenously setting the values of the observed regressors to counterfactual values $x^*$ for the subpopulation given by $X=x_k$.

To characterize the identified set of 
\begin{align}
    \theta = \left(\beta,\tau_1(x^*),\cdots,\tau_K(x^*)\right),
    \label{eq:theta_maxscore}
\end{align}
we verify the assumptions of Corollary \ref{corr1_prime}. Using the notation of the previous sections, $Z=(Y,X),\; W=(X,U)$, $h$ is given by \eqref{eq:maxscore_outcome}, $\gamma_x$ denotes the marginal distribution of $X$, $\gamma_{U|x}$ denotes the conditional distribution of $U \mid X=x, \; x\in\mathcal X$ satisfying \eqref{medzero}, and $\gamma \in \Gamma_\theta(\mathcal{W})$ denotes the distribution of $W$ with differential \(\mathrm{d}\gamma = \mathrm{d}\gamma_{U|x} \, \mathrm{d} \gamma_X\)
for all \(x \in \mathcal{X}\).   

Assumption \ref{Asm:semip_sharpness}(i) is trivially satisfied because $\mathcal Y$ and $\mathcal X$ are finite. 
The same is true for \ref{Asm:semip_sharpness}(iii) and \ref{Asm:semip_sharpness}(iv) with $\lambda_{\theta,\mathcal X}$ and $\lambda_{\theta,x}$ counting measures.
\ref{Asm:semip_sharpness}(ii) is also satisfied: (a) $\Gamma_{\theta, \mathcal X}$ is a point, so convexity is trivially satisfied; (b) for each $x$, the set $\Gamma_{\theta,x}$ is the set of all probability measures $\gamma_{U|x}$ that satisfy the linear restrictions in \eqref{medzero} and \eqref{eq:restriction_due_to_ASF}, so $\Gamma_{\theta,x}$ is convex for all $x$ (see proof of Proposition \ref{thm:convexity}).
To see that \eqref{medzero} and \eqref{eq:restriction_due_to_ASF} impose linear restrictions on $\gamma_{U|x}$ for each $x\in\mathcal X$ consider that these restrictions can be written as
$$E_{\gamma_{U|x}}\left[ \widetilde g(U,\theta) \mid X = x\right] = 0$$
where 
\begin{align}
\widetilde g(U,\theta) = \begin{bmatrix}
    1\left\{U \leq 0\right\} - \frac 1 2 \\
    1\{\beta_0 + x^*\beta_1 - U \geq 0 \} - 
    \tau_k (x^*)
   \end{bmatrix}.
\end{align}
    
Because Assumption \ref{Asm:semip_sharpness} is satisfied, Corollary \ref{corr1_prime} guarantees sharp identification of $\theta$ via $\Thetam$. Moreover, the results of Proposition \ref{thm:convexity} also apply.

It follows that the identified set can be computed using the linear programming approach developed in Section \ref{sec:adversarial}.
Section \ref{app:computation_maximum_score} describes in detail how to apply it to the semiparametric binary choice model, and presents numerical results. 
The computation time for the identified set of $\theta$ is trivial, see Section \ref{app:computation_speed_maxscore_cs}.

\subsection{Linear programming}
\label{sec:adversarial}

The computation of $\Thetam$ requires computation of the discrepancy function \eqref{def:T_theta}. This may seem challenging, as it involves a search over measures in $\mathcal{M}_\theta$ and over functions $\phi \in \Phi_b(\mathcal{Z})$. However, computing $T(\theta)$ only requires solving a linear program (LP). Below, we provide a sketch, leaving the details to Section \ref{app:computation_general}. The approach extends to the models in Section \ref{sec:nonlinearpanels}.

Our theoretical results allow for $\mathcal Z$ and $\mathcal W$ to be Polish spaces, cf. Assumptions \ref{A:zero} and \ref{A:probabilitymeasure}. 
For the purpose of computation, we may restrict attention to the case of finite supports,
\(
    \mathcal Z = \{z_1,\cdots,z_L\}, \;
    \mathcal W = \{w_1,\cdots,w_M\}.
\)
Note that the linear program presented here does not rely on discreteness of $\mathcal Z$ and $\mathcal W$. Rather, it relies on the linearity of the map between probability measures as in \eqref{IB:linearoperator}. When $\mathcal Z$ and $\mathcal W$ are discrete, the map takes the form of a matrix, which is the case here.

The probability measure $\mu_Z^*$ can be represented by a probability mass function (pmf) $p_Z^* = \left(p_{Z,l}^*\right)$,
an $L \times 1$ probability vector,
so that the first term in \eqref{def:T_theta} is
\[
\EE{\mu_Z^*}{\phi} = \sum_{l = 1}^L \phi(z_l) p_{Z,l}^* = \phi^\prime p_Z^*.
\]

Similarly, every model probability $\mu_{Z,(\theta,\gamma)}$ corresponds to a pmf
$p_{Z}^{(\theta,\gamma)} 
 = 
 \left(p_{Z,l}^{(\theta,\gamma)}\right) 
$,
and every probability measure $\gamma$ for $W$ to an $M\times 1$ pmf $p_W = \left(p_{W,m}\right)$. 
By the pushforward representation in Section \ref{sec:semiparametric_pushforward}, 
cf. \eqref{eq:define_pushforward} and Assumption \ref{A:pushforward}, 
there exists a matrix $\widetilde C_\theta \in \mathbb R^{L \times M}$  such that 
\begin{equation}
    p_{Z}^{(\theta,\gamma)} = \widetilde C_\theta p_W.
    \label{eq:pushforward_pmf}
\end{equation}
For a given $\theta$, this pushfoward matrix maps the pmf of $W$ to a model pmf of $Z$ under parameter value $\theta$.
For examples of $\widetilde C_\theta$, see Sections \ref{sec:computation_examples} and \ref{app:computation_maximum_score}.

The second term of \eqref{def:T_theta} can be written as
\begin{equation}
    \EE{\mu}{\phi} 
    = 
    \phi^\prime \widetilde C_\theta p_W,
\end{equation}
so the infimum over $\mu$ in \eqref{def:T_theta} can be replaced by a minimum over $p_W$
subject to 
\begin{equation}
\label{equalconstraints}
    p_W \geq 0, \quad A_\theta p_W = b_\theta.
\end{equation}
These constraints enforce that $p_W$ is a probability vector and allow for additional constraints corresponding to $\Gamma_\theta$.
Without additional constraints, $A_\theta = \iota_M^\prime$ and $b_\theta = 1$. 
For examples of other constraints, see Sections \ref{sec:computation_examples} and \ref{app:computation_maximum_score}.

Taken together, we have
\begin{align}
    T(\theta)
    &= 
    \max_{\phi \in \mathbb R^L: \; 0 \leq \phi \leq 1} \quad
    \min_{p_W  \in \mathbb R^M: \; p_W \geq 0, \; A_\theta p_W = b_\theta} \quad
        \phi^\prime C_\theta p_W,
    \label{eq:T_is_bilinear}
\end{align}
where
\begin{equation}
    C_\theta = p_Z^* \iota_M^\prime - \widetilde C_\theta.
    \label{def:C_theta}
\end{equation}
In Section \ref{app:linear_programming}, we show that the bilinear program in \eqref{eq:T_is_bilinear} is equivalent to the LP
\begin{equation}
T(\theta) = \left\{
\begin{array}{ll}
\text{max}_{\lambda,\phi}   & \lambda^\prime b_\theta \\
\text{subject to}           & A_\theta^\prime \lambda \le C_\theta^\prime \phi , \\
                            & 0 \leq \phi \leq 1,
\end{array}
\right.
\label{eq:LP_formulation_for_Theta_I_main}
\end{equation}
where $\lambda$ is the Lagrange multiplier associated with the equality constraints in \eqref{equalconstraints}.
We conclude that determining whether $\theta \in \Theta_I$ amounts to solving the LP \eqref{eq:LP_formulation_for_Theta_I_main} and checking that $T(\theta) \leq 0$. This task has negligible computation time even for very large $(L,M)$, see for example the computation times reported in Section \ref{app:computation_speed_maxscore_cs}. 
It is also trivial to write the code for a specific model.
The user specifies 
(i)  supports $\mathcal Z, \, \mathcal W$;
(ii) true parameter values $(\theta^*,p_W^*)$;
(iii) the pushforward matrix $\widetilde C_\theta$;
(iv) restrictions $(A_\theta,b_\theta)$;
then hands off the LP \eqref{eq:LP_formulation_for_Theta_I_main} to a solver.

\begin{remark}[Relationship of LP to Proposition \ref{thm:convexity}]
Proposition \ref{thm:convexity} and Remark \ref{rem:LP_extreme} establish the extremal point problem, i.e. when optimizing a linear functional over a convex set of probability measures, the search can be restricted to the set of extreme points.
Expression \eqref{eq:T_is_bilinear} reformulates the discrepancy function as a bilinear program, where the inner minimization is over the convex hull of extreme points. 
This is a relaxation of the extremal point problem that leaves the value of $T(\theta)$ unchanged, but facilitates computation via the dual LP in \eqref{eq:LP_formulation_for_Theta_I_main}.
There, the problem is expressed in terms of Lagrange multipliers associated with the constraints on the extreme points, eliminating the need to compute them explicitly.
\end{remark}

\begin{remark}
The resulting LP is different from the LP in \citet{honoreBoundsParametersPanel2006} and \citet{bonhommeIdentificationBinaryChoice2023}, and from the quadratic program in \citet{ChernValHahnNewey2013}. In contrast to those approaches, ours avoids a search over the space of all conditional distributions of unobserved heterogeneity, with
the inner minimization restricted instead to the \emph{support} of the unobserved heterogeneity, cf. Proposition \ref{thm:convexity}.
This is expressed by the dual in the LP discussed above. In the resulting LP, $\phi$ ranges over $\Phi_b$, which is typically of low dimension.
Searching over a subset of $\phi$ yields an outer set. 
In contrast, an incomplete search over fixed effects distributions may lead to an inner set.
\end{remark}

\section{Models with parametric restrictions}
\label{sec:nonlinearpanels}

In this section, the latent component $U$ in \eqref{eq:model_y_h} consists of two components: one that follows a known distribution and another whose distribution is unrestricted. The unrestricted component is labeled $\alpha$ to evoke the literature on parametric nonlinear panel models, see, e.g., \citet{Bonhomme2012}. 

We consider the following semiparametric regression model:
\begin{align}
    \label{E:ps25} 
    Y = h(X, \alpha, V; \beta), \;  \text{where }(X,\alpha) \sim \gamma_{X,\alpha} \; \text{ and } V | X,\alpha \sim F_{V|X,\alpha;\beta}, 
\end{align}
where $\beta \in \mathrm{B}$ is the parameter of interest, the observed random variables are $Z=(Y,X)$, the latent random variables $U=(\alpha,V)$  consist of a component $\alpha\in\mathcal{A}$ and a component $V\in\mathcal{V}$, where $\gamma_{X,\alpha} \in \mathcal{P}(\mathcal{X} \times \mathcal{A})$ denotes the joint distribution of $(X,\alpha)$ and $F_{V|X,\alpha;\beta}$ denotes the distribution of $V$ that is allowed to depend on  $(X,\alpha,\beta)$. Here, $\gamma_{X,\alpha}$ is unknown and unrestricted, while $F_{V|X,\alpha;\beta}$ is known up to $\beta$.

In Section \ref{sec:panel_ID_theta}, we characterize the identified set for $\beta$. 
In Section \ref{sec:panel_ID_counterfactuals}, we characterize the identified set of counterfactual parameters.
In Section \ref{linprog_parametric}, we extend the LP approach in Section  \ref{sec:adversarial}.
In Section \ref{sec:panel_examples}, we apply the tools developed here to a parametric binary choice model with fixed effects.

\subsection{Identification of common parameters}
\label{sec:panel_ID_theta}

For the models in this section, the pushforward representation for the model probability in Assumption \ref{A:pushforward} does not hold. Instead, we show that the model probability can be represented as a linear integral operator similar to the one in \citet{Bonhomme2012}.

In the following assumption, $f_{Y|X,\alpha}$ is the conditional density of $Y$ given $(X,\alpha)$ determined by the outcome equation and the parametric distribution of $V$, see \eqref{E:ps25}.
\begin{asm}
\label{A:momentdiscrete}
   For $\beta \in \mathrm{B}$, the densities $(y,x,a) \mapsto f_{Y|X,\alpha}(y| x, a;\beta)$ are defined relative to a $\sigma$-finite measure $\lambda_\mathcal{Y} \in \mathfrak{B}(\mathcal{Y})$, and Borel measurable for all $(x,a)\in\mathcal X \times \mathcal A$.
\end{asm}

The parametric nature of the distribution of $V$ allows us to drop Assumption \ref{A:one}, with Assumption \ref{A:momentdiscrete} imposing a measure continuity condition only on the conditional distributions of $Y$ given $X = x, \alpha = a$ for $\beta \in \mathrm{B}$.

Denote by $\gamma_X$ the $\mathcal{X}$-marginal of $\gamma_{X,\alpha}$. 
Under \eqref{E:ps25} and Assumption \ref{A:momentdiscrete}, for fixed $\beta$, the conditional density $f_{Y|X,\alpha}$ and the joint measure $\gamma_{X, \alpha}$ induce the following linear integral operator:
\begin{equation}
\label{linearoperator}
    \mathcal{L}_\beta[\gamma_{X,\alpha}](S) \equiv \int_{\mathcal{X} \times \mathcal{A}} \int_{\mathcal{Y}} \mathbf{1}_S(y, x) f_{Y|X,\alpha}(y \mid x, a; \beta) \, \mathrm{d}\lambda_{\mathcal{Y}}(y) \, \mathrm{d}\gamma_{X,\alpha}(x,a),
\end{equation}
where $S\subseteq \mathcal Y \times \mathcal X$ is a measurable set and $\mathbf{1}_S(y, x)$ is the indicator function for $S$. That is, for any joint probability measure $\gamma_{X, \alpha}\in \mathcal{P}(\mathcal{X} \times\mathcal{A})$, $\mathcal{L}_\beta[\gamma_{X,\alpha}]$ is the probability measure of $Z$ under $\gamma_{X, \alpha}$. In particular,
\begin{equation}
    \mu_{(Z, \beta, \gamma_{X, \alpha})}(S) = \mathcal{L}_\beta[\gamma_{X,\alpha}](S), \; \text{for all } S\subseteq \mathcal Y \times \mathcal X.
\end{equation}
That is, when a latent component follows a parametric distribution, the linear map $\mathcal T_\theta$ in \eqref{IB:linearoperator} is the integral operator $\mathcal{L}_\beta$. The set of all model probabilities compatible with $\beta$ can then be defined as
\[
\mathcal M_\beta = \left\{ \mathcal{L}_\beta[\gamma_{X,\alpha}]: \gamma_{X,\alpha} \in \mathcal P(\mathcal X \times \mathcal A) \right\}.
\]

\begin{proposition}
\label{cor:panel_ID_beta}
   Let $\mathcal Y, \mathcal X, \mathcal A$ be Polish spaces and let Assumption \ref{A:momentdiscrete} hold for the model described by \eqref{E:ps25}. Let $ \mu_{Z}^*$ be the true probability measure of $Z=(Y,X)$. Then $\beta$ is in the identified set if and only if 
   \begin{align}
   \label{eq:max_for_all_phi_panel}
       \EE{\mu^*_{Z}}{\phi(Z)} & \le \sup_{\substack{ x \in \mathcal{X} \\ a \in \mathcal{A} }} 
       \int_{\mathcal{Y}} \phi(y,x) f_{Y|X,\alpha} (y|x, a; \beta) \, \d \lambda_{\mathcal{Y}} \text{ for all } \phi\in\Phi_b(\mathcal Y \times \mathcal X).
   \end{align}
\end{proposition}

\subsection{Identification of counterfactual parameters}
\label{sec:panel_ID_counterfactuals}

We now turn to the identification of counterfactual parameters \(\tau\). Here, \(\theta = (\beta, \tau)\) denotes the joint parameter of interest. The counterfactual parameters $\tau$ are defined as solutions to the integral equation:
\begin{align}
\label{eq:tau_integral}
\int_{\mathcal{A}} g(x, \alpha, \beta, \tau) \, \mathrm{d}\gamma_{\alpha|x} = 0, \quad \text{for } \gamma_X\text{-a.s.},
\end{align}
where \(g(x, \alpha, \beta, \tau): \mathcal{X} \times \mathcal{A} \times \Theta \to \mathbb{R}^{d_g}\) is a known, vector-valued function. Typically, $\tau$ is a function from $\mathcal{X}$ to $\R^{d_g}$ and $g(x,\alpha, \beta, \tau) = g_0(x,\alpha, \beta) - \tau(x)$ for some $g_0: \mathcal{X} \times \mathcal{A} \times \mathrm{B} \ra \R^{d_g}$, the linear restriction in \eqref{eq:tau_integral} becomes 
\begin{align} \label{E:taug2}
    \int_{ \mathcal{A}} g_0(x,a, \beta) = \tau(x), \, \text{a.e.}. 
\end{align}
Various specifications of $g_0$ in \eqref{E:taug2} allow a researcher to restrict the conditional moments of $\gamma_{X,\alpha}$ and determine if specific conditional moments, parameterized by $\tau$, belong to the identified set.

Using the operator \(\mathcal{L}_\beta\) defined in \eqref{linearoperator}, the set of model probabilities is then:
\begin{align}
\mathcal{M}_\theta &= \left\{ \mathcal{L}_\beta[\gamma_{X,\alpha}] : \gamma_{X,\alpha} \in \Gamma_\tau \right\} \text{ where }\\
\Gamma_{\tau} &= \{\gamma_{X,\alpha} \in \mathcal{P}(\mathcal{X} \times \mathcal{A}):\, g(x,\cdot ,\beta, \tau) \in L^1(\gamma_{\alpha|x}) \text{ and } \eqref{eq:tau_integral} \text{ holds}\}.
\end{align}

In order to identify $(\beta, \tau)$, we impose a light regularity constraint on $g$.
In the assumption that follows, $\R^0$ is identified with $\{0\}$, and a function that maps to $\R^0$ is the zero-map. 
\begin{asm}
\label{A:marginalcond} 
$g: \mathcal{X} \times \mathcal{A} \times \Theta \ra \R^{d_g}$ is Borel and satisfies 
\begin{align} \label{E:taueqn}
        0 \in \mathrm{co}\{g(x,a, \beta, \tau): a \in \mathcal{A}\}, \text{ for all }x \in \mathcal{X}.
    \end{align}
\end{asm}

\begin{proposition}
\label{cor:counterfactual_ID}
Let $\mathcal Y, \mathcal X, \mathcal A$ be Polish spaces and let Assumptions \ref{A:momentdiscrete} and \ref{A:marginalcond} hold for the model described by \eqref{E:ps25}. Let $ \mu_{Z}^*$ be the true probability measure of $Z=(Y,X)$. Then, \(\theta = (\beta, \tau)\) is in the identified set if and only if, for all $\phi\in\Phi_b(\mathcal Y \times \mathcal X)$:
\begin{align}
\mathbb{E}_{\mu_Z^*}[\phi(Z)] \le &\sup_{\substack{x \in \mathcal{X} \\ (c_j, a_j)_{j = 1}^{d_g + 1} }} \sum_{j = 1}^{d_g + 1} c_j \int_{\mathcal{Y}} \phi(y,x) f_{Y|X,\alpha} (y|x,a_j; \beta) \, \d \lambda_{\mathcal{Y}} \nonumber \\
        & \text{subject to }\sum_{j = 1}^{d_g + 1} c_j g(x, a_j, \beta, \tau) = 0,\, \sum_{j = 1}^{d_g +1} c_j = 1, \, c_j \ge 0.\label{E:ps133}
\end{align}
\end{proposition}

\subsection{Linear programming}
\label{linprog_parametric}

We extend the LP approach developed in Section \ref{sec:adversarial} to the present setting in Appendix \ref{sec:comparison_computation}. Here, we note that in the parametric error case, the optimization problem is reformulated using a linear integral operator, reflecting the parametric structure imposed on \( V \). Specifically, the model probabilities are expressed as an integral transform of the joint measure of the unrestricted components of the latent variables through a conditional density function. This representation mirrors the structure observed in the pushforward map from Section \ref{sec:semiparametric_pushforward} but adapts to the parametric setting. 

Crucially, the dimensionality of the resulting LP depends only on the support of \( X \) and \( \alpha \), and not on that of \( V \). This invariance ensures that the computational complexity remains manageable even as the dimension of the latent space increases.

\section{Binary choice with fixed effects}
\label{sec:application_to_panels}

We study the two-period binary choice model with outcome equation
\begin{align}
	Y_t &= 1\{X_t^\prime \beta + \alpha + V_t \geq 0\},\; t=1,2,
	\label{eq:maxscore_panel_outcome_new}
\end{align}
which has a fixed effect $\alpha \in \mathcal A \subseteq \mathbb R$,
error terms $V_t \in \mathcal{V}_t \subseteq \mathbb R$,
and time-varying regressors $X_1 \in \mathcal X_1 = \{x_{11},\cdots,x_{1K_1}\}$ 
and
$X_2 \in \mathcal X_2 = \{x_{21},\cdots,x_{2K_2}\}$. 

We consider the particular specification in \eqref{eq:maxscore_panel_outcome_new} because of its canonical status within the nonlinear panel literature.
Our method applies to a much larger class of models, and does not require the index structure or additive separability.

In Sections \ref{sec:panel_partial_effects} and \ref{sec:predetermined_panels}, we do not impose parametric assumptions on the distribution of the error terms. In Section \ref{sec:panel_partial_effects}, we provide new results for \emph{partial effects} under strict exogeneity, and discuss the extension to correlated random coefficients. 

In Section \ref{sec:predetermined_panels}, we provide new results for both structural parameters and partial effects under \emph{sequential exogeneity} (predetermined covariates). To the best of our knowledge, these are the first such results for the binary choice panel model without parametric restrictions on the distribution of the error terms.
Section \ref{sec:numerical_panel} presents the results from a numerical experiment and visualizes the identified sets.

In Section \ref{sec:panel_examples}, we use our results in Section \ref{sec:nonlinearpanels} to analyze \emph{parametric} binary choice models with fixed effects (logit, probit, and other distributions). We recover the results in the numerical experiment of \citet{ChernValHahnNewey2013} and explore some variations of their design.
The proofs of all results can be found in Appendix \ref{app:panel_proofs}.

\subsection{Partial effects under strict exogeneity}
\label{sec:panel_partial_effects}

We are interested in the average structural functions (ASF):
\begin{align}
    \tau_t(x) = E[1\{x^\prime \beta + \alpha + V_t \geq 0\}], \; t=1,2.
    \label{eq:def_ASF_binary_panel}
\end{align}
This is the period $t$ choice probability obtained by exogenously setting regressors $X_t$ to a fixed value $x$.
We characterize the identified set of $\theta = (\beta,\tau_1,\tau_2)$ under the following assumption:
\footnote{Section \ref{sec:predetermined_panels} provides results under a weaker, sequential exogeneity condition.}
\begin{asm}
\label{a:conditional_stationarity}
The random variables $(\alpha,V_1,V_2,X_1,X_2)$ satisfy
\[V_1 | \alpha, X_1, X_2 \stackrel{d}{=}
  V_2 | \alpha, X_1, X_2.\]
\end{asm}
This is the standard strict stationarity or exogeneity assumption for nonlinear panel models, see \citet{Manski1987} and \citet{ChernValHahnNewey2013}. 

To map this model to the framework of Section \ref{sec:semiparametric_pushforward}, define 
$Y = (Y_1,Y_2)$
and $X = (X_1,X_2)$
so that $Z = (Y,X)$. 
The input variables are 
$W = (\alpha, V_1, V_2, X_1, X_2) \in \mathcal W$. 
The set of probability distributions of $W$ consists of all $\gamma\in\Gamma_\theta(\mathcal W)$ that satisfy Assumption \ref{a:conditional_stationarity}. 
Finally, the mapping \(\psi_\theta\) is given by
\begin{align}
\label{binarypsi}
    \psi_\theta: W \mapsto (1\{X_1' \beta + \alpha + V_1 \geq 0\}, 1\{X_2' \beta + \alpha + V_2 \geq 0\}, X_1, X_2).
\end{align}
Using this mapping and Corollary \ref{thm:pushforward_sharp}, the following result establishes sharp identification of partial effects.
\begin{thm}
Consider the model described by outcome equation \eqref{eq:maxscore_panel_outcome_new} and Assumption \ref{a:conditional_stationarity}. The set $\Theta_{\mathrm{MI}}$ is the sharp identified set for $\theta = (\beta,\tau_1,\tau_2)$.
\label{thm:joint_stat}
\end{thm}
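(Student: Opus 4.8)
The plan is to verify the hypotheses of Corollary \ref{thm:pushforward_sharp} for this particular model, which then delivers $\Theta_{\mathrm{I}} = \Theta_{\mathrm{MI}}$ immediately. Concretely, I would set $\mathcal{Z} = \mathcal{Y} \times \mathcal{X}$ with $\mathcal{Y} = \{0,1\}^2$ and $\mathcal{X} = \mathcal{X}_1 \times \mathcal{X}_2$ finite, and $\mathcal{W} = \mathcal{A} \times \mathcal{V}_1 \times \mathcal{V}_2 \times \mathcal{X}_1 \times \mathcal{X}_2$, a Polish space as a product of Polish spaces; this checks Assumptions \ref{A:zero} and \ref{A:probabilitymeasure}. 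The map $\psi_\theta$ is the one displayed in \eqref{binarypsi}, which is Borel measurable, so Assumption \ref{A:pushforward} holds with $\mathcal{M}_\theta = (\psi_\theta)_* \Gamma_\theta(\mathcal{W})$. For Assumption \ref{A:one}, since $\mathcal{Y}$ is discrete and $\mathcal{X}$ is discrete, I take $\lambda_\theta$ to be the counting measure on $\mathcal{Z}$; every $\mu \in \mathcal{M}_\theta$ is trivially absolutely continuous with respect to it.

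The remaining, and only substantive, point is convexity of $\Gamma_\theta(\mathcal{W})$, the set of distributions of $W = (\alpha, V_1, V_2, X_1, X_2)$ that are consistent with Assumption \ref{a:conditional_stationarity} \emph{and} with the restriction that the counterfactual functionals equal the components $\tau_1, \tau_2$ of $\theta$. The strict-stationarity restriction $V_1 \mid \alpha, X_1, X_2 \stackrel{d}{=} V_2 \mid \alpha, X_1, X_2$ is, for each fixed conditioning value $(\alpha, x_1, x_2)$ and each Borel set $B \subseteq \mathbb{R}$, the linear equality $\gamma(V_1 \in B, \alpha, x_1, x_2) = \gamma(V_2 \in B, \alpha, x_1, x_2)$ in $\gamma$; the ASF restrictions $E_\gamma[1\{x\beta + \alpha + V_t \geq 0\}] = \tau_t(x)$ are likewise linear in $\gamma$. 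A set of probability measures defined by (possibly infinitely many) linear equality constraints is convex: if $\gamma, \gamma' \in \Gamma_\theta(\mathcal{W})$ and $\lambda \in [0,1]$, then $\lambda \gamma + (1-\lambda)\gamma'$ is again a probability measure satisfying each linear constraint. Hence $\Gamma_\theta(\mathcal{W})$ is convex. This is essentially the argument already used in the proof of Theorem \ref{thm:convexity} and in the treatment of Section \ref{sec:example_manski}, applied to the panel conditioning structure here.

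With all four assumptions of Corollary \ref{thm:pushforward_sharp} verified, that corollary yields that $\mathcal{M}_\theta$ is convex and $\Theta_{\mathrm{I}} = \Theta_{\mathrm{MI}}$, i.e.\ $\Theta_{\mathrm{MI}}$ is the sharp identified set for $\theta = (\beta, \tau_1, \tau_2)$, which is the claim. The one place that requires a little care — and which I expect to be the main (mild) obstacle — is confirming that the strict-stationarity condition really does translate into linear restrictions \emph{on $\gamma$ itself} rather than only on conditional objects: one must argue via the disintegration $\d\gamma = \d\gamma_{V_1,V_2 \mid \alpha, x_1, x_2}\, \d\gamma_{\alpha, X_1, X_2}$ that fixing the conditional equality for $\gamma$-almost every conditioning value is equivalent to the family of unconditional linear equalities indexed by Borel sets and conditioning values, and that convexity is preserved under this equivalence. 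Since $\mathcal{X}_1, \mathcal{X}_2$ are finite, the conditioning on $X_1, X_2$ is unproblematic, and the only genuinely infinite-dimensional piece is conditioning on $\alpha$; but convexity under arbitrary families of linear equality constraints does not depend on the cardinality of the index set, so the argument goes through.
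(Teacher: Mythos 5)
Your proposal is correct and follows essentially the same route as the paper: verify Assumptions \ref{A:zero}, \ref{A:one} (counting measure), \ref{A:pushforward} (the map \eqref{binarypsi}), and \ref{A:probabilitymeasure}, then show that the strict-stationarity condition and the ASF definitions are linear equality restrictions on the joint measure $\gamma$ — valid precisely because both sides of Assumption \ref{a:conditional_stationarity} condition on the \emph{same} variables $(\alpha, X_1, X_2)$ — so that $\Gamma_\theta(\mathcal{W})$ is convex and Corollary \ref{thm:pushforward_sharp} applies. The caveat you flag at the end is exactly the right one to worry about, and it is also why this argument breaks under sequential exogeneity, where the paper must instead establish convexity of $\mathcal{M}_\theta$ directly.
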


Computation of $\Thetam$ is straightforward via the LP approach that we developed in Section \ref{sec:adversarial}. In Section \ref{sec:numerical_panel}, we use this approach to visualize $\Thetam$. 

\begin{remark}
Consider a version of this model with the outcome equation
$$Y_t = 1\{X_{1t}^\prime \beta + X_{2t}^\prime \alpha + V_t \geq 0\},$$
where $\alpha$ is a random vector of individual-specific coefficients that may be correlated with $X_1, X_2$. 
Without restrictions on the distribution of $(\alpha,X)$, the proof of Theorem \ref{thm:joint_stat} carries through without modification and the identified set of $\theta$, which can include moments of the random coefficients, is characterized by $\Thetam$.
\end{remark}

\begin{remark}
\citet{aristodemouSemiparametricIdentificationPanel2021} and \citet{ChesherRosenZhang} consider the identification of regression coefficients in nonlinear panel models under (conditional) independence restrictions on the errors and regressors. For example, under $V \perp X$ instead of Assumption \ref{a:conditional_stationarity}, $\Gamma_\theta$ is convex, and our results deliver partial effects.
\end{remark}

\subsection{Predeterminedness}
\label{sec:predetermined_panels}

Assumption \ref{a:conditional_stationarity} does not allow for correlation between current covariates and past shocks. A sequential exogeneity assumption that allows for such  feedback is:
\begin{asm}
\label{a:sequential_stationarity}
The random variables $(\alpha,V_1,V_2,X_1,X_2)$ satisfy
\begin{equation}
\label{E:seqex}
	V_2 | \alpha, X_1 , X_2 \stackrel{d}{=}
	V_1 | \alpha, X_1.
\end{equation}
\end{asm}
This is Assumption 3 in \citet{ChernValHahnNewey2013}, which is less restrictive than the sequential exogeneity assumption in parametric models.\footnote{It does not specify the marginal distribution of the error terms, nor does it require serial independence.}
Under Assumption \ref{a:sequential_stationarity}, $\Gamma_\theta(\mathcal W)$ is not convex.\footnote{The restriction that $V_2$ and $X_2$ are independent conditional on $(A,X_1)$ is  nonlinear.}
Nonetheless, the following result establishes convexity of the set of model probabilities $\mathcal M_\theta$ and uses Theorem \ref{P:main} to characterize the identified set.
\begin{thm}
   \label{thm:sharpness_sequential_exogeneity}
    Consider the model described by outcome equation \eqref{eq:maxscore_panel_outcome_new} and Assumption \ref{a:sequential_stationarity}. The set $\Thetam$ is the sharp identified set for $\theta=(\beta,\tau_1,\tau_2)$.
 \end{thm}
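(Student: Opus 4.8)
The plan is to show that even though $\Gamma_\theta(\mathcal W)$ fails to be convex under Assumption \ref{a:sequential_stationarity}, the set of model probabilities $\mathcal M_\theta$ is nonetheless convex; then Proposition \ref{P:main} delivers sharpness directly. The key observation is that the only observable content of $\gamma$ enters through $Z = (Y_1, Y_2, X_1, X_2)$, and $Y_t$ is a deterministic threshold-crossing function of $(X_t'\beta + \alpha + V_t)$. So I would disintegrate $\gamma$ as $\d\gamma = \d\gamma_{V_1 V_2 \mid \alpha, x_1, x_2}\,\d\gamma_{\alpha \mid x_1, x_2}\,\d\gamma_{X_1 X_2}$, and then reparametrize: what matters for the pushforward is, for each $(x_1, x_2)$ in the (finite) support of $X$ and each value of $\alpha$, the joint probability that $V_1$ lands above/below $-x_1'\beta - \alpha$ and $V_2$ above/below $-x_2'\beta - \alpha$ — i.e. a $2\times 2$ table of "response probabilities" indexed by $\alpha$. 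Writing $\mathcal M_\theta$ in terms of these response-probability kernels and the marginal of $(\alpha, X_1, X_2)$, the pushforward map becomes linear in the induced object, so the image of the relevant constraint set is convex.

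The substantive step is to check that the sequential-exogeneity restriction \eqref{E:seqex}, once translated into this reparametrized space, carves out a \emph{convex} set of admissible response kernels. Assumption \ref{a:sequential_stationarity} says $V_2 \mid \alpha, X_1, X_2 \stackrel{d}{=} V_1 \mid \alpha, X_1$; after integrating $V_1, V_2$ against the threshold indicators, this becomes a linear constraint tying the period-2 marginal response probability at $(\alpha, x_1, x_2)$ to the period-1 marginal response probability at $(\alpha, x_1)$ — crucially, it is a restriction only on \emph{marginals} of the $2\times 2$ table, and it does not constrain the dependence structure between the two periods' responses (the off-diagonal mass is free). The set of $(\alpha, x_1, x_2)$-indexed kernels whose marginals satisfy this linear tying condition, together with the free joint-dependence coordinate and a freely chosen mixing distribution over $\alpha$, is an intersection of affine constraints with a simplex — hence convex. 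Pushing this convex set forward through the (linear) map that reads off the distribution of $(Y_1, Y_2) \mid X_1, X_2$, and noting that the marginal of $X$ is pinned down by the data, shows $\mathcal M_\theta$ is convex. I would make this precise by exhibiting $\mathcal M_\theta$ as $(\psi_\theta)_*$ applied to an auxiliary convex set $\widetilde\Gamma_\theta$ of "enlarged" measures (on a space that records $\alpha$ and the two response events rather than $V_1, V_2$ themselves), so that Corollary \ref{thm:pushforward_sharp}'s argument — image of a convex set under a linear map — applies verbatim.

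Having established convexity of $\mathcal M_\theta$, I would then invoke Assumptions \ref{A:zero} and \ref{A:one} (here trivially satisfied since $\mathcal Z = \{0,1\}^2 \times \mathcal X$ is finite, so take $\lambda_\theta$ the counting measure), so $\overline{\mathcal M}_\theta = \mathcal M_\theta$ is convex and Proposition \ref{P:main} yields $\Theta_{\mathrm I} = \Theta_{\mathrm{MI}}$, i.e. $\Theta_{\mathrm{MI}}$ is the sharp identified set for $\theta = (\beta, \tau_1, \tau_2)$. One must also check that the ASF restrictions defining $\tau_1, \tau_2$ — see \eqref{eq:def_ASF_binary_panel} — are themselves linear functionals of the enlarged measure and so preserve convexity of $\widetilde\Gamma_\theta$; this is routine since $\tau_t(x)$ is an integral of a threshold indicator, exactly as in the reparametrization above.

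The main obstacle I anticipate is the bookkeeping in the reparametrization step: one has to argue carefully that replacing $(V_1, V_2)$ by the coarser "response event" coordinates loses nothing relevant for $\mathcal M_\theta$ (every response kernel compatible with sequential exogeneity is realized by some genuine $(V_1,V_2)$ distribution satisfying \eqref{E:seqex}, and conversely), and that the conditional-independence content of \eqref{E:seqex} really does collapse to an affine constraint on marginals rather than hiding some nonconvex residual. Getting the disintegration and the "for all $(x_1,x_2)$" measurability conditions right — in the spirit of the disintegration setup in the Notation section — is where the care is needed, but the finiteness of the supports of $Y$ and $X$ keeps it manageable.
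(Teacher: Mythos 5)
Your reduction to response--event kernels is fine as bookkeeping, but the central convexity claim is where the argument breaks. You assert that the sequential--exogeneity restriction, ``once translated into this reparametrized space, carves out a convex set'' because it is an affine constraint on marginals of the $2\times 2$ table, and that the set of (kernel family, mixing distribution over $(\alpha,X)$) pairs is ``an intersection of affine constraints with a simplex.'' The problem is that the object being pushed forward is the \emph{joint} measure, i.e.\ the product of the kernel with the mixing distribution, and the map $(\text{kernel},\text{marginal})\mapsto\text{joint}$ is bilinear, not linear. A constraint that forces the conditional law of the period-2 response to be the \emph{same} across all values of $x_2$ (which is exactly what $V_2\,|\,\alpha,X_1,X_2\overset{d}{=}V_1\,|\,\alpha,X_1$ does) is a ratio constraint on the joint and is not preserved under mixtures: if $\gamma_0$ and $\gamma_1$ put different marginal mass on $x_2=0$ versus $x_2=1$, the mixture's conditional response probability at $(\alpha,x_1,x_2)$ is a convex combination of the two components' with \emph{$x_2$-dependent} weights, so it varies with $x_2$ even though neither component's does. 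This failure already occurs in the coarsened ($2\times2$ table) space --- a two-point example with $P_{\gamma_0}(X_2=0)=0.9$, response probability $0.2$, and $P_{\gamma_1}(X_2=0)=0.1$, response probability $0.8$, gives mixture conditionals $0.26$ and $0.74$ at $x_2=0,1$ --- so the reparametrization does not rescue convexity, and Corollary \ref{thm:pushforward_sharp} cannot be applied ``verbatim'' to your enlarged set $\widetilde\Gamma_\theta$.

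The conclusion is nonetheless true, but the paper reaches it by a quite different device (Lemma \ref{convexitypredeterminess}): given $\mu_\iota=(\psi_\theta)_*\gamma_\iota$, one relabels the fixed effect through bijections $\rho_0:\mathcal A\to\mathcal A_0$ and $\rho_1:\mathcal A\to\mathcal A_0^c$ onto \emph{disjoint} subsets of $\mathcal A$, compensating with the stationary shift $h[\rho](v,a)=v+a-\rho(a)$ so that $X_t'\beta+\alpha+V_t$, and hence the pushforward, is unchanged. After relabeling, the two measures have disjoint $\alpha$-supports, so conditioning on $(\alpha,X_1)$ selects exactly one of them, and the mixture $c\gamma_0^h+(1-c)\gamma_1^h$ satisfies \eqref{E:seqex} slice by slice; convexity of $\mathcal M_\theta$ follows, and then Proposition \ref{P:main} gives sharpness as you intend. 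The essential ingredient your proposal is missing is this exploitation of the unrestricted, relabelable fixed effect together with the additive index structure --- without it, the ``affine constraint on marginals'' step is simply false.
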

$\Thetam$ may be computed using Theorem \ref{P:main}, the pushforward representation of Section \ref{sec:convexityinmodels}, and the extremal point representation of Proposition \ref{thm:convexity}. Setting $\psi_\theta$ as in \eqref{binarypsi} and $W$ as in Section \ref{sec:panel_partial_effects}, Theorem \ref{P:main} implies that $\Thetam$ is the set of $\theta$ satisfying
\begin{align} \label{E:seqbnd}
    \EE{\mu^*}{\phi(Z)} \le \sup_{\gamma \in \Gamma_\theta(\mathcal{W}) } \EE{\gamma}{\phi \circ \psi_\theta(W)} \text{ for all } \phi \in \Phi_b(\mathcal{Z}),
\end{align}
where $\Gamma_\theta(\mathcal{W})$ is the set of all distributions satisfying \eqref{E:seqex}. Let $\Gamma^\mathrm{seq}$ be the set of all distributions of $(V_1, V_2, X_2)$ which satisfy $V_2|X_2 \overset{d}{=} V_1$. Then, because both sides of \eqref{E:seqex} are conditional on $(\alpha, X_1)$, the extremal points of $\Gamma_\theta(\mathcal{W})$ all take the form $\gamma \times \delta_{(a, x_1)}$, $\gamma \in \Gamma^\mathrm{seq}$. As in the proof of Proposition \ref{thm:convexity}, we may rewrite the right hand side of \eqref{E:seqbnd} as the supremum over $a, x_1$ of
\begin{align} \label{E:seqdual}
   \sup_{\gamma \in \Gamma^\mathrm{seq}} \EE{\gamma}{\phi \circ \psi_\theta(a, V_1, V_2, x_1, X_2)} = \sup_{\gamma_{X_2}} \sup_{\gamma \in \Gamma^\mathrm{seq}(\gamma_{X_2})}  \EE{\gamma}{\phi \circ \psi_\theta(a, V_1, V_2, x_1, X_2)},
\end{align}
where $\gamma_{X_2}$ is a marginal distribution of $X_2$ and $\Gamma^\mathrm{seq}(\gamma_{X_2})$ is the subset of $\Gamma^\mathrm{seq}$ having $X_2$-marginal $\gamma_{X_2}$. The inner supremum in \eqref{E:seqdual} is an LP and the outer supremum is only over the space of distributions of $X_2$ and points $a, x_1$. Section \ref{sec:numerical_panel} below applies this extremal point characterization to obtain $\Thetam$ for various $\theta_0$.  

\subsection{Numerical experiment}
\label{sec:numerical_panel}

\begin{figure}
    \begin{subfigure}{0.5\textwidth}
        \includegraphics[width=\textwidth]{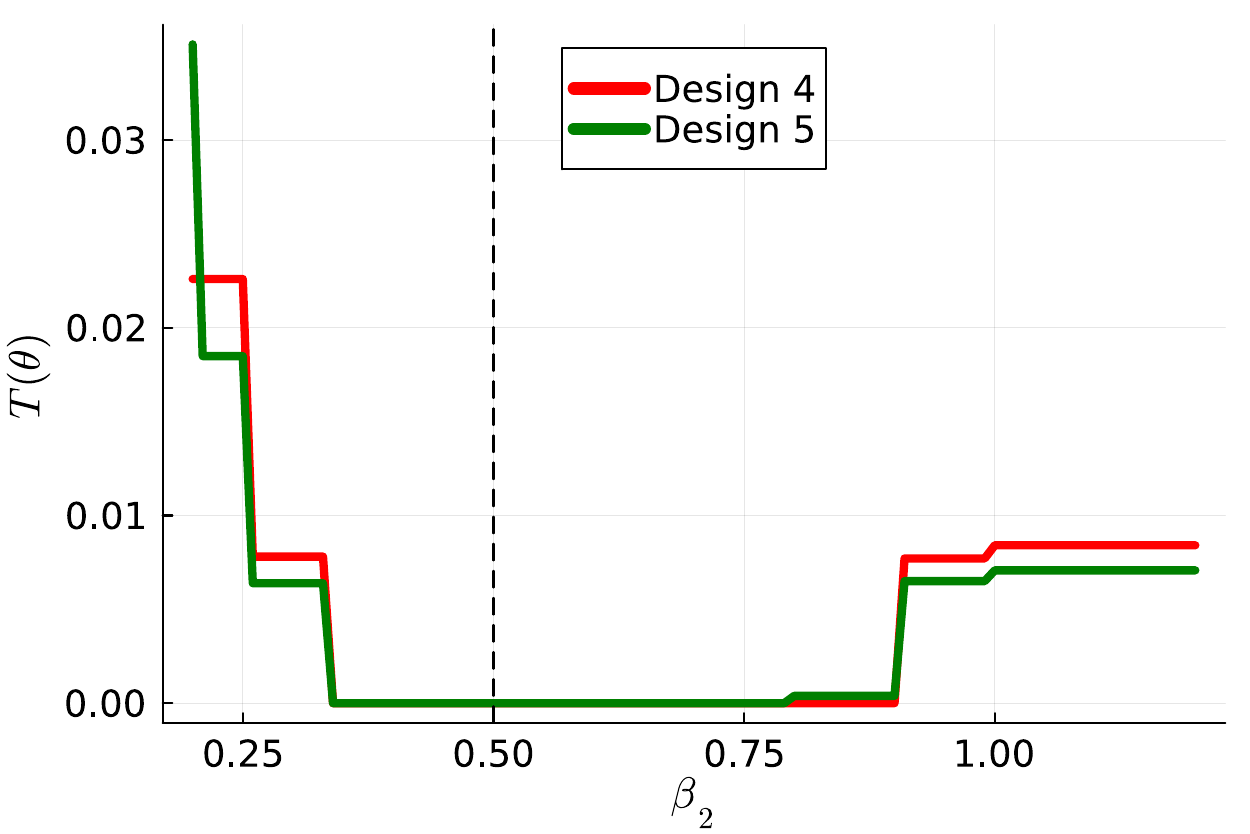} 
        \caption{DGP1: Discrepancy function $T(\theta)$.}
        \label{fig:pd_maxscore_fixed_beta0_T}
    \end{subfigure}\hfill
    \begin{subfigure}{0.5\textwidth}
        \includegraphics[width=\textwidth]{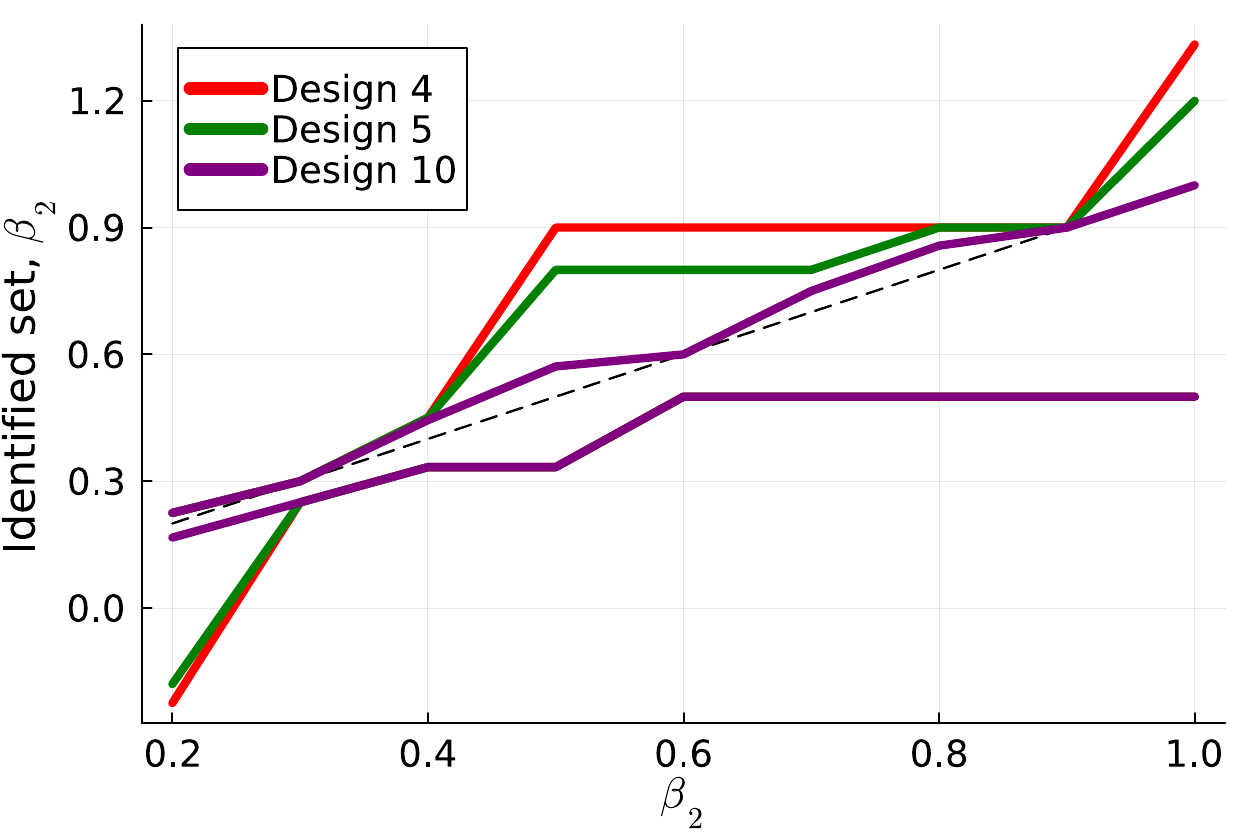} 
        \caption{DGP1: Identified set for $\beta_2$.}
        \label{fig:pd_maxscore_BMI}
    \end{subfigure}

    \vspace{1em}
    
    \centering
    \begin{subfigure}{0.5\textwidth}
        \centering
        \includegraphics[width=\textwidth]{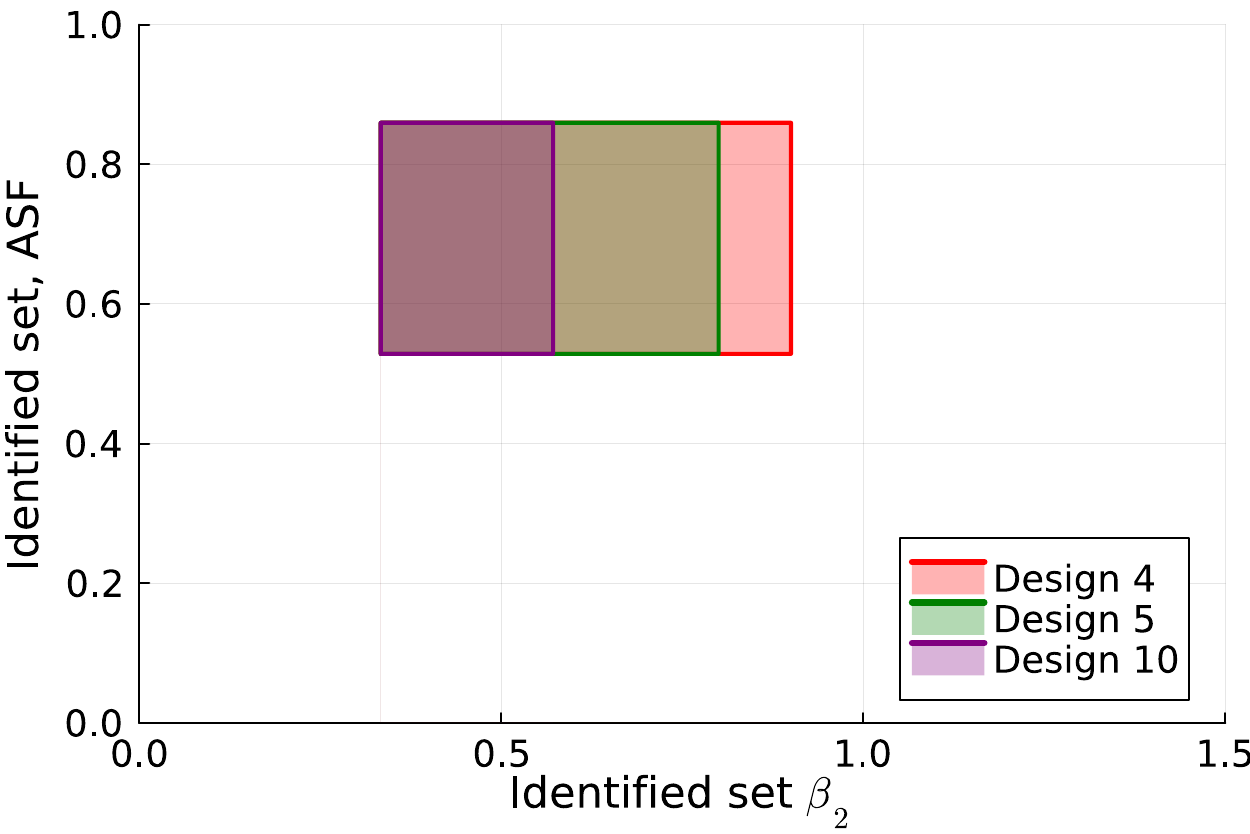} 
        \caption{DGP1: Identified set for $\beta_2$ and ASF.}
        \label{fig:panel_ASF_beta_05}
    \end{subfigure}
    \hfill
    \begin{subfigure}{0.48\textwidth}
        \centering
        \includegraphics[width=\textwidth]{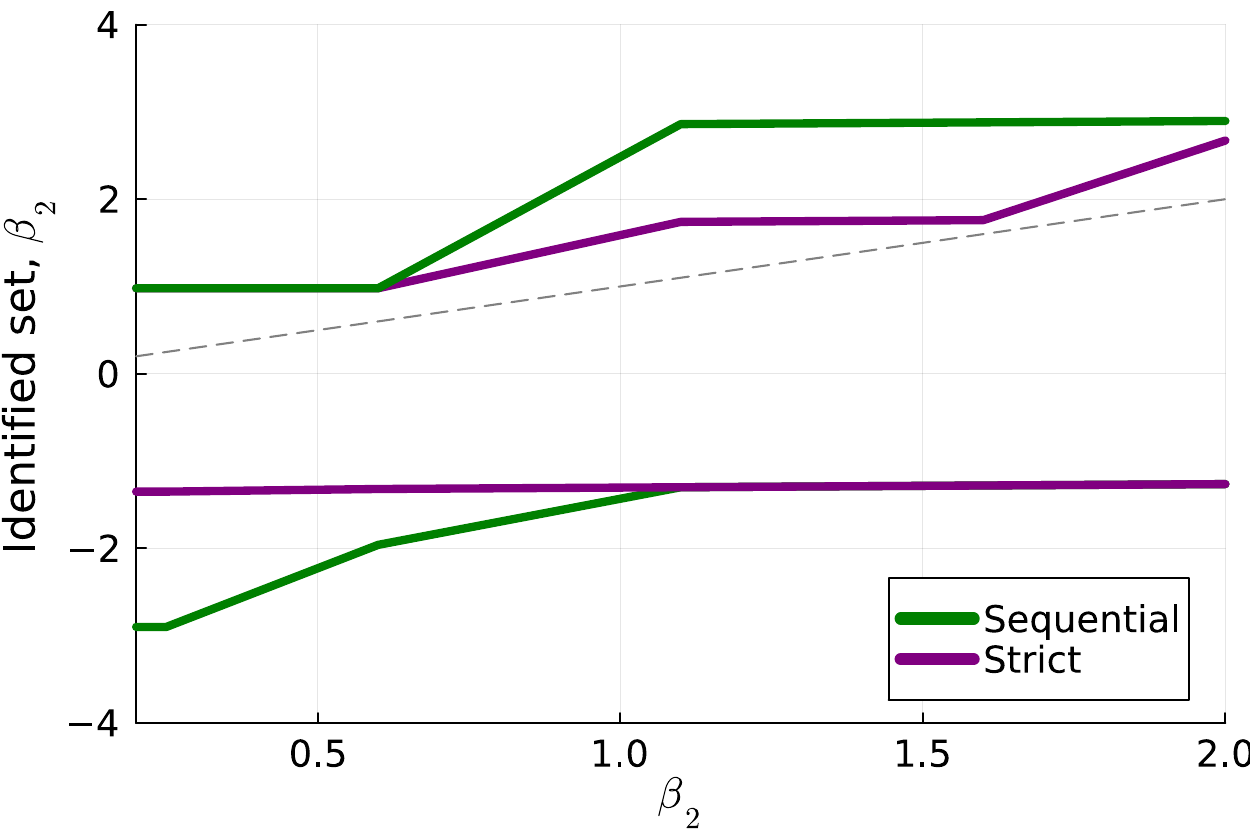} 
        \caption{DGP2: Identified set for $\beta_2$.}
         \label{fig:panel_sequential}
    \end{subfigure}
    \caption{Results of numerical experiment for DGP1 (strict exogeneity) and DGP2 (strict and sequential exogeneity).}
    \label{fig:pd_maxscore_fixed_beta0}
 \end{figure}

We conduct a numerical experiment for two different DGPs. Using DGP1, we explore the size of the identified sets of the regression coefficient and the partial effects under strict exogeneity. Using DGP2, we explore the relative widths of the identified set of the regression coefficient under strict and sequential exogeneity.

\textbf{DGP1. }
This DGP includes a time dummy $X_{1t} = t-1$ and a regressor $X_{2t}$ whose support we vary across designs as follows. In design $p$, $X_{21} = 0$ and $X_{22} \in \{-p,-p+1,\cdots,p\}$, so that higher values of $p$ imply more variation in the second regressor.
All designs have $X$ discrete uniform, 
and $(\alpha,V_1,V_2)$ independent of $X$, with $P((V_1,V_2,\alpha) = (u_1,u_2,a) \propto \exp(-u_1^2/2 - u_2^2 /2 - a^2/2)$, with support for $V_1,V_2$ as $\{-3,-2.9,\cdots,2.9,3\}$ and the support of $\alpha$ as $\{-2,-1,\cdots,2\}$.
The identified sets are computed via linear programming, see \ref{app:computation_maximum_score} for details on the implementation.

Figure \ref{fig:pd_maxscore_fixed_beta0_T} presents results for $\beta_0 = (1,-0.5)$ and designs $p \in \{4,5\}$. We 
plot the discrepancy function $T(\theta)$ against candidate values $\beta_2$.
The identified set is smaller for Design 5 (green line) as expected, because of increased variation in $X$. 
Figure \ref{fig:pd_maxscore_BMI} presents the identified set for $\beta_2$ as we vary the true value of the regression coefficient. 
As expected, additional variation in the regressors tightens the bounds on the regression coefficient, see Design 10 (purple line).\footnote{\citet{Manski1987} shows that point identification obtains under continuous variation in (one of the) regressors over the entire real line.}
Figure \ref{fig:panel_ASF_beta_05} presents results for the (joint) identified set for the regression coefficient and the ASF, at $\beta_2 = 0.5$. The ASF is for time period $t=1$, counterfactual value $x = 1$, and a subpopulation with $X_{21}=0,X_{22}=1$.  
The identified set for the ASF is the height of the box; the identified set for the regression coefficient is its width (coinciding with that in Figure \ref{fig:pd_maxscore_BMI}).
The ASF is not point identified because of the time dummy (see \cite{BotosaruMuris2024}). The identified sets for the ASF are informative across all designs. 
The information about the regression coefficient does not translate to information about counterfactual parameters.

\textbf{DGP2. }
We consider a worst case of the preceding design, with $X_{2t}$ independently and uniformly distributed on $\{0,1\}$. 
The time effect is $\beta_1 = 2$. 
The distribution of $\alpha + V_1$ is uniformly distributed on 11 equidistant points in the interval $[-3,3]$, 
and we set
$V_1 = V_2$ (so strict exogeneity holds).
We compute the identified set for $\beta_2$ under strict exogeneity as before, and under sequential exogeneity by applying \eqref{E:seqbnd} as described in Section \ref{sec:predetermined_panels}.\footnote{A genetic optimizer from the \texttt{deap} Python package handles the outer optimization over $\phi \in \Phi_b$.}

Figure \ref{fig:panel_sequential} depicts identified sets for $\beta_2$ under the assumptions of sequential and strict exogeneity as $\beta_2$ is varied. 
The identified sets are larger than the ones under DGP1. This is as expected, as DGP2 has little variation in $X_{2t}$. The results clearly show that the identified set is larger under sequential exogeneity.

\subsection{Parametric errors}
\label{sec:panel_examples}

Let $H(\cdot)$ denote an arbitrary cumulative distribution function.
In this section, we assume that the conditional distribution of the error terms is known, and given by
\[
P(V_t \leq v | \alpha = a, X = x) = H(v).
\]
Furthermore, we assume that the $V_t$ are independent across $t$ conditional on $(\alpha,X)$. 
This obtains the static binary choice model with fixed effects,
\begin{equation}
        P(Y_t=1|X;\beta,\alpha)
        =
        H(X_{t}^{\prime}\beta+\alpha), \; 
        t = 1,\cdots,T,
        \label{eq:static_BC_CE}
\end{equation}
with $\alpha \in \mathcal A = \mathbb R$, 
and the sequence of regressors
$X = (X_{1},\cdots,X_{T}) \in \mathcal X$.
If $H(\cdot) = \Lambda(\cdot)$ is the standard logistic distribution, this is the static panel logit model
\citet{raschStudiesMathematicalPsychology1960, rasch1961general}.
Setting $H(\cdot)=\Phi(\cdot)$ obtains the probit version, etc.

Conditional serial independence of $V_t$ implies that $(Y_{1},\cdots,Y_{T})$ are independent conditional on $(\alpha,X)$.
For any $y = (y_1,\cdots,y_T) \in \{0,1\}^T$, we therefore have
\begin{equation}
f_{Y|X,\alpha}(y|X;\beta,\alpha) 
= 
\prod_{t=1}^T 
    (H(X_{t}^{\prime}\beta+\alpha))^{y_t} 
    (1-H(X_{t}^{\prime}\beta+\alpha))^{1-y_t},
\label{eq:static_binary_model_probabilities}
\end{equation}
so that the model fits the framework of Section \ref{sec:nonlinearpanels}. 
Assumptions \ref{A:momentdiscrete}(i) and (ii) are trivially satisfied for any $H(\cdot)$.
It follows that equation  \eqref{eq:max_for_all_phi_panel} in Proposition \ref{cor:panel_ID_beta} characterizes the identified set for $\beta$.

We now describe results from a numerical experiment, starting from the special case of $T=2$, $X_1 = 0,\; X_2 = 1$.\footnote{A full description of the computation is in Appendix \ref{app:computation_general}, with details about this specific model in \ref{app:computation_parametric_binary_T2}.}

Figure \ref{fig:binary_choice_01_probit} plots $T(\theta)$ for the probit model with $\beta_0 = 1$, with $\alpha$ on a grid of $101$ equally spaced points from -5 to 5, and with $P(\alpha = a) \propto \exp(-a^2/2)$. 
We consider a grid for $\beta$ of 601 equally spaced points from 0.7 to 1.3.
On a single core i7-11370H at 3.30GHz, it takes 0.0036 seconds to compute $T(\theta)$ per value of $\theta$. For more details on computation times, see Section \ref{app:computation_general}.
For the probit model, $\beta_0$ is not point-identified, and we find $\Theta_I = [0.968,1.065]$.

Figure \ref{fig:binary_choice_01_all_distributions} presents $\Theta_I$ for five different choices of $H(\cdot)$. ``Logistic'' corresponds to the standard logit model, for which we recover the well-known result that $\beta_0$ is point-identified. ``Normal'' refers to the result in Figure \ref{fig:binary_choice_01_probit}. ``Uniform'' sets $H(\cdot)$ equal to the cumulative distribution function of the continuous uniform distribution on $[-2,2]$. ``Cauchy'' sets it equal to the standard Cauchy distribution, and ``Truncated normal'' sets it equal to a standard normal distribution truncated to $[-2,3]$. 

The width of these identified sets is not directly comparable across distributions, due to their different scales.
Nonetheless, the identified set is rather small across all distributions, especially given the limited variation in $X$. 
This is in line with previous findings on the size of identified sets in parametric nonlinear panels, see \citet{honoreBoundsParametersPanel2006,ChernValHahnNewey2013}.

\begin{figure}
    \centering
    \begin{subfigure}{0.5\textwidth}
        \centering
        \includegraphics[width=\textwidth]{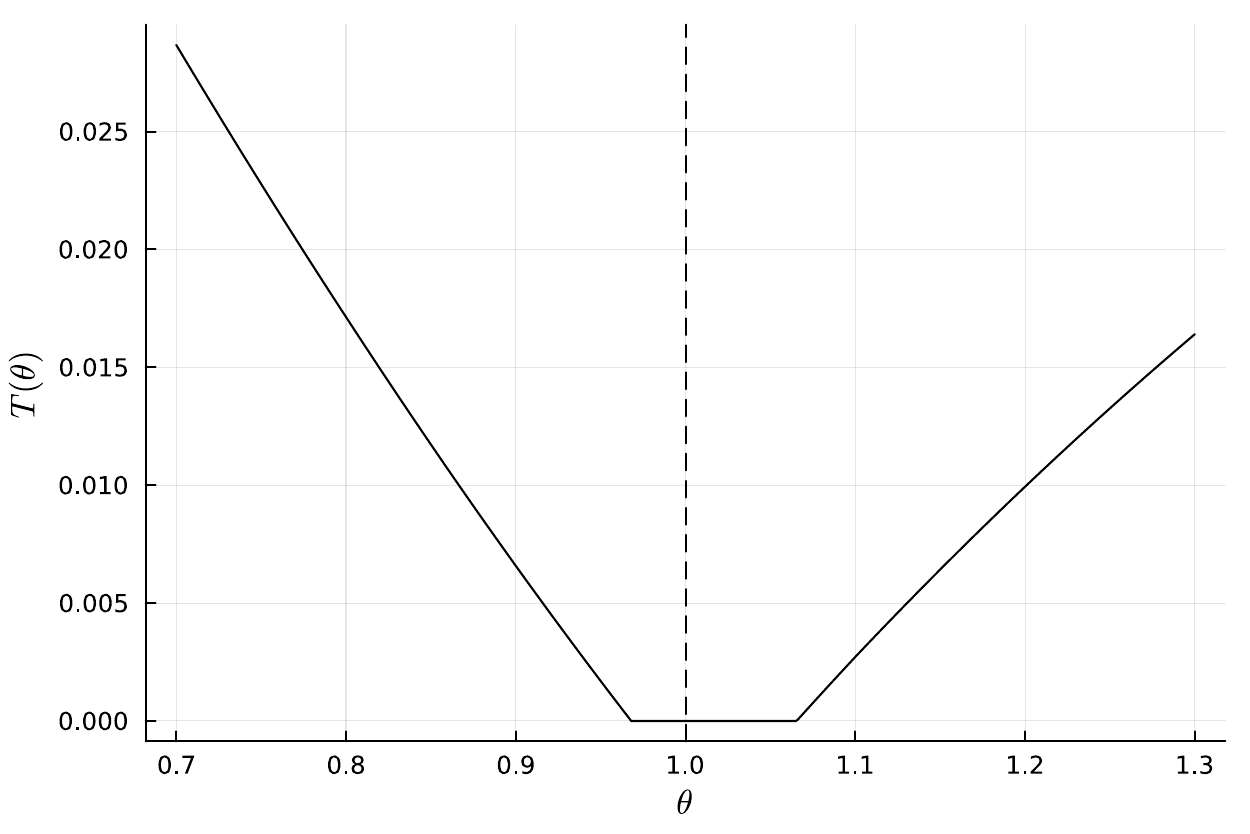} 
        \caption{$T(\theta)$ for the probit model.}
        \label{fig:binary_choice_01_probit}
    \end{subfigure}\hfill
    \begin{subfigure}{0.5\textwidth}
        \centering
        \includegraphics[width=\textwidth]{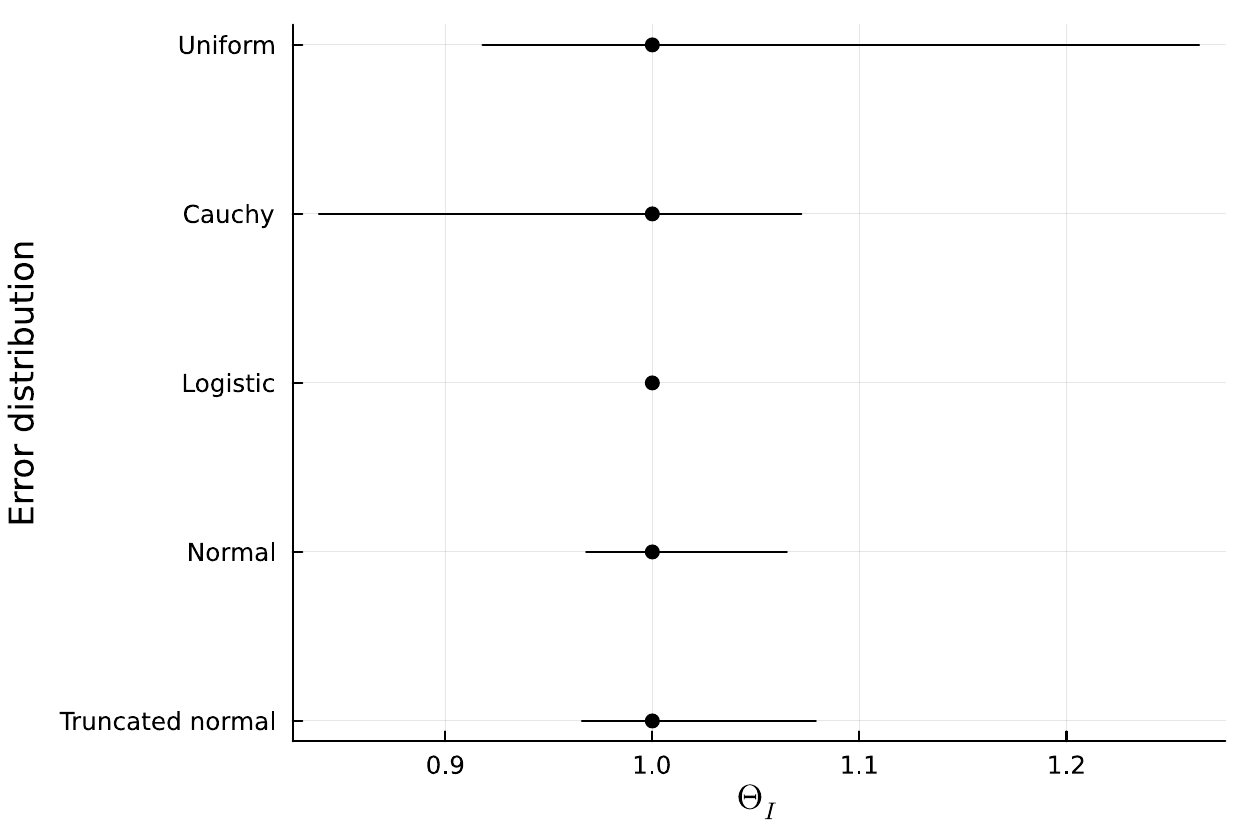} 
        \caption{$\Theta_I$ for various error distributions.}
        \label{fig:binary_choice_01_all_distributions}
    \end{subfigure}
    \caption{Identified sets for the static binary choice model with $X = (0,1)$, $\beta_0 = 1$.}
    \label{fig:binary_choice_01}
\end{figure}

Our second result is for the regression coefficient and average treatment effects
for the probit model with $T \in \{2,3\}$ and binary regressors $\mathcal X = \{0,1\}^T$.
Similarly to the numerical experiments in \citet{ChernValHahnNewey2013}, \S 8,
we set $P(X = x) = 2^{-T}$ for all $x$,
and the support and distribution of the fixed effects, independent of $X$.
Finally, $\beta_0 \in \left\{0, 0.1, \cdots, 1.9, 2\right\}$ and $\beta \in \{\beta_0 - 0.3, \beta_0 - 0.299, \cdots, \beta_0 + 0.3\}$.
Appendix \ref{app:computation_parametric_binary_T2} describes how to modify the computation from the baseline case above to this more general setup.

Figure \ref{fig:CFHN_beta} presents the identified sets as a function of $\beta_0$, for $T = 2$ and $T=3$.
\footnote{These results recover those in the top left panel in Figures 2 and 3 in \citet{ChernValHahnNewey2013}.}
Unless $\beta_0 = 0$, the regression coefficient is partially identified. The identified set is small for $T=2$, much smaller for $T=3$, and indistinguishable from the 45-degree line when $T=4$ (not reported).
The figures are based on 12621 values of $(\beta, \beta_0)$. The computation time per value is 0.013 seconds for $T=2$, and 0.066 seconds for $T=3$ on a single core i7-11370H at 3.30GHz.

\begin{figure}
    \centering
    \begin{subfigure}{0.5\textwidth}
        \centering
        \includegraphics[width=\textwidth]{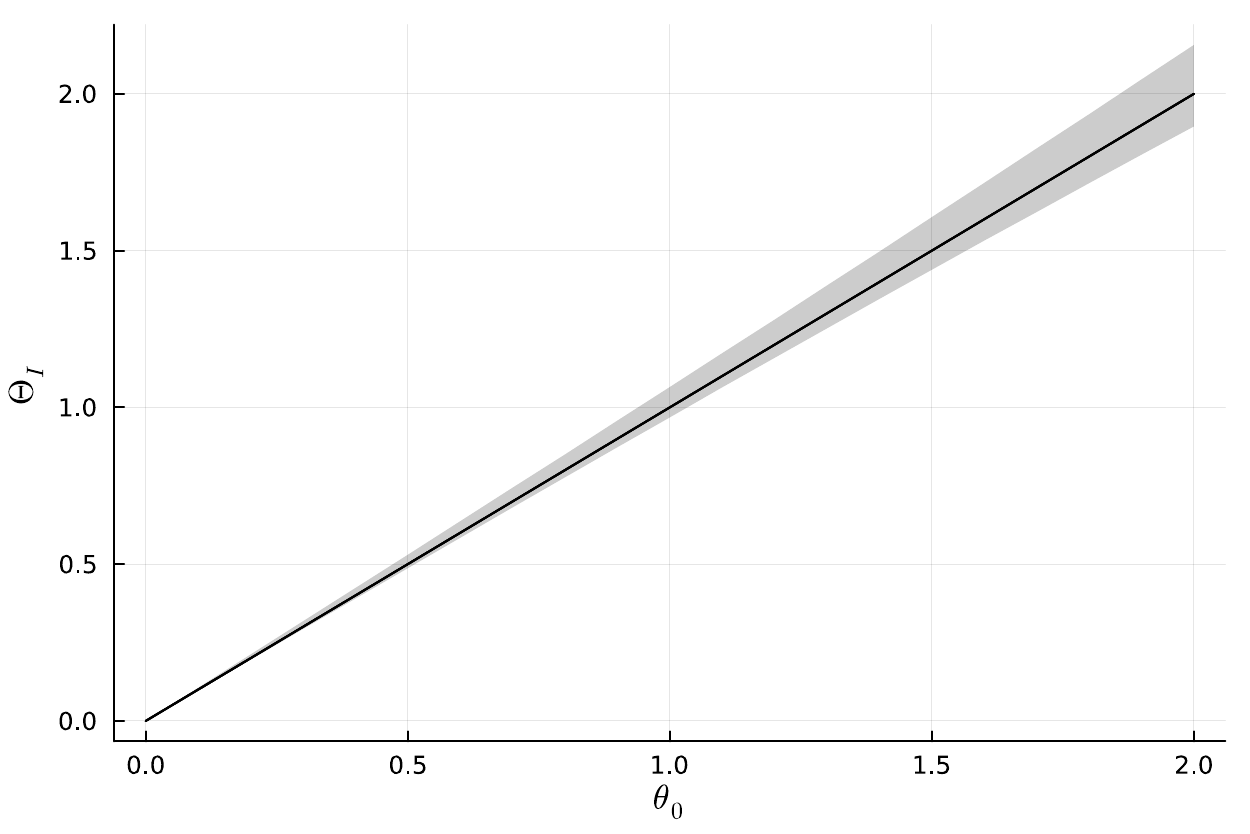} 
        \caption{$T=2$}
        \label{fig:CFHN_beta_T2}
    \end{subfigure}\hfill
    \begin{subfigure}{0.5\textwidth}
        \centering
        \includegraphics[width=\textwidth]{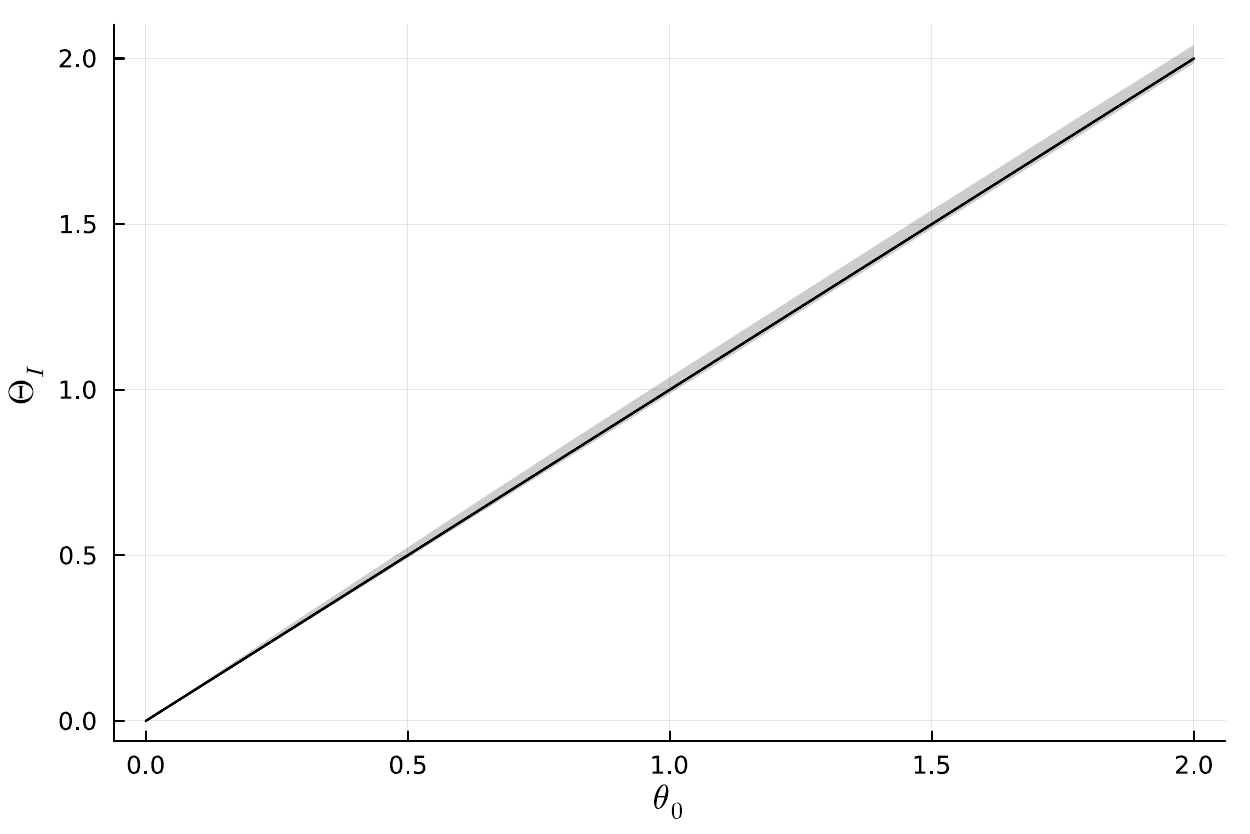} 
        \caption{$T=3$}
        \label{fig:CFHN_beta_T3}
    \end{subfigure}
    \caption{Identified sets for regression coefficient in static binary choice probit.}
    \label{fig:CFHN_beta}
\end{figure}

Next, we compute the identified set for the average treatment effect of moving a randomly selected individual's $x_t$ from $0$ to $1$, i.e.
\[
\text{ATE}(0, 1;\beta) 
    = 
    E[H(\alpha + \beta) - H(\alpha)].
\]
Proposition \ref{cor:counterfactual_ID} applies, and \eqref{E:ps133} gives an expression for the identified set.
Figure \ref{fig:CFHN_PE} presents the identified sets as a function of $\beta_0$, for $T = 2$ and $T=3$.
\footnote{These results recover those in the bottom right panel in Figures 2 and 3 in \citet{ChernValHahnNewey2013}.}
The size of the identified sets is greatly reduced when moving from $T=2$ to $T=3$.

\begin{figure}
    \centering
    \begin{subfigure}{0.5\textwidth}
        \centering
        \includegraphics[width=\textwidth]{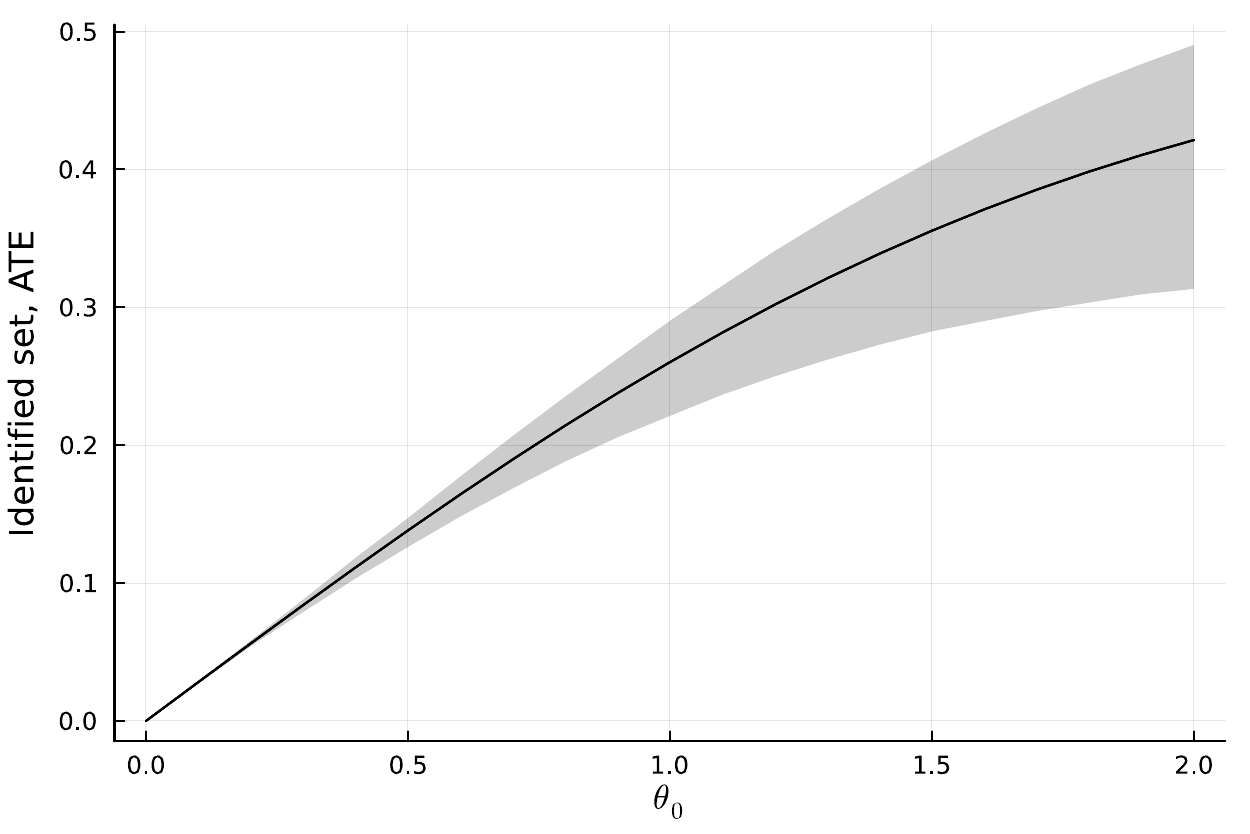} 
        \caption{$T=2$}
        \label{fig:CFHN_PE_T2}
    \end{subfigure}\hfill
    \begin{subfigure}{0.5\textwidth}
        \centering
        \includegraphics[width=\textwidth]{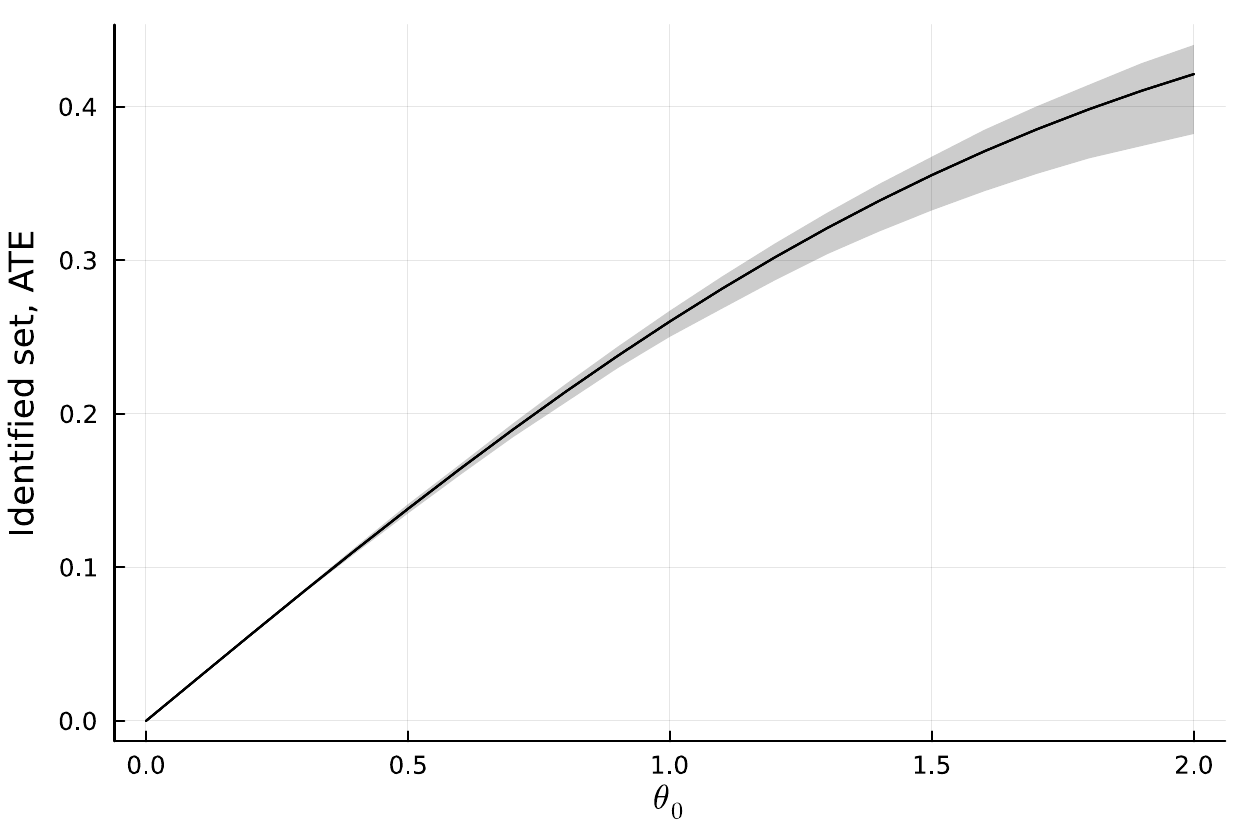} 
        \caption{$T=3$}
        \label{fig:CFHN_PE_T3}
    \end{subfigure}
    \caption{Identified sets for ATE in static binary choice probit.}
    \label{fig:CFHN_PE}
\end{figure}

\begin{appendix}

\section{Additional remarks}\label{sec:additional_remark}

Throughout the Appendices, we let $\Theta^0_\mathrm{I}$ denote the set: 
\begin{align}
\label{convtheta0}
    \Theta^0_{\mathrm{I}} &\equiv \{\theta \in \Theta: \exists \gamma\in\Gamma \text{ such that } \mu^*_{Z} = \mu_{Z,(\theta,\gamma)} \text{ a.e. } Z\},
\end{align}
which is closely related to the definition of the identified set for $\theta$ in \citet{honoreBoundsParametersPanel2006}. This set can be equivalently expressed as:
\begin{align}
\label{Theta0I}
    \Theta^0_\mathrm{I} = \{\theta \in \Theta: \mu_Z^* \in \mathcal{M}_{\theta}\}.
\end{align}
\label{IDnormclosure}

\begin{remark}[The identified set and the closure of the set of model probabilities]
\label{IDsetandclosure}
The identified set \(\Theta_{\mathrm{I}}\) in \eqref{E:essentialIDset} is defined using the closure \(\overline{\mathcal{M}}_\theta\). We consider \(\Theta_{\mathrm{I}}\) instead of \(\Thetai\) in \eqref{convtheta0} or \eqref{Theta0I} for two reasons. First, we avoid issues related to infeasible inference. In Remark \ref{rem:essentiallyindistinguishable}, we show that elements of \(\Theta_{\mathrm{I}}\) are \textit{indistinguishable} from those of \(\Thetai\), making it impossible to construct a hypothesis test that can differentiate between \(\Theta_{\mathrm{I}}\) and \(\Theta^0_{\mathrm{I}}\) with power exceeding size.

Second, the need to define the identified set in terms of \(\overline{\mathcal{M}}_\theta\) also arises in nonlinear panel models. For instance, allowing the support of the fixed effects to include $\{-\infty,\infty\}$ is crucial, as discussed in \citet{Chamberlain2010}. Without finiteness assumptions on the support, the model probability lies in \(\overline{\mathcal{M}}_\theta\) but not in \(\mathcal{M}_\theta\). Thus, the identified set must include values of \(\theta\) consistent with \(\mu^*_Z \in \overline{\mathcal{M}}_\theta\). Under finiteness assumptions, such as in \citet{honoreBoundsParametersPanel2006}, there is no distinction between \(\mathcal{M}_\theta\) and \(\overline{\mathcal{M}}_\theta\), implying \(\Theta^0_{\mathrm{I}} = \Theta_{\mathrm{I}}\).

\citet{schennachEntropicLatentVariable2014} also considers a refinement of the conventional notion of the identified set within her framework by defining it in terms of the closure of the set of all feasible values for the moment function. This allows for the support of the unobserved heterogeneity in her framework to be unbounded, see also \citet{li2019identification}.

\end{remark}

\begin{remark}[Norm choice for defining the closure]
\label{choiceofclosure}
    The choice of norm defining the closure is important. In the literature on impossible inference, the total variation (TV) and the L\'evy-Prokhorov (LPv) norms play important roles, see e.g., \citet{BERTANHA2020247}. Naturally, the choice of norm impacts the resulting identified set. For example, our identified set $\Thetaw$ involves the TV norm, and includes values of $\theta$ such that $\mu^*_Z \in \overline{\mathcal{M}}_\theta$. Our main result in Theorem \ref{P:main} establishes that $\Theta_\mathrm{I}$ includes values of $\theta$ that cannot be distinguished from those in $\Thetai$ with any bounded function $\phi:\mathcal{Z}\to [0,1]$. 

    Instead, consider the closure of $\mathcal{M}_\theta$ with respect to the topology induced by the LPv norm, denoted by $\overline{\mathcal{M}}^\mathrm{LP}_\theta$, and define $\Theta^\mathrm{LP}_\mathrm{I} \equiv \{\theta \in \Theta: \mu_Z^* \in \overline{\mathcal{M}}^\mathrm{LP}_\theta\}$. It is possible to show that $\Theta^\mathrm{LP}_\mathrm{I}$ includes values of $\theta$ that cannot be distinguished from those in $\Thetai$ with any \textit{continuous} and bounded function $\phi_c:\mathcal{Z}\to [0,1]$. 
    Since $\overline{\mathcal{M}}_\theta \subseteq \overline{\mathcal{M}}^\mathrm{LP}_\theta$, it follows that $\Theta_\mathrm{I} \subseteq \Theta^\mathrm{LP}_\mathrm{I}$. Under additional assumptions, these sets can be made equal. For example, when $\mathcal{Z}$ is discrete, $\Theta_\mathrm{I} = \Theta^\mathrm{LP}_\mathrm{I}$ indicating robustness to the choice of norm. See \citet{BERTANHA2020247} for alternative assumptions on $\mu_{Z,(\theta,\gamma)}$ that lead to this equality result.
\end{remark}

\begin{remark}[Necessity of Assumption \ref{A:one}]
\label{A2.2_necessary} 
    Assumption \ref{A:one} is essential for Theorem \ref{P:main}. Consider, for example, the case where \(\mu^*_Z\) is the Lebesgue measure on \([0,1]\) and, for some \(\theta \in \Theta\), \(\mathcal{M}_\theta = \mathrm{co}(\{\delta_z: z \in [0,1]\})\), which does not satisfy Assumption \ref{A:one}. For all bounded Borel maps \(\phi: [0,1] \to [0,1]\), we have \(\mathbb{E}_{\mu^*_Z}[\phi] \leq \sup_{z \in [0,1]} \phi(z) = \sup_{\mu \in \mathcal{M}_\theta} \mathbb{E}_{\mu}[\phi]\), implying \(\theta \in \Theta^0_\mathrm{I}\). However, for every \(\mu \in \sum_{i=1}^k \alpha_i \delta_{z_i} \in \mathcal{M}_\theta\), \(d_{\mathrm{TV}}(\mu^*_Z, \mu) = 1\) since \(\mu^*_Z(\{z_1, \ldots, z_k\}) = 0\) and \(\mu(\{z_1, \ldots, z_k\}) = 1\). Therefore, \(\mu^*_Z \notin \overline{\mathcal{M}}_\theta\) and \(\theta \notin \Theta_\mathrm{I}\).
\end{remark}

\begin{remark}[The discrepancy function]
\label{discrepacycomparisons}
    The discrepancy function $T(\theta)$ in \eqref{def:T_theta} is new to the literature. Computing the zeros of this function involves a search over two infinite-dimensional spaces: the space of features $\Phi_b(\mathcal Z)$ \textit{and} the space of model probabilities $\mathcal M_\theta$:
    \begin{align}
    T(\theta) 
    \equiv 
    \sup_{\phi \in \Phi_b(\mathcal Z)} 
        \left(
            \EE{\mu_Z^*}{\phi} - \sup_{\mu \in \mathcal{M}_\theta}\EE{\mu}{\phi}
        \right).
\end{align}
    
    The function $T(\theta)$ can be seen as a generalization of the Maximum Mean Discrepancy (MMD) measure defined in \eqref{MMD} below. The MMD is an integral probability metric that has been used in the machine learning literature to discriminate between two observed probability measures, see, e.g., \citet{Gretton2007,Gretton2012}. To see this, suppose that $\mathcal M_{\theta} = \{\overline{\mu}\}$ is a singleton (be it because the critic knows or has identified $\overline{\mu}$, the defender plays a fixed strategy $\overline{\mu}$, etc). Then the critic selects $\phi$ that solves: 
\begin{align}
\label{MMD}
    \textrm{MMD}(\mu^*_Z,\overline{\mu}) & \equiv \sup_{\phi\in\Phi_b(\mathcal{Z})}\left(\EE{\mu_Z^*}{\phi} - \EE{\overline{\mu}}{\phi}\right) \text{ for known } \mu^*_Z \text{ and } \overline{\mu}.
\end{align}

In our framework $\mathcal{M}_\theta$ is not a singleton. Consequently, \(T(\theta)\) can be seen as a generalization of the MMD to situations where $\overline{\mu}$ in \eqref{MMD} is not known. 

In \cite{schennachEntropicLatentVariable2014} and \cite{li2019identification}, see also \cite{guDualApproachWassersteinRobust2023} -- and more generally, approaches based on support-functions, e.g., \citet{beresteanuSharpIdentificationRegions2011} and \citet{ChesherRosenZhang}, the criterion function that characterizes the identified set depends on a finite dimensional moment function $g\in\mathbb{R}^{d_g},\; d_g < \infty$, and it is given by:
\begin{align}
\label{T_SL}
    T_{\mathrm{SF}}(\theta) &\equiv \sup_{\eta\in\mathbb{R}^{d_g}:\vert\vert\eta\vert\vert=1}\mathbb{E}_{\mu^*_Z}[\inf_{u\in\mathcal{U}}\eta^\prime g\left(u,Z,\theta\right)],
\end{align}
where $\mathcal U$ is the support of the latent random variables.

Supposing that $\mu$ is absolutely continuous with respect to $\mu^*_Z$, our criterion function $T(\theta)$ can be written as
\begin{equation}
\label{T_AC}
    T(\theta) = \sup_{\phi\in\Phi_b(\mathcal Z): \phi\in [0,1]}\inf_{\mu\in\mathcal M_\theta}\mathbb{E}_{\mu^*_Z}\left[\phi\left(\frac{d\mu}{d\mu^*_Z}-1\right)\right].
\end{equation}

A comparison of \eqref{T_SL} and \eqref{T_AC} reveals several differences: (i) the supremum in \eqref{T_SL} is over a finite-dimensional vector, whereas in \eqref{T_AC}, it is over a space of functions; (ii) the infimum in \eqref{T_SL} is taken inside the expectation, while in \eqref{T_AC}, it is taken outside; and (iii) the infimum in \eqref{T_SL} is over the support of unobserved random variables, whereas in \eqref{T_AC}, it is over a set of probability measures. Additionally, note that $T_{\mathrm{SF}}(\theta) = 0$ at $\theta = \theta_0$ (the true parameter value), whereas $T(\theta) = 0$ when $\mu^*_Z = \mu$ (the observed measure matches the model probability measure). These distinctions reflect fundamental differences in the approaches: an approach based on support functions versus one based on set membership for observed probability measures.

Essentially, the difference in approaches is driven by the space in which they operate. Existing approaches operate in the space of random variables defined by (a finite number of) moment conditions, using hyperplanes to test specific linear combinations of these moments via support functions. In contrast, our framework operates in the infinite-dimensional space of probability measures over observables, where hyperplanes separate the true probability measure $\mu^*_Z$ from the set $\overline{\mathcal{M}}_\theta$. The hyperplanes in our framework correspond to bounded functions in $\Phi_b(\mathcal{Z})$.

All approaches are grounded in convex analysis and leverage the fundamental principle of separating hyperplanes. For all approaches, convexity of the feasible set - whether in the moment space or the probability measure space - is essential for ensuring the applicability of separation results.

Finally, note that the dual formulation of optimal transport resembles the MMD problem, despite arising from different principles. In the dual formulation, the separating hyperplane can be thought of as operating in the space of potential functions. These potential functions encode the cost structure and enforce the feasibility of the transport plan, often subject to smoothness assumptions and additional constraints dictated by the cost function. Moreover, as in the MMD problem, the dual formulation of optimal transport does not involve minimization over the model probability measure.

\end{remark}

\section{Proofs}
\label{sec:proofs}

Our main result, Theorem \ref{P:main}, is an implication of the more general result in Proposition \ref{P:main_appendix} stated below. We first state and prove that more general result. The proofs of other results in the main text follow. Some supporting lemmas, and the proofs for the results in Section \ref{sec:application_to_panels}, are in the supplementary material.

The result here uses $\Thetai$ defined in \eqref{Theta0I}, $\Theta_\mathrm{I}$ defined in \eqref{E:essentialIDset}, and $\Thetam$ defined in \eqref{E:MI}. 
We denote by $\mu^n$ the product measure on $\mathcal{D}^n$ with the corresponding product topology. This is the joint distribution of $n$ points drawn independently from distribution $\mu$.

\begin{proposition} 
\label{P:main_appendix}
For $n \in \N$ let $\M_\theta^{(n)} = \{\mu^n: \mu \in \M_\theta\}$. Consider the following statements:

\begin{enumerate}
    \item[I.] $\theta \in \Theta_{\mathrm{I}}$.
    \item[II.] For all $n \in \N$ and all $f: \mathfrak{B}(\mathcal{Z})^n \ra \R$ continuous with respect to the total variation norm, $f((\mu_Z^*)^n) \in \overline{f(\M_\theta^{(n)})}$.
    \setcounter{enumi}{2}
    \item[III.] $\theta \in \Thetam$.
    \item[IV.] For all compactly supported Borel $\phi: \mathcal{Z} \ra [0,1]$, 
    \begin{align} 
    \label{E:lowerb}
        \EE{\mu^*_Z}{\phi} \le \sup_{\mu \in \M_\theta} \EE{\mu}{\phi}.
    \end{align}
\end{enumerate}

 Then the following are true:

\noindent(a) Statements I and II are equivalent.

\noindent(b) Statements III and IV are equivalent, and implied by statements I and II.

\noindent(c) Let Assumptions \ref{A:zero} and \ref{A:one} hold, and suppose that $\overline{\mathcal{M}}_\theta$ is convex. Then, statements I, II, III, and IV are equivalent. This equivalence holds for every $\mu^*_Z \in \mathcal{P}(\mathcal{Z})$ only if $\overline{\mathcal{M}}_\theta$ is convex.
\end{proposition}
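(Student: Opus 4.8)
The plan is to establish the four-way equivalence by chaining the easy implications $\mathrm{I}\Leftrightarrow\mathrm{II}$, $\mathrm{III}\Leftrightarrow\mathrm{IV}$, $\mathrm{I},\mathrm{II}\Rightarrow\mathrm{III},\mathrm{IV}$, and then closing the loop under the hypotheses of (c) with a single Hahn--Banach separation step $\mathrm{III}\Rightarrow\mathrm{I}$. For (a): $\mathrm{I}\Rightarrow\mathrm{II}$ because if $\mu_k\in\mathcal{M}_\theta$ with $d_{\mathrm{TV}}(\mu_k,\mu^*_Z)\to 0$, then $d_{\mathrm{TV}}(\mu_k^n,(\mu^*_Z)^n)\to 0$ as well (a standard tensorization/telescoping bound on total variation), so TV-continuity of $f$ gives $f(\mu_k^n)\to f((\mu^*_Z)^n)$ with $f(\mu_k^n)\in f(\mathcal{M}_\theta^{(n)})$; $\mathrm{II}\Rightarrow\mathrm{I}$ by applying $\mathrm{II}$ with $n=1$ and $f(\mu)=d_{\mathrm{TV}}(\mu,\mu^*_Z)$, which is $1$-Lipschitz hence TV-continuous, so $0=f(\mu^*_Z)\in\overline{f(\mathcal{M}_\theta)}$ forces $\inf_{\mu\in\mathcal{M}_\theta}d_{\mathrm{TV}}(\mu,\mu^*_Z)=0$, i.e.\ $\mu^*_Z\in\overline{\mathcal{M}}_\theta$.

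For (b): since $0\in\Phi_b(\mathcal{Z})$ we have $T(\theta)\ge 0$, so $\mathrm{III}$ is equivalent to the assertion that $\EE{\mu^*_Z}{\phi}\le\sup_{\mu\in\mathcal{M}_\theta}\EE{\mu}{\phi}$ for every $\phi\in\Phi_b(\mathcal{Z})$, which trivially implies $\mathrm{IV}$. Conversely, given $\mathrm{IV}$ and an arbitrary $\phi\in\Phi_b(\mathcal{Z})$, pick compact sets $K_j\uparrow$ with $\mu^*_Z(\mathcal{Z}\setminus K_j)\to 0$ (tightness of $\mu^*_Z$, Assumption \ref{A:zero}); each $\phi\mathbf{1}_{K_j}$ is admissible for $\mathrm{IV}$ and $\le\phi$, so $\EE{\mu^*_Z}{\phi\mathbf{1}_{K_j}}\le\sup_{\mu\in\mathcal{M}_\theta}\EE{\mu}{\phi\mathbf{1}_{K_j}}\le\sup_{\mu\in\mathcal{M}_\theta}\EE{\mu}{\phi}$, and the left side increases to $\EE{\mu^*_Z}{\phi}$ by monotone convergence, yielding $\mathrm{III}$. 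Finally $\mathrm{I}\Rightarrow\mathrm{IV}$ is immediate because bounded $\phi$ act continuously on measures in TV; combined with (a) this gives $\mathrm{I},\mathrm{II}\Rightarrow\mathrm{III},\mathrm{IV}$.

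For (c) it remains to prove $\mathrm{III}\Rightarrow\mathrm{I}$, which I would argue by contraposition: assuming $\mu^*_Z\notin\overline{\mathcal{M}}_\theta$, produce $\phi\in\Phi_b(\mathcal{Z})$ with $\EE{\mu^*_Z}{\phi}>\sup_{\mu\in\mathcal{M}_\theta}\EE{\mu}{\phi}$ (so $T(\theta)>0$). Take the Lebesgue decomposition $\mu^*_Z=\mu_{ac}+\mu_{s}$ relative to the $\sigma$-finite $\lambda_\theta$ of Assumption \ref{A:one}. If $\mu_{s}\ne 0$, there is a $\lambda_\theta$-null Borel set $N$ with $\mu^*_Z(N)>0$ while $\mu(N)=0$ for every $\mu\in\mathcal{M}_\theta$ (Assumption \ref{A:one}), so $\phi=\mathbf{1}_N$ works. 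If $\mu^*_Z\ll\lambda_\theta$, pass to densities: $\{\mu^*_Z\}\cup\overline{\mathcal{M}}_\theta$ corresponds to $\{f^*\}\cup\mathcal{D}$ in $L^1(\lambda_\theta)$, where $\mathcal{D}=\{\mathrm{d}\mu/\mathrm{d}\lambda_\theta:\mu\in\overline{\mathcal{M}}_\theta\}$ is convex (the linear image of the convex set $\overline{\mathcal{M}}_\theta$) and norm-closed (TV-closedness of $\overline{\mathcal{M}}_\theta$ together with the fact that TV-convergence of $\lambda_\theta$-dominated measures coincides with $L^1(\lambda_\theta)$-convergence of densities), and $f^*\notin\mathcal{D}$; Hahn--Banach separation in $L^1(\lambda_\theta)$ — whose dual is $L^\infty(\lambda_\theta)$ precisely because $\lambda_\theta$ is $\sigma$-finite — gives $g\in L^\infty(\lambda_\theta)$ and $c\in\R$ with $\EE{\mu^*_Z}{g}>c\ge\EE{\mu}{g}$ for all $\mu\in\mathcal{M}_\theta$, and an affine rescaling of $g$ into $[0,1]$ delivers the desired $\phi$. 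For the necessity clause, if $\overline{\mathcal{M}}_\theta$ is not convex pick $\mu_1,\mu_2\in\overline{\mathcal{M}}_\theta$ and $\lambda\in(0,1)$ with $\mu^*_Z:=\lambda\mu_1+(1-\lambda)\mu_2\notin\overline{\mathcal{M}}_\theta$; then $\EE{\mu^*_Z}{\phi}\le\max\{\EE{\mu_1}{\phi},\EE{\mu_2}{\phi}\}\le\sup_{\mu\in\mathcal{M}_\theta}\EE{\mu}{\phi}$ for all $\phi\in\Phi_b(\mathcal{Z})$ (using TV-density of $\mathcal{M}_\theta$ in $\overline{\mathcal{M}}_\theta$), so $\mathrm{III}$ holds while $\mathrm{I}$ fails, breaking the equivalence for this $\mu^*_Z$.

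I expect the separation step to be the main obstacle. Its point is that Assumption \ref{A:one} lets one avoid the space of signed measures under the TV norm, whose dual is too large for separating functionals to be representable as integration against bounded functions, and instead separate inside $L^1(\lambda_\theta)$, where the dual is $L^\infty(\lambda_\theta)$; the Lebesgue-decomposition casework is exactly what allows Assumption \ref{A:one} to be exploited even though $\mu^*_Z$ itself need not be dominated by $\lambda_\theta$, and Remark \ref{A2.2_necessary} confirms that the reduction of the identification test to functions in $\Phi_b(\mathcal{Z})$ genuinely fails without a common $\sigma$-finite dominating measure.
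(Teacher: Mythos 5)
Your proposal is correct and follows essentially the same route as the paper: the same tensorized TV bound for I$\Rightarrow$II, the same distance-to-$\mathcal{M}_\theta$ functional for II$\Rightarrow$I, the same truncation-to-compacts argument for III$\Leftrightarrow$IV, and the same key step for (c) — Hahn--Banach separation carried out in $L^1(\lambda_\theta)$ (whose dual is $L^\infty(\lambda_\theta)$ by $\sigma$-finiteness) rather than in the space of signed measures. The only cosmetic differences are that the paper dominates everything by $\mu^*_Z+\lambda_\theta$ and extracts $\mu^*_Z\ll\lambda_\theta$ from III via indicator test functions where you use a Lebesgue decomposition with a separate singular case, and your necessity argument exhibits an explicit non-convex-combination witness where the paper argues abstractly that the equivalence forces $\overline{\mathcal{M}}_\theta=\overline{\mathrm{co}}(\mathcal{M}_\theta)$; both variants are sound.
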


\begin{proof}
    See page \pageref{proofProposition3}.
\end{proof}

\begin{remark}
\label{rem:essentiallyindistinguishable}
The equivalence of I.\ and II.\ in Proposition \ref{P:main_appendix} implies that points \(\theta \in \Thetaw\) are statistically indistinguishable from those in \(\Thetai\). Specifically, if \(\theta \in \Thetaw\), any value attained by a function \(f((\mu^*_Z)^n)\), where \((\mu^*_Z)^n\) is the distribution of \(n\) i.i.d. observations from \(\mu^*_Z\) and \(f\) is continuous in the TV norm, must be a limit point of the image of \(\mathcal{M}_\theta^{(n)}\) under the same function. This makes it impossible to test the hypotheses:
\begin{align*}
    &H_0: \theta \in \Thetai, \\
    &H_1: \theta \notin \Thetai,
\end{align*} 
with nontrivial power.

To illustrate, suppose such a test \(\phi: \mathcal{Z} \to [0,1]\) exists with uniform size \(\alpha\) over \(\mathcal{M}_\theta\), the set of distributions of \(Z\) consistent with \(\theta\):
\begin{align*}
    \sup_{\mu^n \in (\mathcal{M}_\theta)^n} \int \phi(z_1, \ldots, z_n) \, d\mu^n \le \alpha.
\end{align*}
Since the map \(\mu^n \mapsto \int \phi(z_1, \ldots, z_n) \, d\mu^n\) is continuous in the total variation norm, Proposition \ref{P:main_appendix} implies that the test's power to reject \(H_0\) when \(\mu^*_Z\) is the distribution of observables for \(\theta \in \Theta^0_\mathrm{I}\) cannot exceed \(\alpha\):
\begin{align*}
    \int \phi \, d(\mu^*_Z)^n \le \alpha.
\end{align*}

\end{remark}

\begin{proof}[Proof of Proposition \ref{P:main_appendix}]
\label{proofProposition3}
The proof of statements (a) and (b) is routine:
    \begin{enumerate}
    \item[I$\Rightarrow$II] If $\theta \in \Thetaw$, then for all $\ve > 0$ there is some $\mu \in \mathcal{M}_\theta$ which has $d_{\mathrm{TV}}(\mu^*_Z , \mu) < \ve$. For each such $\ve$ and $\mu$, $\mu^*_Z$ and $\mu$ have density with respect to the measure $\lambda \equiv \mu^*_Z + \mu$. Hence, by the arguments below, one has $d_{\mathrm{TV}}((\mu^*_Z)^n, \mu) < n \ve$ for $\ve $ arbitrary, and thus $(\mu^*_Z)^n \in \overline{\M_\theta^{(n)}}$. By a straightforward topological argument, $f(\overline{\M_\theta^{(n)}}) \subseteq \overline{f(\M_\theta^{(n)})}$, so II is implied by I.
    \item[II$\Rightarrow$I] It suffices to consider the case $n = 1$ and the function $f: \mu' \mapsto \inf_{\mu \in \M_\theta} \norm{\mu' - \mu}_{\mathrm{TV}}$, which is continuous by the triangle inequality.
    \item[III$\Rightarrow$IV] This is straightforward from the definition of $\Thetam$, and by invariance of the upper bound in \eqref{E:lowerb} with respect to translations of $\phi$ and rescalings by positive constants.
    \item[IV$\Rightarrow$III] We prove the contrapositive. If $\theta \not\in \Thetam$, there is some bounded Borel $\phi$ for which $\EE{\mu^*_Z}{\phi} > \sup_{\mu \in \mathcal{M}_\theta} \EE{\mu}{\phi}$. By taking a linear transformation, it may be assumed without loss of generality that $\phi$ is positive and bounded above by $1$. By the tightness of Polish Borel measures (\citet{VW1996}, Lemma 1.3.2), there is some compact $K \subseteq \mathcal{Z}$ for which 
     $\EE{\mu^*_Z}{\phi \cdot \one_K} >  \sup_{\mu \in \mathcal{M}_\theta} \EE{\mu}{\phi} \ge \sup_{\mu \in \mathcal{M}_\theta} \EE{\mu}{\phi \cdot \one_{K}}$, so that IV.\ cannot be true.
     \item[II$\Rightarrow$III] For any bounded and Borel $\phi$, the map $\mu \mapsto \EE{\mu}{\phi}$ is continuous with respect to the total variation norm. Thus, \eqref{E:lowerb} follows from II.
    \end{enumerate}

Now, make Assumptions \ref{A:zero} and \ref{A:one}, so that the measures $\mu \in \M_\theta$ and $\mu^*_Z$ are continuous (and have densities with respect to) the $\sigma$-finite measure $\lambda \equiv \mu^*_Z + \lambda_\theta$. Let $f[\mu]$ denote the mapping sending $\mu$ to its density with respect to $\lambda$. By Scheff\'{e}'s Lemma (\citet{tsybakov2008introduction}, Lemma 2.1), $\norm{\mu^n-\phi^n}_{\mathrm{TV}} = \frac{1}{2} \norm{f[\mu^n] - f[\phi^n]}_{L^1(\lambda^n)}$ (so that $f$ is a linear isometry with respect to the total variation norm and $L^1$ topology), and by Lemma B.8 of \citet{Gh2017}, this quantity is bounded above by $n\norm{\mu-\phi}_{\mathrm{TV}}$. Moreover, it follows that all measures $\mu$ and $\mu^*_Z$ invoked in this proof have densities in the Banach space $L^1(\lambda)$, which has as its dual $L^\infty(\lambda)$ (\citet{fremlin2000measure}, Theorem 243G). 
    
Suppose now that $\overline{\M}_\theta$ is convex. Then, a straightforward argument shows that $\overline{\M}_\theta = \overline{\mathrm{co}}(\M_\theta)$. The Hahn-Banach theorem (\citet{C1994}) and the observations above imply that 
\begin{align*}
    \overline{\M}_\theta &= \overline{\mathrm{co}}(\M_\theta) \\
    & = \{\mu'\text{ continuous wrt }\lambda_\theta: \EE{\mu'}{\phi} \le \sup_{\mu \in \overline{\M}_\theta} \EE{\mu}{\phi} \,\forall \text{ bounded, Borel }\phi\}. 
\end{align*}
Suppose that $\theta \in \Thetam$. Because we can take $\phi$ to be the indicator function of any Borel set in \eqref{E:lowerb}, $\mu^*_Z$ must be continuous with respect to $\lambda_\theta$, and it follows from the previous display that $\mu^*_Z \in \overline{\M}_\theta$ and $\theta$ is in the identified set. On the other hand, suppose that for all Borel $\mu^*_Z$, $\theta \in \Thetam$ if and only if $\theta \in \Thetaw$. By the Hahn-Banach theorem, $\theta \in \Thetam$ if and only if $\mu^*_Z \in \overline{\mathrm{co}}(\M_\theta)$, and by definition, $\theta \in \Thetaw$ if and only if $\mu^*_Z \in \overline{\M}_\theta$. Thus, $\overline{\M}_\theta = \overline{\mathrm{co}}(\M_\theta)$, and the former set must be convex. 
\end{proof}

\begin{proof}[Proof of Proposition \ref{thm:convexity}]
\label{proof:thm_convexity}

First, Assumption \ref{A:probabilitymeasure} ensures that $\Gamma_\theta(\mathcal W)$ is convex. If there are no restrictions on $\gamma$, then $\Gamma_\theta(\mathcal{W}) = \mathcal{P}(\mathcal{W})$, which is inherently convex. When $\Gamma_\theta(\mathcal{W}) = \mathcal{P}(\mathcal{W})^g$, convexity is maintained since $\mathcal{P}(\mathcal{W})^g$ is the intersection of convex sets, each corresponding to a linear constraint imposed by the components of $g$. Then, by Corollary \ref{thm:pushforward_sharp}, $\mathcal{M}_\theta$ is convex.
By Theorem \ref{P:main}, $\theta$ is in $\Thetaw$ if and only if 
\begin{align}
\label{E:ps12_1}
    \EE{\mu^*_Z}{\phi} \le \sup_{\gamma \in \Gamma_\theta} \EE{\gamma}{\phi \circ \psi_\theta}, \text{ for all } \phi \in \Phi_b(\mathcal{Z}),
\end{align}  

(i) Suppose that $\Gamma_\theta = \mathcal{P}(\mathcal{W})$. Then the right hand side of \eqref{E:ps12_1} is equal to the right hand side of \eqref{IDset_W}, because $\mathcal{P}(\mathcal{W})$ contains all of the Dirac measures $\delta_w, w \in \mathcal{W}$. 

(ii) Suppose that $\Gamma_\theta = \mathcal{P}(\mathcal{W})^g$. Let $\Delta^*(d_g)$ be the subset of $\Gamma_\theta$ consisting of measures supported on at most $d_g+1$ points:
\begin{align*}
    \Delta^*(d_g) = \{\gamma \in \mathcal{P}(\mathcal{W})^g : \gamma = \sum_{j = 1}^{d_g + 1} c_j \delta_{w_j} : c_j \ge 0, w_i \in \mathcal{W}\}.  
\end{align*}
Let $\gamma \in \Gamma_\theta$, so that $\EE{\gamma}{g} =0$. By the equality constraint case of Theorems 2.1 and 3.1 of \citet{winkler88} (c.f.\ \citet{1907.07934}, Theorem 2.1), there exists some probability measure $\nu$ supported on $\Delta^*(d_g)$ with the property that $\gamma$ is in the barycenter of $\nu$ and
\begin{align*}
    \EE{\gamma}{\phi \circ \psi_\theta} = \int_{\Delta^*(d_g)}\EE{\gamma'}{\phi \circ \psi_\theta} \, \d \nu(\gamma')
\end{align*}
for all $\phi \in \Phi_b (\mathcal{Z})$. Hence, $\EE{\gamma}{\phi \circ \psi_\theta} \le \sup_{\gamma' \in \Delta^*(d_g)} \EE{\gamma'}{\phi \circ \psi_\theta}$. Taking the supremum over $\gamma \in \Gamma_\theta$ implies that $\sup_{\gamma \in \Gamma_\theta} \EE{\gamma}{\phi \circ \psi_\theta} = \sup_{\gamma' \in \Delta^*(d_g)} \EE{\gamma'}{\phi \circ \psi_\theta}$. This identity implies the equivalence of \eqref{E:ps12_1} and \eqref{IDset_restricted}. 
\end{proof}

\begin{proof}[Proof of Corollary \ref{corr1_prime}]
\label{proofcorr1_prime}
    As Assumption \ref{Asm:semip_sharpness} ensures that $\mathcal{Z} = \mathcal{Y} \times \mathcal{X}$ is a Polish space, we verify that Assumption \ref{A:one} holds, and that $\mathcal{M}_\theta$ is convex, allowing us to invoke Theorem \ref{P:main}. Assumption \ref{Asm:semip_sharpness} implies that every measure in $\mathcal{M}_\theta = (\psi_\theta)_* \Gamma_\theta (\mathcal{W})$ has a density with respect to the measure $\lambda_\theta$, which has the marginal distribution $\lambda_{\theta, \mathcal{X}}$ on $\mathcal{X}$ and conditional distributions $\lambda_{\theta, x}$ for all $x \in \mathcal{X}$. By the assumed measurability of the collection $\{\lambda_{\theta, x}: x \in \mathcal{X}\}$, this defines a measure on $\mathcal{Y} \times \mathcal{X}$ (see Section 10.4 in \citet{bogachev2007measure2}). Indeed, if $A \subseteq \mathcal{Y} \times \mathcal{X}$ is a $\lambda_\theta$-null set, then for any $\gamma \in \Gamma_\theta$ and $\mu = (\psi_\theta)_* \gamma$ in $\mathcal{M}_\theta$, one has
    \begin{align*}
        \EE{\mu}{\one_{(Y,X) \in A}} & = \EE{\mu}{ \EE{\mu}{\one_{(Y,X) \in A}|X} } = \int_{\mathcal{X}} \int_{\mathcal{Y}} \one_{(y,x) \in A} \, \d h(x,\cdot;\theta)_*\gamma_{U|x} \, \d \gamma_X \\
        & = \int_\mathcal{X} \int_\mathcal{Y} \one_{(y,x) \in A} \frac{\d h(x,\cdot;\theta)_*\gamma_{U|x}}{\d \lambda_{\theta, x}} \frac{\d \gamma_X}{\d \lambda_{\theta, \mathcal{X}}} \, \d \lambda_{\theta, x}\, \d \lambda_{\theta, \mathcal{X}} = 0. 
    \end{align*}
    Finally, the convexity of $\mathcal{M}_\theta$ follows from the convexity assumptions of Assumption \ref{Asm:semip_sharpness}, by Lemma \ref{L:convexreg1} in Section \ref{sec:convex_disintegration}, and by the linearity of the pushforward map $\psi_\theta$.
\end{proof}

\begin{proof}[Proof of Proposition \ref{cor:panel_ID_beta}]
This proof makes use of Lemma \ref{L:convexreg1} and Lemma \ref{L:marginalfree} that can be found in the supplementary material.
Convexity of $\mathcal{M}_\beta$ is a consequence of Assumption \ref{A:momentdiscrete}(ii) and Lemma \ref{L:convexreg1}. An application of Lemma \ref{L:marginalfree} with $g = 0$ implies that $\beta$ is in the identified set if and only if $\EE{\mu^*_{Y,X}}{\phi(Y,X)} \le \sup_{\mu \in \mathcal{M}_\beta}\EE{\mu}{\phi(Y,X)}$ for all $\phi \in \Phi_b(\mathcal{Y} \times \mathcal{X})$. Because the parameter $\gamma_{X,\alpha}$ is not constrained in any way in the definition of $\mathcal{M}_\beta$, one has 
\begin{align*}
    \sup_{\mu \in \mathcal{M}_\beta}\EE{\mu}{\phi(Y,X)} = \sup_{\substack{ x \in \mathcal{X} \\ a \in \mathcal{A} }} \int_{\mathcal{Y}} \phi(y,x) f_{Y|X,\alpha} (y|x,a ; \beta) \, \d \lambda_{\mathcal{Y}},
\end{align*}
which implies \eqref{eq:max_for_all_phi_panel}. 

\end{proof}

\begin{proof}[Proof of Proposition \ref{cor:counterfactual_ID}]
    This proof makes use of Lemma \ref{L:convexreg1} and Lemma \ref{L:marginalfree} that can be found in the supplementary material.
    Let $\mu_{Y,X} \in \mathcal{M}_\theta$ correspond to some $\gamma_{X,\alpha} \in \mathcal{P}(\mathcal{X} \times \mathcal{A})$ 
    and let $\phi \in \Phi_b(\mathcal{Y} \times \mathcal{X})$. Then, 
    \begin{align}
        \EE{\mu_{Y,X}}{\phi(Y,X)} = &\int_{\mathcal{X}} \int_{\mathcal{A}} \int_{\mathcal{Y}} \phi(y,x) f_{Y|X,\alpha}(y|x,a; \beta) \d \lambda_{\mathcal{Y}}\, \d \gamma_{\alpha|x} \, \d \gamma_X \text{, where} \nonumber\\
        &\int_{\mathcal{A}} g(x,a, \beta, \tau) \, \d \gamma_{\alpha|x} = 0\text{, } \gamma_X\text{-almost surely}. \label{E:ps131}
    \end{align}
    By modifying $\gamma$ on a set of measure $0$, we may assume that the second line of \eqref{E:ps131} holds for all $x$. It follows that 
    \begin{align}
        \EE{\mu_{Y,X}}{\phi(Y,X)} \le \sup_{\substack{x \in \mathcal{X}, \gamma_{\alpha|x} \\ \int_{\mathcal{A}} g(x,a, \beta, \tau) \, \d \gamma_{\alpha|x} = 0 }} \int_{\mathcal{A}} \int_{\mathcal{Y}} \phi(y,x) f_{Y|X,\alpha}(y|x,a; \beta) \d \lambda_{\mathcal{Y}}\, \d \gamma_{\alpha|x}.  \label{E:ps132}
    \end{align}
    By Theorems 2.1 and 3.1 of \citet{winkler88}, the right hand side of \eqref{E:ps132} is bounded above by the right hand side of \eqref{E:ps133}. On the other hand, for any particular $x \in \mathcal{X}$ and sequence of $c_j \in [0,1]$, $a_j \in \mathcal{A}$ meeting the constraints of \eqref{E:ps133}, the measure $\gamma_0 = \sum_{j = 1}^{d_g + 1} c_j \delta_{(x, a_j)}$ is in $\Gamma_{\tau}$, and letting $\mu_{Y,X}$ be the distribution associated with $\gamma_0$ 
    has the property that 
    \[
    \EE{\mu_{Y,X}}{\phi(Y,X)} = \sum_{j = 1}^{d_g + 1} c_j \int_{\mathcal{Y}} \phi(y,x) f_{Y|X,\alpha} (y|x,a_j; \beta) \, \d \lambda_{\mathcal{Y}}.
    \]
    Hence, $\sup_{\mu \in \mathcal{M}_{\theta}} \EE{\mu}{\phi(Y,X)}$ is precisely the right hand side of \eqref{E:ps133}, and Lemma \ref{L:marginalfree} concludes the proof.
\end{proof}

\end{appendix}


\bibliographystyle{ecta-fullname} 
\bibliography{svd,zotero,nonlinear-FE-fixed-T} 

\clearpage
\setcounter{page}{1}  

\beginsupplement

\section*{Supplementary Material}

\section{Convex disintegration lemma} 
\label{sec:convex_disintegration}

Let $W=(W_1,W_2)$ with $W_1 \in \mathcal {W}_1$ and $W_2 \in \mathcal {W}_2$ with $\mathcal{W}_1$, $\mathcal{W}_2$ Polish spaces. Let $\Gamma_{\theta, \mathcal{W}_1}(\mathcal{W}_1)\subseteq\mathcal{P}(\mathcal{W}_1)$, and for all $w_1 \in \mathcal{W}_1$ let $\Gamma_{\theta, w_1}(\mathcal{W}_2)\subseteq\mathcal{P}(\mathcal{W}_2)$. Additionally, let $\Gamma_\theta$ be the set of distributions of $W$ whose marginals over $\mathcal{W}_1$ belong to $\Gamma_{\theta, \mathcal{W}_1}(\mathcal{W}_1)$, and whose conditional distributions for $W_2|W_1 = w_1$ belong to $\Gamma_{\theta, w_1}(\mathcal{W}_2)$ for all $w_1$. 
Formally, by the disintegration theorem (Corollary 10.4.15 in \citet{bogachev2007measure2}),
$\Gamma_{\theta}(\mathcal{W})$ is the set of measures $\gamma \in \mathcal{P}(\mathcal{W})$ for which 
\begin{align} \label{E:disintegrationlemma}
    \d \gamma = \d \gamma_{W_1} \d \gamma_{W_2| w_1} \text{ for } \gamma_{W_1} \in \Gamma_{\theta, \mathcal{W}_1}(\mathcal{W}_1) \text{ and } \gamma_{W_2|w_1} \in \Gamma_{\theta, w_1}(\mathcal{W}_2)\text{ for all }w_1.
\end{align}

\begin{lemma}
\label{L:convexreg1}
    Suppose that $\Gamma_{\theta, \mathcal{W}_1}(\mathcal{W}_1)$ is convex and that $\Gamma_{\theta, w_1}(\mathcal{W}_2)$ is convex for every $w_1 \in \mathcal{W}_1$. Then, the set $\Gamma_\theta(\mathcal{W})$ of measures having disintegrations as in \eqref{E:disintegrationlemma} is a convex subset of $\mathcal{P}(\mathcal{W})$. 
\end{lemma}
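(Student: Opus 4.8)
The plan is to take two measures $\gamma, \gamma' \in \Gamma_\theta(\mathcal{W})$ with disintegrations as in \eqref{E:disintegrationlemma}, fix $t \in [0,1]$, and show that $\gamma_t \equiv t\gamma + (1-t)\gamma'$ again admits a disintegration of the required form. Write $\d\gamma = \d\gamma_{W_1}\,\d\gamma_{W_2|w_1}$ and $\d\gamma' = \d\gamma'_{W_1}\,\d\gamma'_{W_2|w_1}$, with $\gamma_{W_1}, \gamma'_{W_1} \in \Gamma_{\theta,\mathcal{W}_1}(\mathcal{W}_1)$ and $\gamma_{W_2|w_1},\gamma'_{W_2|w_1} \in \Gamma_{\theta,w_1}(\mathcal{W}_2)$ for all $w_1$. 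The marginal of $\gamma_t$ over $\mathcal{W}_1$ is clearly $t\gamma_{W_1} + (1-t)\gamma'_{W_1}$, which lies in $\Gamma_{\theta,\mathcal{W}_1}(\mathcal{W}_1)$ by convexity of that set. The substance of the argument is to identify the conditional distributions of $\gamma_t$ and to check that they are a.s.\ convex combinations of $\gamma_{W_2|w_1}$ and $\gamma'_{W_2|w_1}$.

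First I would reduce to the case where the $\mathcal{W}_1$-marginals of $\gamma$ and $\gamma'$ are mutually absolutely continuous with a common dominating measure. Let $\nu \equiv t\gamma_{W_1} + (1-t)\gamma'_{W_1}$, so both $\gamma_{W_1} \ll \nu$ and $\gamma'_{W_1}\ll\nu$; write $p \equiv \d\gamma_{W_1}/\d\nu$ and $p' \equiv \d\gamma'_{W_1}/\d\nu$, so $tp + (1-t)p' = 1$ $\nu$-a.s. Define, for $\nu$-a.e.\ $w_1$,
\begin{align}
\label{E:convexcombconditional}
    \beta_{w_1} \equiv \lambda(w_1)\,\gamma_{W_2|w_1} + \bigl(1-\lambda(w_1)\bigr)\,\gamma'_{W_2|w_1},
    \qquad
    \lambda(w_1) \equiv \frac{t\,p(w_1)}{t\,p(w_1) + (1-t)\,p'(w_1)},
\end{align}
where $\lambda(w_1) \in [0,1]$ is well defined wherever the denominator (which equals $1$ $\nu$-a.s.) is positive. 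Each $\beta_{w_1}$ is a convex combination of two elements of $\Gamma_{\theta,w_1}(\mathcal{W}_2)$, hence lies in $\Gamma_{\theta,w_1}(\mathcal{W}_2)$ by the assumed convexity of that set. Measurability of $w_1 \mapsto \beta_{w_1}(B_{w_1})$ follows because $p, p'$ are measurable (Radon--Nikodym derivatives) and the collections $\{\gamma_{W_2|w_1}\}$, $\{\gamma'_{W_2|w_1}\}$ are $\mathcal{W}_1$-measurable. Then I would verify, by a change of variables and the definition of the Radon--Nikodym derivatives, that for every bounded measurable $\phi:\mathcal{W}\to\R$,
\begin{align}
    \int_{\mathcal{W}_1}\!\int_{\mathcal{W}_2} \phi(w_1,w_2)\,\d\beta_{w_1}\,\d\nu
    = t\!\int \phi\,\d\gamma + (1-t)\!\int \phi\,\d\gamma',
\end{align}
which exhibits $\d\gamma_t = \d\nu\,\d\beta_{w_1}$ as a disintegration of $\gamma_t$ with marginal $\nu \in \Gamma_{\theta,\mathcal{W}_1}(\mathcal{W}_1)$ and conditionals $\beta_{w_1} \in \Gamma_{\theta,w_1}(\mathcal{W}_2)$, so $\gamma_t \in \Gamma_\theta(\mathcal{W})$. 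By a.s.\ uniqueness of the disintegration (Corollary 10.4.15 in \citet{bogachev2007measure2}), $\beta_{w_1}$ is indeed $\nu$-a.s.\ equal to the genuine conditional of $\gamma_t$.

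The main obstacle is the bookkeeping around the Radon--Nikodym weights: one must handle the $\nu$-null set where the denominator in \eqref{E:convexcombconditional} could degenerate, confirm that $\lambda(\cdot)$ is measurable, and ensure that the resulting family $\{\beta_{w_1}\}$ satisfies the $\mathcal{W}_1$-measurability condition required of a disintegration. These are routine measure-theoretic checks — none of the sets $\Gamma_{\theta,w_1}(\mathcal{W}_2)$ is required to vary measurably or continuously in $w_1$ beyond what is already built into the definition of $\Gamma_\theta(\mathcal{W})$ — but they are where all the care is needed. Everything else is a direct consequence of the fact that a convex combination of measures disintegrates fiberwise into convex combinations of the fiber measures, with weights dictated by the Radon--Nikodym derivatives of the $\mathcal{W}_1$-marginals.
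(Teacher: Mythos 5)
Your proposal is correct and follows essentially the same route as the paper's proof: both disintegrate the two measures, form the convex combination of the $\mathcal{W}_1$-marginals, and reweight the fiber conditionals by the ratio of Radon--Nikodym derivatives (the paper dominates by $\gamma_{W_1}^1+\gamma_{W_1}^2$ while you dominate by the convex combination itself, which only changes the normalization of the weight function). No gaps; the measurability and null-set caveats you flag are exactly the routine checks the paper also leaves implicit.
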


\begin{proof}[Proof of Lemma \ref{L:convexreg1}]
\label{proofofLconvreg1}
    Let $\gamma^1$ and $\gamma^2$ be in $\Gamma_\theta$ and let $c \in (0,1)$ be arbitrary. For $\iota \in \{1,2\}$ (indeed, for every $\gamma \in \mathcal{P}(\mathcal{W})$), the measure $\gamma^\iota$ can be disintegrated into a marginal distribution $\gamma_{W_1}^\iota$ over $\mathcal{W}_1$ and a set of conditional distributions $\gamma_{W_2|w_1}^\iota$, $w_1 \in \mathcal{W}_1$, over $\mathcal{W}_2$. 
    By the Radon-Nikodym theorem, the measures $\gamma^\iota_{W_1}$ have nonnegative densities $f^\iota$ with respect to the dominating measure $\gamma_{W_1} = \gamma_{W_1}^1 + \gamma_{W_1}^2$. For all $w_1 \in \mathcal{W}_1$, define the Borel map $\rho(w_1) = \frac{\alpha f^1(w_1)}{\alpha f^1(w_1) + (1-\alpha) f^2(w_1)} \in [0,1]$ with the convention $\frac{0}{0} = 0$, and define $\gamma^\rho_{w_1} = \rho(w_1) \gamma^1_{W_2|w_1} + (1-\rho(w_1)) \gamma^2_{W_2|w_1}$ for all $w_1 \in \mathcal{W}_1$. Let $\gamma^c_{\mathcal{W}_1} = c \gamma_{\mathcal{W}_1}^1 + (1 - c) \gamma_{\mathcal{W}_1}^2$. By convexity, $\gamma^c_{\mathcal{W}_1} \in \Gamma_{\theta, \mathcal{W}_1}(\mathcal{W}_1)$, and $\gamma^\rho_{w_1} \in \Gamma_{\theta, w_1}(\mathcal{W}_2)$ for all $w_1$. 
    
    Let $\gamma^c$ denote the measure whose marginal distribution over $\mathcal{W}_1$ is $\gamma^c_{\mathcal{W}_1}$ and whose conditional distribution over $\mathcal{W}_2$ is $\gamma^\rho_{w_1}$, for all $w_1 \in \mathcal{W}_1$. By definition, $\gamma^c \in \Gamma_\theta(\mathcal{W})$. Moreover, for any Borel map $\phi: \mathcal{W}_1 \times \mathcal{W}_2 \ra \R$, one has 
    \begin{align*}
        \int_{\mathcal{W}_1 \times \mathcal{W}_2} \phi \, \d \gamma^c & = \int_{\mathcal{W}_1} \int_{\mathcal{W}_2} \phi \, \d \gamma^\rho_{w_1} \, \d \gamma^c_{\mathcal{W}_1} = \int_{\mathcal{W}_1} \int_{\mathcal{W}_2} \phi \, \d \gamma_{w_1}^\rho \, (c f^1 + (1-c) f^2) \, \d \gamma_{\mathcal{W}_1}\\
        & = \int_{\mathcal{W}_1} \int_{\mathcal{W}_2} \phi (c f^1  \gamma_x^1 + (1-c) f^2 \gamma_{w_1}^2) \, \d \gamma_{\mathcal{W}_1} = \int_{\mathcal{W}} \phi \, (c \d \gamma^1 + (1-c) \d \gamma^2 ),
    \end{align*}
    so that $\gamma^c = c  \gamma^1 + (1-c) \gamma^2 $. 
\end{proof}

\section{Lemmas: parametric restrictions}\label{supp:parametric}

This appendix contains some lemmas used in the proofs of the main results of Section \ref{sec:nonlinearpanels}.

\begin{lemma} \label{L:marginalfree}
    Let Assumption \ref{A:marginalcond} hold. 
     Then, $\mu^*_{Y,X} \in \overline{\mathcal{M}}_{\theta}$ if and only if 
    \begin{align*}
        \EE{\mu^*_{Y,X}}{\phi(Y,X)} \le \sup_{\mu \in \mathcal{M}_{\theta}} \EE{\mu}{\phi(Y,X)} \text{ for all compactly supported }\phi \in \Phi_b(\mathcal{Y} \times \mathcal{X}).
    \end{align*}
\end{lemma}
\begin{proof}
    We prove the ``if" part of the lemma, as the converse is trivial. Let $\mu_{Y,X}^*$ be a model probability and suppose that $\mu^*_{Y,X} \not\in \overline{\mathcal{M}}_{\theta} = \overline{\mathcal{M}}_{\beta, \tau}$. Because $\mu_{Y,X}^*$ is a model probability, there is some $\beta_0$, some $\sigma$-finite measure $\lambda_{\mathcal{Y}} \in \mathfrak{B}(\mathcal{Y})$, and some measure $\gamma_{X,\alpha}^* \in \mathcal{P}(\mathcal{X} \times \mathcal{A})$ such that $\mu_{Y,X}^*$ has density 
    \begin{align*}
        \int_{\mathcal{A}} f_{Y|X,\alpha}(y|x,a ;\beta_0) \, \d \gamma^*_{\alpha|x}    
    \end{align*}
    with respect to measure $\lambda_{\mathcal{Y}} \times \gamma_X^*$ on $\mathcal{Y} \times \mathcal{X}$. By substituting $\lambda_{\mathcal{Y}} + \lambda_{\mathcal{Y}}'$ instead of $\lambda_{\mathcal{Y}}$ for another $\sigma$-finite measure $\lambda_{\mathcal{Y}}'$, we may also assume that every $\mu \in \mathcal{M}_{\theta}$ has density given by 
    \begin{align} \label{E:semiparametric}
    f_{Y|X}(y|x) = \int_{\mathcal{A}} f_{Y|X, \alpha} (y|x, a; \beta) \, \d \gamma_{\alpha|x},
    \end{align}
    for some $\gamma_{X,\alpha} \in \Gamma_{\tau}$. 
    
    Let $\mathcal{M}_{\theta}^*$ be the subset of $\mathcal{M}_{\theta}$ consisting of measures $\mu$ for which one has $\mu_X = \gamma_X^*$, which is to say that the marginal of $\mu$ over $\mathcal{X}$ must be $\gamma^*_X$. Similarly, let $\Gamma_{\tau}^*$ be the subset of measures in $\Gamma_{\tau}$ that have $X$-marginal $\gamma_X^*$. Then every measure in $\mathcal{M}_{\theta}^*$ has density with respect to $\lambda_{\mathcal{Y}} \times \gamma^*_X$, which is $\sigma$-finite. Moreover, $\mu^*_{Y,X}$ is certainly not in the total variation-norm closure of $\mathcal{M}^*_{\theta}$. Then, 
    the same arguments that establish Proposition \ref{P:main_appendix} imply that there is some $\phi_0 \in \Phi_b(\mathcal{Y} \times \mathcal{X})$ such that 
    \begin{align*}
        \EE{\mu^*_{Y,Z}}{\phi_0(Y,X)} &> \sup_{\mu \in \mathcal{M}_{\theta}^*} \EE{\mu}{\phi_0(Y,X)}  = \sup_{\gamma_{X,\alpha} \in \Gamma_{\tau}^*} \EE{\gamma}{\E{\phi_0(Y,X)|X,\alpha}}  \\
        & = \sup_{\gamma_{X,\alpha} \in \Gamma_{\tau}^*} \int_{\mathcal{X}} \int_{\mathcal{A}} h(x,a) \, \d \gamma_{\alpha|x} \, \d \gamma_X^*,
    \end{align*}
    where for $x \in \mathcal{X}$ and $a \in \mathcal{A}$, we have let $h(x,a) = \int_{\mathcal{Y}} \phi_0(y,x) f_{Y|X,\alpha} (y|x,a;\beta) \, \d \lambda_{\mathcal{Y}}$ (which is a measurable function by Fubini's theorem). By Lemma \ref{L:marginalmax} below, there is some Borel set $\Omega \subseteq \mathcal{X}$ such that $\gamma_X^*(\Omega) = 1$ with the property that the function 
    \[
    \rho(x) \cdot \one_{x \in \Omega} =  \sup_{\substack{\gamma_{\alpha|x} \in \mathcal{P}(\mathcal{A}) \\ \int_{\mathcal{A}} g(x,a,\beta, \tau) \, \d \gamma_{\alpha|x} = 0}} \int_{\mathcal{A}} h(x,a) \, \d \gamma_{\alpha|x} \, \d \gamma_X^* \cdot \one_{x \in \Omega}
    \]
    is measurable over $\mathcal{X}$, and
    \begin{align*}
       \sup_{\gamma_{X,\alpha} \in \Gamma_\tau^*} \int_{\mathcal{X}} \int_{\mathcal{A}} h(x,a) \, \d \gamma_{\alpha|x} \, \d \gamma_X^* = \int_{\Omega} \rho(x) \, \d \gamma_X^*.
    \end{align*}
    Let $h_0(x,\alpha) = \int_{\mathcal{Y}} \phi_0(y,x) f_{Y|X,\alpha} (y|x,a;\beta_0) \, \d \lambda_{\mathcal{Y}}$. The previous display implies that 
    \begin{align*}
        \int_{\Omega} \int_{\mathcal{A}} h_0(x,a) \d \gamma_{\alpha|x}^* \, \d \gamma_X^* &=\int_{\mathcal{X}} \int_{\mathcal{A}} h_0(x,a) \,  \d \gamma_{\alpha|x}^*\, \d \gamma_X^*  > \int_{\Omega} \rho(x) \, \d \gamma_X^*.
    \end{align*}
    Let $\Omega_1 \subseteq \Omega$ be the subset of $\Omega$ consisting of the points $x \in \mathcal{X}$ such that $\int_{\mathcal{A}} h_0(x,a) \d \gamma_{\alpha|x}^*$ is strictly greater than $\rho(x)$. The previous display and the choice of $\phi_0$ implies that $\gamma_X^*(\Omega_1) > 0$. For a fixed number $\ve > 0$, define the following bounded function $\phi_\ve$:
    \begin{align*}
        \phi_\ve: (y,x) \mapsto \left\{ \begin{array}{ll}
             \phi_0(y,x) / (\max \{ \rho(x), \ve \} )\} & \text{ if }x \in \Omega_1 \\
             1 & \text{ otherwise} 
        \end{array} \right.
    \end{align*}
    Note that, for any $x \in \Omega_1$ and any measure $\gamma_{\alpha|x}$ over $\mathcal{A}$ such that $\int_{\mathcal{A}} g(x,\alpha, \beta, \tau) \, \d \gamma_{\alpha|x} = 0$,
    \begin{align}
        &\int_{\mathcal{A}} \int_{\mathcal{Y}} \phi_\ve(y,x) f_{Y|X,\alpha} (y|x,a;\beta) \, \d \lambda_{\mathcal{Y}} \, \d \gamma_{\alpha|x}  =  \frac{\int_{\mathcal{A}} h(x,a) \, \d \gamma_{\alpha|x} }{ \max \{ \rho(x), \ve \}} \le 1, \text{ and } \nonumber \\
        &\int_{\mathcal{A}} \int_{\mathcal{Y}} \phi_\ve(y,x) f_{Y|X, \alpha} (y|x,a;\beta_0) \, \d \lambda_{\mathcal{Y}} \, \d \gamma_{\alpha|x}   = \frac{\int_{\mathcal{A}} h_0(x,a) \, \d \gamma_{\alpha|x} }{ \max \{ \rho(x), \ve \}}. \label{E:ps146}
    \end{align}
    As $\ve$ decreases to $0$, the last line of \eqref{E:ps146} increases to a point that is strictly greater than $1$ by definition of $\Omega_1$. By the monotone convergence theorem, there must exist some $\ve > 0$ such that $\EE{\mu^*_{Y,Z}}{\phi_\ve(Y,X)} > 1$. On the other hand, by the first line of \eqref{E:ps146}, $\sup_{\mu \in \mathcal{M}_{\theta}} \EE{\mu}{\phi_\ve(Y,X)} \le 1$. By rescaling and applying the compactness argument of Proposition \ref{P:main_appendix}, we finally arrive at a compactly supported $\phi \in \Phi_b (\mathcal{Y} \times \mathcal{X})$ which satisfies 
    \begin{align*}
        \EE{\mu^*_{Y,X}}{\phi(Y,X)} > \sup_{\mu \in \mathcal{M}_{\theta}} \EE{\mu}{ \phi(Y,X)},
    \end{align*}
    which completes the proof of the lemma.
\end{proof}

\begin{lemma} \label{L:marginalmax}
    Let $\mathcal{X}$ and $\mathcal{A}$ be Polish spaces, $\gamma_X^* \in \mathcal{P}(\mathcal{X})$, and $\Gamma^*_\tau \subseteq \Gamma_\tau$ be the subset of probability measures in $\Gamma_\tau$ having marginal $\gamma_X^*$ over $\mathcal{X}$. Let $g(\cdot, \cdot, \theta) : \mathcal{X} \times \mathcal{A}  \ra \R$ be Borel measurable and satisfy \eqref{E:taueqn}.
    Furthermore, let $\Delta^{d_g}$ denote the $d_g$-dimensional unit simplex and let $h: \mathcal{X} \times \mathcal{A} \ra \R$ be bounded and Borel. Then, for all $\ve > 0$ there exists a Borel map $\rho: \mathcal{X} \ra \R$ such that
    \begin{align}
         &\rho(x) = \sup_{\substack{\gamma_{\alpha|x} \in \mathcal{P}(\mathcal{A}) \\ \int_{\mathcal{A}} g(x,a, \theta) \, \d \gamma_{\alpha|x} = 0}} \int_{\mathcal{A}} h(x,a) \, \d \gamma_{\alpha|x}\text{ }\gamma_{X}^*\text{-a.s., and } \nonumber \\
        &\sup_{\gamma \in \Gamma^*_\tau} \EE{\gamma}{ h(X, \alpha) } = \EE{\gamma_X^*}{ \rho(X)}. \label{E:ref12}
    \end{align}
\end{lemma}
\begin{proof}
    Assume without loss of generality that $h$ has range in $[0,1]$. Let $\Delta^{d_g} \subseteq \R^{d_g + 1}$ denote the $d_g$-dimensional unit simplex. Define the function $H: \mathcal{X} \times \mathcal{A}^{d_g + 1} \times \Delta^{d_g} \ra \R$ by letting
    \begin{align*}
        H(x, a_1, \ldots, a_{d_g + 1}, c_1, \ldots, c_{d_g +1} ) = \sum_{j = 1}^{d_g + 1} c_j h(x,a_j). 
    \end{align*}
    Moreover, define the correspondence $S: \mathcal{X} \ra \mathcal{A}^{d_g + 1} \times \Delta^{d_g}$ by letting
    \begin{align*}
        S(x) = \{(a_1, \ldots, a_{d_g + 1}, c_1, \ldots, c_{d_g + 1}): \sum_{j = 1}^{d_g + 1} c_j g(x, a_j, \theta) = 0\}.
    \end{align*}
    Note that \eqref{E:taueqn} is sufficient (and necessary; see Theorems 2.1 and 3.1 of \citet{winkler88}) to guarantee that $S(x)$ is nonempty for all $x$.
    The graph of $S(x)$ is the preimage of the set $\mathcal{X} \times \{0\}$ under the (Borel; see Lemma 2.12.5 of \citet{bogachev2007measure}) map 
    \[
    (x, a_1, \ldots, a_{d_g + 1}, c_1, \ldots, c_{d_g + 1}) \mapsto (x, \sum_{j = 1}^{d_g + 1} c_j g(x, a_j, \theta)),
    \]
    so it is itself Borel. Therefore, implication (a) of Theorem 2.17 of \citet{sw1992} implies that the map 
    \begin{align} \label{E:rhoeqn1}
    x \mapsto \sup_{(a_1, \ldots, a_{d_g + 1}, c_1, \ldots, c_{d_g + 1}) \in  S(x) } H(x, a_1, \ldots, a_{d_g + 1}, c_1, \ldots, c_{d_g + 1})
    \end{align}
    is analytic, whence universally measurable. In particular, this map is measurable with respect to the completion $\overline{\gamma}^*_X$ of $\gamma^*_X$ (ibid, Fact 2.9). By Proposition 2.1.11 of \citet{bogachev2007measure}, there exists some Borel map $\rho: \mathcal{X} \ra \R$ such that $\rho(x)$ equals the map in \eqref{E:rhoeqn1} for all $x$ in some Borel set $\Omega$ having $\gamma_X^*$-measure equal to $1$. By Theorem 3.1 of \citet{winkler88}, the right hand side of \eqref{E:rhoeqn1} is equal to the first line of the right hand side of \eqref{E:ref12}, and the first claim of the lemma is established. 
    
    Now for $\ve \in (0,1)$, consider the Borel function $H_\ve: \mathcal{X} \times \mathcal{A}^{d_g + 1} \times \Delta^{d_g}  \ra \R$ which is set equal to $ \min \{ H(x, a_1, \ldots, a_{d_g + 1}, c_1, \ldots, c_{d_g + 1}), (1- \ve) \rho(x)\}$. For all $x \in \Omega$, the function $H_\ve(x,\cdot)$ clearly achieves its supremum $(1- \ve) \rho(x)$ over $S(x)$. Therefore, implication (d) of Theorem 2.17 of \citet{sw1992} implies that there exists a universally measurable map $m_\ve: \mathcal{X} \ra \mathcal{A}^{d_g + 1} \times \Delta^{d_g} $ such that 
    \[
    \overline{\gamma}_X^*(\{x: m_\ve(x) \in S(x), H_\ve(x,m_\ve(x)) = (1-\ve) \rho(x) \}) = 1.
    \]
    By another application of Proposition 2.1.11 of \citet{bogachev2007measure}, we may again assume that $m_\ve$ is a Borel map bymodifying it on a $\gamma_X^*$-null set. Denote the coordinate functions of $m_\ve(x)$ as $a_1(x), \ldots, a_{d_g + 1}(x), c_1(x), \ldots, c_{d_g + 1}(x)$. For $j = 1, \ldots, d_g + 1$, define $C_j(x) = \sum_{j = 1}^j c_j(x)$, and define $C_0(x) = 0$. 

    Now, define a map $M_\ve : \mathcal{X} \times [0,1) \ra \mathcal{A} $ by writing 
    \[
    M_\ve(x, v) = \left\{\begin{array}{ll}
         a_1 & \text{ if } C_0(x) \le v < C_1(x)  \\
         & \vdots \\
         a_{d_g + 1} & \text{ if } C_{d_g}(x) \le v < 1
    \end{array}. \right.
    \]
    Let $A \subseteq \mathcal{A}$ be measurable and let $\mathbb{Q}$ be the rational numbers; then 
    \begin{align*}
        M_\ve^{-1}(A) & = \bigcup_{j = 1}^{d_g + 1} \{(x,v): a_j(x) \in A, v \in [0,1)\} \cap \{(x,v): C_{j -1}(x) \le v \} \cap \{(x,v): v  < C_{j}(x) \},
    \end{align*} 
    where e.g.\ $\{(x,v): v < C_j(x) \}$ is the subgraph of the measurable function $C_j(x)$ and is itself measurable. Therefore, $M_\ve$ is measurable (\citet{bogachev2007measure}, Proposition 3.3.4). The map $\tilde{M}_\ve: (x, v) \mapsto (x, M_\ve(x,v))$ must also be measurable by Lemma 2.12.5 of \citet{bogachev2007measure}. 
    
    Let $\lambda_{U}$ be the uniform measure on $[0,1)$. Then, the pushforward $\gamma = (\tilde{M}_\ve)_* (\gamma_X^* \times \lambda_U)$ is a Borel measure on $\mathcal{X} \times \mathcal{A}$ and has $\mathcal{X}$-marginal $\gamma_X^*$. Moreover, 
    \begin{align*}
        \int_{\mathcal{A}} g(x, a, \theta ) \, \d \gamma_{\alpha|x} = &\int_{0}^1 g(x,M_\ve(x,v), \theta) \, \d v = \sum_{j = 1}^{d_g + 1} c_j(x) g(x, a_j(x), \theta) \overset{\gamma_X^*\text{-a.s.}}{=} 0,
    \end{align*}
    so that $\gamma \in \Gamma^*_\tau$. Similarly,
    \begin{align*}
         \int_{\mathcal{A}} h(x, a) \, \d \gamma_{\alpha|x} = & \sum_{j = 1}^{d_g + 1} c_j(x) h(x, a_j(x))  = H(x, m_\ve(x)) \ge H_\ve(x, m_\ve(x)) \\
        \overset{\gamma_X^*\text{-a.s.}}{=} & (1-\ve) \rho(x).
    \end{align*}
    Hence, 
   $
        \EE{\gamma}{h(X,\alpha)} \ge (1-\ve)\EE{\gamma_X^*}{\rho(X)}.
   $
    On the other hand, 
    \begin{align*}
        \sup_{\gamma \in \Gamma^*_\tau} \EE{\gamma}{h(X,\alpha)} = \sup_{\gamma \in \Gamma^*_\tau} \EE{\gamma}{h(X,\alpha) \one_{X \in \Omega}} \le \sup_{\gamma \in \Gamma_{\tau}^*} \EE{\gamma_X}{\rho(X)} = \EE{\gamma_X^*}{\rho(X)}.
    \end{align*}
    Because $\ve$ was arbitrary, the second claim of the lemma follows. 
\end{proof}

\section{Proofs: semiparametric panel models}
\label{app:panel_proofs}

\begin{proof}[Proof of Theorem \ref{thm:joint_stat}]
Via the pushforward representation, the set $\mathcal M_\theta$ has elements
\begin{equation}
\mu_{(Y, X)}(Y_1, Y_2, X_1, X_2) = \int_{\mathcal{A} \times \mathcal{X}} P(Y_1, Y_2 \mid \alpha, X_1, X_2) \, d\gamma_{(\alpha, X)}.
\end{equation}

By the disintegration theorem, the set $\Gamma_\theta(\mathcal W)$ consists of joint distributions with differential
$d\gamma = d\gamma_{V|\alpha, X} \, d\gamma_{\alpha, X},$
where \(d\gamma_{\alpha, X}\) is the marginal distribution over \((\alpha, X)\) and \(d\gamma_{V|\alpha, X}\) is the conditional distribution of \(V\) given \(\alpha\) and \(X\). 

The proof proceeds by showing that Assumption \ref{a:conditional_stationarity} and the definitions of $\tau_t$ are linear restrictions on \(d\gamma_{V|\alpha, X}\), which implies that $\Gamma_\theta$ is convex. The result then follows from Corollary \ref{thm:pushforward_sharp}. 
The proof consists in verifying the assumptions of Corollary \ref{thm:pushforward_sharp}. 

First note that $\Gamma_\theta(\mathcal W)$ is convex. To see this, note that Assumption \ref{a:conditional_stationarity} implies that for any bounded measurable function \(\xi(v)\), 
\[
\int \xi(v) \, d\gamma_{V_1|\alpha, X_1, X_2} = \int \xi(v) \, d\gamma_{V_2|\alpha, X_1, X_2} \quad \text{for all } \alpha, X_1, X_2.
\]
Using the joint density \(\gamma\), this equality can be rewritten as
\begin{equation}
\label{stricto}
    \int_{\mathcal V_2} \xi(v) \, d\gamma(\alpha, v_1, v, x_1, x_2) \, dv = \int_{\mathcal V_1} \xi(v) \, d\gamma(\alpha, v, v_2, x_1, x_2) \, dv,
\end{equation}
for all \((\alpha, x_1, x_2) \in \mathcal{A} \times \mathcal{X}\).

Define the following linear operator acting on \(\gamma\):
\[
\mathcal{L}[\gamma] \equiv \int_{\mathcal V_2} \gamma(\alpha, v_1, v, x_1, x_2) \, dv - \int_{\mathcal V_1} \gamma(\alpha, v, v_2, x_1, x_2) \, dv.
\]
Then, for \eqref{stricto} to hold, \(\mathcal{L}[\gamma] = 0\) for all \((\alpha, v_1, v_2, x_1, x_2) \in \mathcal W\).
\footnote{The definition of $\mathcal{L}[\gamma]$ implicitly assumes that $V_t$ are continuously distributed. This is inconsequential for establishing the linearity of the strict exogeneity restriction.} 

The measure \(\gamma\) satisfies the linear restrictions \(\mathcal{L}[\gamma] = 0\), as well as those imposed by the definition of \(\tau_t(x)\), which are also linear. Given that \(\mathcal{W}\) is a Polish space, the set \(\Gamma_\theta(\mathcal{W})\) represents all distributions that adhere to these linear constraints. Consequently, \(\Gamma_\theta(\mathcal{W})\) forms a convex set.

The remaining assumptions are trivially satisfied, with $\lambda_\theta$ the counting measure and $\phi_\theta$ given by \eqref{binarypsi}.
 
\end{proof}

\subsection*{Convexity under sequential exogeneity}

\begin{asm}\label{A:support_gamma}
Let \(\mathcal{X}, \mathcal{V}, \mathcal{A}\) be Polish spaces representing the supports of \(X\), \(V\), and \(\alpha\), respectively.
\end{asm}

In the following assumption, we deem a map $h: (v_1,v_2, a) \ra (v_1, v_2)$ to be a \textit{stationary map} if $h$ acts on $(v_1, v_2)$ component-wise, which is to say that $h(v_1, v_2, a) = (\tilde{h}(v_1,a), \tilde{h}(v_2,a))$ for some map $\tilde{h}$. We say that an invertible map is bimeasurable if it and its inverse are measurable.

\begin{asm} \label{A:pred}
    (i) $\mathcal{M}_\theta$ is the set of measures $\mu_{(Y,X)}$ which have the pushforward representation $\mu_{(Y,X)} = (\psi_\theta)^*\gamma$, where $\psi_\theta: (X,V,\alpha) \mapsto (Y,X)$ is Borel measurable, for some distribution $\gamma$ that conforms to \eqref{E:seqex}; 
    (ii) For any measurable function $\rho: \mathcal{A} \ra \mathcal{A}$, there is some measurable and stationary function $h[\rho]:  \mathcal{V} \times \mathcal{A} \ra \mathcal{V}$ satisfying that $\psi_\theta(x,v,a) = \psi_\theta(x, h[\rho](v,a), \rho(a))$ for all $x,v,a$; 
    (iii) There exist a strict subset $\mathcal{A}_0 \subset \mathcal{A}$ and bimeasurable bijections $\rho_0: \mathcal{A} \ra \mathcal{A}_0$ and $\rho_1: \mathcal{A} \ra \mathcal{A}_0^c$, where $\mathcal{A}_0^c$ is the complement of $\mathcal{A}_0$ in $\mathcal{A}$.
\end{asm}

\begin{lemma}
\label{convexitypredeterminess}
Under Assumptions \ref{A:support_gamma} and \ref{A:pred}, $\mathcal{M}_\theta$ is a convex set.
\end{lemma}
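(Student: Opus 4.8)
The plan is to take $\mu^{1},\mu^{2}\in\mathcal{M}_\theta$ with $\mu^{j}=(\psi_\theta)_*\gamma^{j}$, $\gamma^{j}$ conforming to \eqref{E:seqex}, and $c\in(0,1)$, and to exhibit a single $\gamma$ conforming to \eqref{E:seqex} with $(\psi_\theta)_*\gamma=c\mu^{1}+(1-c)\mu^{2}$; by Assumption \ref{A:pred}(i) this yields convexity of $\mathcal{M}_\theta$. The naive candidate $c\gamma^{1}+(1-c)\gamma^{2}$ does not work, because sequential exogeneity is a nonlinear restriction and need not survive mixing. The fix is to first \emph{relabel the fixed effect} so that the two mixands occupy disjoint ranges of $\alpha$: then, conditionally on $\alpha$, the mixture reduces to exactly one of the two (relabeled) distributions, and the relabeling is chosen so as to leave $\psi_\theta$ untouched.

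Concretely, first invoke Assumption \ref{A:pred}(iii) to fix bimeasurable bijections $\rho_{0}:\mathcal{A}\to\mathcal{A}_{0}$ and $\rho_{1}:\mathcal{A}\to\mathcal{A}_{0}^{c}$, and Assumption \ref{A:pred}(ii) to obtain, for $j\in\{0,1\}$, a stationary measurable $h[\rho_{j}]:\mathcal{V}\times\mathcal{A}\to\mathcal{V}$, say $h[\rho_{j}](v_{1},v_{2},a)=(\tilde h_{j}(v_{1},a),\tilde h_{j}(v_{2},a))$, with $\psi_\theta(x,v,a)=\psi_\theta(x,h[\rho_{j}](v,a),\rho_{j}(a))$. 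Define the Borel maps $\Phi^{(j)}(x,v,a)=(x,\,h[\rho_{j}](v,a),\,\rho_{j}(a))$ and set $\tilde\gamma^{1}=(\Phi^{(0)})_*\gamma^{1}$, $\tilde\gamma^{2}=(\Phi^{(1)})_*\gamma^{2}$. Three facts then need checking. (i) $\psi_\theta\circ\Phi^{(j)}=\psi_\theta$, so by functoriality of the pushforward $(\psi_\theta)_*\tilde\gamma^{j}=(\psi_\theta)_*\gamma^{j}=\mu^{j}$. (ii) Each $\tilde\gamma^{j}$ still conforms to \eqref{E:seqex}: using Assumption \ref{A:support_gamma} to disintegrate, and bijectivity of $\rho_{j}$ (so conditioning $\tilde\gamma^{j}$ on $\alpha'=a'$ is conditioning $\gamma^{j}$ on $\alpha=\rho_{j}^{-1}(a')$), the conditional law of $V_{2}'$ given $(\alpha',X_{1},X_{2})$ is the pushforward under $v\mapsto\tilde h_{j}(v,\rho_{j}^{-1}(a'))$ of the conditional law of $V_{2}$ given $(\alpha,X_{1},X_{2})$, and likewise $V_{1}'$ given $(\alpha',X_{1})$ corresponds to $V_{1}$ given $(\alpha,X_{1})$ under the \emph{same} map (this uses that $h[\rho_{j}]$ acts componentwise and does not depend on $x$); applying that common map to both sides of the equality \eqref{E:seqex} for $\gamma^{j}$ gives \eqref{E:seqex} for $\tilde\gamma^{j}$. (iii) The $\alpha$-marginals of $\tilde\gamma^{1}$ and $\tilde\gamma^{2}$ are carried by the disjoint sets $\mathcal{A}_{0}$ and $\mathcal{A}_{0}^{c}$. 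Now put $\gamma=c\tilde\gamma^{1}+(1-c)\tilde\gamma^{2}$: by disjointness, for $\gamma$-a.e.\ $\alpha$ the conditional law of $(X,V)$ given $\alpha$ equals that of $\tilde\gamma^{1}$ (if $\alpha\in\mathcal{A}_{0}$) or of $\tilde\gamma^{2}$ (if $\alpha\in\mathcal{A}_{0}^{c}$), so \eqref{E:seqex} holds for $\gamma$; and by linearity of the pushforward $(\psi_\theta)_*\gamma=c\mu^{1}+(1-c)\mu^{2}$. Assumption \ref{A:pred}(i) then places $c\mu^{1}+(1-c)\mu^{2}$ in $\mathcal{M}_\theta$.

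The main obstacle is conceptual rather than computational: recognizing that one should not mix the preimages directly but first transport them onto disjoint fixed-effect "sheets," so that the mixture is conditionally (on $\alpha$) just one of the two original laws. The only delicate technical point is step (ii) — that the relabeling preserves sequential exogeneity — which is precisely why Assumption \ref{A:pred}(ii) requires $h[\rho]$ to be \emph{stationary} (acting identically on $V_{1}$ and $V_{2}$, and not through $x$) and why $\rho$ must be a \emph{bijection} (so that no two values of $\alpha$ are merged, which would reintroduce a genuine mixture at the conditional level and thereby break \eqref{E:seqex}).
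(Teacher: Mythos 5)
Your proposal is correct and follows essentially the same route as the paper: both proofs use Assumptions \ref{A:pred}(ii) and (iii) to transport $\gamma^{1}$ and $\gamma^{2}$ onto the disjoint fixed-effect ranges $\mathcal{A}_0$ and $\mathcal{A}_0^c$ via $(x,v,a)\mapsto(x,h[\rho_j](v,a),\rho_j(a))$, note that this leaves the pushforward under $\psi_\theta$ unchanged, and then mix. The only difference is cosmetic: where you verify that the relabeled measures and their mixture satisfy \eqref{E:seqex} by a direct disintegration argument (conditioning on $\alpha$ picks out exactly one component), the paper runs the equivalent check through conditional moment equalities against test functions; your version is, if anything, the cleaner presentation of the same idea.
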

\begin{proof}
\label{proofofconvexitypredeterminess}
    We note that the proof uses the following implication of Assumption \ref{A:pred}(i). The set $\Gamma_\theta(\mathcal W)$ is the set of joint distributions \(\gamma\) of \((X, V, \alpha)\) with differential
    \[
    d\gamma = d\gamma_{V|\alpha, X} \, d\gamma_{\alpha, X},
    \]
    where \(d\gamma_{\alpha, X}\) is the marginal distribution over \((\alpha, X) \in \mathcal{A} \times \mathcal{X}\) and \(d\gamma_{V|\alpha, X}\) is the conditional distribution of \(V \in \mathcal{V}\) given \(\alpha\) and \(X\). The support of \(\gamma\) is given by:
    \[
    \text{supp}(\gamma) \subseteq \{(x, v, \alpha) \in \mathcal{X} \times \mathcal{V} \times \mathcal{A} : \gamma_{V|\alpha, X}(v | \alpha, x) > 0 \text{ for some } \gamma_{\alpha, X}(\alpha, x) > 0\}.
    \]

Let $\mu_0 = (\psi_\theta)_*\gamma_0$ and $\mu_1 = (\psi_\theta)_*\gamma_1$ be measures in $\mathcal{M}_\theta$ (so that, in particular, $\gamma_\iota$ satisfies \eqref{E:seqex} for $\iota = 1,2$), and let $c \in (0,1)$. Define the maps $h_\iota: (x,v,a) \mapsto (x, h[\rho_\iota](v,a), \rho_\iota(a))$ for $\iota = 0,1$, and let $\gamma_0^h = (h_0)_* \gamma_0$ and $\gamma_1^h = (h_1)_* \gamma_1$. Then, $\gamma_0^h$ is supported on $(\mathcal{X} \times \mathcal{V} \times \mathcal{A}_0)$ and $\gamma_1^h$ is supported on $(\mathcal{X} \times \mathcal{V} \times \mathcal{A}_0^c)$. By definition of the maps $h[\rho]$, one clearly has $\mu_\iota = (\psi_\theta)_* \gamma_\iota^h$, so that 
    \[
    \mu_0 + (1 - c) \mu_1 = (\psi_\theta)_* ( c \gamma_0^h + (1-c) \gamma_1^h).
    \]
    The proof will be complete if we can show that the measure $\gamma_c^h \equiv c \gamma_0^h + (1-c) \gamma_1^h$ satisfies the conditional restriction in \eqref{E:seqex}. Note that \eqref{E:seqex} holds if and only if $V_2|\alpha, X_1, X_2 \overset{d}{=} V_2|\alpha, X_1$ and $V_2|\alpha, X_1 \overset{d}{=} V_1|\alpha, X_1$ almost surely, which is true if and only if one has 
\begin{align}
    \label{E:momenteq1}
    \EE{\gamma_c^h}{\xi_2(V_2) |\alpha, X_1, X_2} &= \EE{\gamma_c^h}{ \xi_2(V_2) | \alpha, X_1} \\
    \EE{\gamma_c^h}{\xi_1(V_2) |\alpha, X_1} &= \EE{\gamma_c^h}{ \xi_1(V_1) | \alpha, X_1}
\end{align}
for all bounded and measurable functions $\xi_1,\xi_2$ almost surely. Equation \eqref{E:momenteq1}, in turn, holds if and only if 
\begin{align}
    \label{E:momenteq2}
    \EE{\gamma_c^h}{\xi_2(V_2) \xi'_2(\alpha, X_1, X_2)} = 0 \text{ and }\EE{\gamma_c^h}{(\xi_1(V_2) - \xi_1(V_1)) \xi_1'(\alpha, X_1)} = 0
\end{align}
for all bounded and measurable $\xi_1$, $\xi_2$, $\xi_1'$, and $\xi_2'$, where $\xi_2'$ has the property that $\EE{\gamma_c^h}{\xi_2' (\alpha, X_1, X_2)| \alpha, X_1} \eqas 0$, as one can consider letting $\xi_1'$ and $\xi_2'$ be such that
\begin{align*}
    &\xi_1'(\alpha, X_1) = \EE{\gamma_c^h}{(\xi_1(V_2) - \xi_1(V_1))|\alpha, X_1} \\
    &\xi_2'(\alpha, X_1, X_2) = \EE{\gamma_c^h}{\xi_2(V_2) | \alpha, X_1, X_2 } - \E{ \xi_2(V_2) | \alpha, X_1}.  
\end{align*}
Let the functions $\xi_\iota, \xi'_\iota$ be as above. Then, one has 
\small
\begin{align*}
    \EE{\gamma_c^h}{\xi_2(V_2) \xi'_2(\alpha, X_1, X_2)} & =  c \EE{\gamma_0^h}{\xi_2(V_2) \xi'_2(\alpha, X_1, X_2)} + (1 - c) \EE{ \gamma_1^h}{\xi_2(V_2) \xi'_2(\alpha, X_1, X_2)}
\end{align*}
\normalsize
where, for $\iota = 1,2$, 
\small
\begin{align}
    \EE{\gamma_\iota^h}{\xi_2(V_2) \xi'_2(\alpha, X_1, X_2)} & = \EE{\gamma_\iota}{\xi_2(\tilde{h}[\rho_\iota](V_2, \alpha)) \xi'_2(\rho_\iota(\alpha), X_1, X_2)} \nonumber \\
    & = \EE{\gamma_\iota^h}{ \EE{\gamma_\iota^h}{ \xi_2(\tilde{h}[\rho_\iota](V_2, \alpha)) | \alpha, X_1} \xi'_2(\rho_\iota(\alpha), X_1, X_2) }, \label{E:momenteq3}
\end{align}
\normalsize
where the second line is a consequence of the law of iterated expectations and the first equality of \eqref{E:momenteq1}. Now, by construction of $\xi_2'$, for all bounded functions $\xi_\alpha$ in $(a, x_1)$ which vanish whenever $a \in\mathcal{A}_0^c$ one has 
\small
\begin{align*}
    0 & = \EE{\gamma_c^h}{\xi_2'(\alpha, X_1, X_2) \xi_\alpha(\alpha, X_1)} \\
    & = c \EE{\gamma_0}{\xi_2'(\rho_0(\alpha), X_1, X_2) \xi_\alpha(\rho_0(\alpha), X_1)} + (1 - c) \EE{\gamma_1}{\xi_2'(\rho_1(\alpha), X_1, X_2) \xi_\alpha(\rho_1(\alpha), X_1)} \\
    & = c \EE{\gamma_0}{\xi_2'(\rho_0(\alpha), X_1, X_2) \tilde{\xi}_\alpha(\alpha, X_1)}, 
\end{align*}
\normalsize
where we have let $\tilde{\xi}_\alpha(a, x_1) = \xi_\alpha(\rho_0(a), x_1)$ for all $a, x_1$. As $\tilde{\xi}_\alpha$ ranges over the space of all bounded and measurable functions in $(a,x_1)$ as $\xi_\alpha$ ranges over the space of all bounded and measurable functions which vanish when $a$ is in $\mathcal{A}_0^c$, the previous display implies that $\EE{\gamma_0}{\xi_2'(\rho_0(\alpha), X_1, X_2) | \alpha, X_1} = 0$ (a similar argument can be made in the case $\iota = 1$). Hence, the second line of \eqref{E:momenteq3} vanishes for $\iota = 1,2$ by the law of iterated expectations, and the first equality of \eqref{E:momenteq2} is fulfilled. Similarly, by the second equality of \eqref{E:momenteq1},
\small
\begin{align*}
    \EE{\gamma^h_\iota}{( \xi_1(V_2) - \xi_1(V_1)) \xi_1'(\alpha, X_1)} & = \EE{\gamma_\iota}{(\xi_1(\tilde{h}[\rho_\iota](V_2)) - \xi_1(\tilde{h}[\rho_\iota](V_1))) \xi_1'(\rho_\iota(\alpha), X_1) } = 0
\end{align*}
for $\iota = 0,1$, so that the second equality of \eqref{E:momenteq2} also holds. 
\normalsize
\end{proof}

\begin{proof}[Proof of Theorem \ref{thm:sharpness_sequential_exogeneity}]
    The proof consists by verifying Assumptions \ref{A:support_gamma} and \ref{A:pred} for the binary choice model under sequential exogeneity. The conclusion then follows from Lemma \ref{convexitypredeterminess}.

    Assumption \ref{A:support_gamma} holds: regressors are discrete, errors live on (a subset of) the real line, and the support of fixed effects can be taken to be the real line $\mathcal A = \mathbb R$. To see that Assumption \ref{A:pred}(i) holds, see the mapping in Section \ref{sec:panel_partial_effects}.

    Assumption \ref{A:pred}(ii) holds with 
    \(h\left[\rho\right]\left(v,a\right)=\left(v_{1}+a-\rho\left(a\right),v_{2}+a-\rho\left(a\right)\right).\)
    The map $h$ is stationary, and, since
    \(v_t + a = [v_{t}+a-\rho\left(a\right)] + \rho(a),\)
    we have that $\psi_\theta(x,v,a) = \psi_\theta(x,h[\rho](v,a),\rho(a)$.
    Many bijections $\rho$ can be used to satisfy Assumption \ref{A:pred}(iii), e.g. $\rho_t(a) = (-1)^t \exp(a)$.
\end{proof}

\section{Computation}
\label{app:computation_general}

This appendix provides additional details on computation in support of Sections 3--5.

\subsection{Linear programming}
\label{app:linear_programming}

We now demonstrate that the bilinear program in equation \eqref{eq:T_is_bilinear} in Section  
\ref{sec:adversarial} can be reformulated as a linear program (LP). Consider the inner minimization problem:
\[
\begin{array}{ll}
\text{min}        & \phi^\prime C_\theta p_W  \\
\text{subject to} & A_\theta p_W = b_\theta, \\
                  & p_W \geq 0,
\end{array}
\]
which is an LP with coefficients $\phi' C_\theta$ and decision variables $p_W$. The dual of this LP is given by:
\[
\begin{array}{ll}
\text{max}        & \lambda^\prime b_\theta \\
\text{subject to} & A_\theta^\prime \lambda \le C_\theta^\prime \phi,
\end{array}
\]
where $\lambda$ are the dual variables corresponding to the constraints in the primal problem.

By strong duality, we can replace the inner minimization in the bilinear program with its dual maximization problem, provided two conditions are met. First, there must exist at least one $p_W$ satisfying $A_\theta p_W = b_\theta$ and $p_W \geq 0$ (feasibility). Second, the objective function must be bounded from below on the feasible region (boundedness).

Importantly, we note that both $\phi$ and $C_\theta$ have non-negative entries by construction. Given that $p_W \geq 0$, this ensures that the objective function $\phi' C_\theta p_W$ is bounded from below by zero. Thus, the boundedness condition is automatically satisfied in our context.

Regarding the feasibility condition, we observe that if $\theta \in \Theta_I$, then by definition, there exists a feasible $p_W$. Consequently, for $\theta \in \Theta_I$, strong duality holds, and our approach of replacing the inner minimization with its dual is valid.

In the alternative case where $\theta \notin \Theta_I$, the value of $T(\theta)$ is strictly positive. Even if strong duality does not hold in this case, weak duality ensures that the optimal value of the dual problem is always greater than or equal to the optimal value of the primal problem. Therefore, when we replace the inner minimization with its dual maximization, we are, for each $\phi$, replacing the original value with a weakly greater value. Consequently, the supremum over $\phi$ is weakly increased. It follows that we will still correctly conclude that $T(\theta) > 0$ in this case, thereby correctly deciding that $\theta$ is not in the identified set.

This reasoning demonstrates that our approach of replacing the inner minimization with its dual is valid for all $\theta$, regardless of whether $\theta$ is in the identified set or not. When $\theta \in \Theta_I$, strong duality holds and the replacement is exact. When $\theta \notin \Theta_I$, the replacement may overestimate $T(\theta)$, but this overestimation does not affect our ability to correctly classify $\theta$ as being outside the identified set.

Given these observations, we rewrite $T(\theta)$ as:

\begin{equation}
T(\theta) = \left\{
\begin{array}{ll}
\text{max}_{\lambda,\phi}   & \lambda^\prime b_\theta \\
\text{subject to}           & A_\theta^\prime \lambda \le C_\theta^\prime \phi, \\
                            & 0 \leq \phi \leq 1.
\end{array}
\right.
\label{eq:LP_formulation_for_Theta_I}
\end{equation}
In conclusion, our approach of reformulating the bilinear program as an LP through duality is valid and leads to correct identification results for all $\theta$, regardless of whether strong duality holds in all cases.

\subsection{Semiparametric regression models}
\label{sec:computation_examples}

For computation in specific models,
it will be useful specialize the results above
to the semiparametric regression models in Section \ref{sec:convexityinmodels}
with regressors, dependent variables, and unobserved heterogeneity
\[
X \in \mathcal X = \{x_1,\cdots,x_{K_x}\}, \;
Y \in \mathcal Y = \{y_1,\cdots,y_{K_y}\}, \;
U \in \mathcal U = \{u_1,\cdots,u_{K_u}\}.
\]
This fits into the framework above,
with $Z = (X,Y)$ and $W = (X,U)$,
$\mathcal Z = \mathcal X \times \mathcal Y$ and $L = K_x K_y$, and
$\mathcal W = \mathcal X \times \mathcal U$ and $M = K_x K_u$.

For any semiparametric model in Section \ref{sec:semiparametric_pushforward}, the function
\[
m_\theta(x,y,u) = \begin{cases}
    1 & \text{ if } y = h(x,u;\theta), \\
    0 & \text{ otherwise},
\end{cases}
\]
is determined by its outcome equation $h$ in \eqref{eq:model_y_h}, and is known for each $\theta$.
As an example, for the semiparametric binary choice model in Section \ref{sec:example_manski},
\[
m_\theta(x,y,u) = y \times 1\{\beta_0 + \beta_1 X - U \geq 0\} + (1-y) \times 1\{\beta_0 + \beta_1 X - U < 0\}.
\]
This allows us to write the conditional model probabilities as
\[
P(Y = y | X = x) = \sum_{u \in \mathcal U} m_\theta(x,y,u) P(U = u | X = x),
\]
and multiplying both sides by the marginal probability $P(X=x)$ obtains
\[
P(X = x, Y = y) = \sum_{u \in \mathcal U} m_\theta(x,y,u) P(X = x, U = u).
\]

Let  denote the $L \times 1$ model probability vector with rows corresponding to values $z=(x,y)$, and let $p_W$ denote the $M \times 1$ probability vector with rows corresponding to values $w=(x',u)$.
Then we can write 
$p_{Z,(\theta,\gamma)} = \widetilde C_\theta p_W,$
where the pushforward matrix 
\[
    \widetilde C_\theta = (C_{\theta,(x,y),(x',u)})
\]
has rows correspond to values for $(x,y)$ in the same order as $p_Z$, and columns corresponding to values for $w=(x',u)$ in $p_W$, with elements equal to
\begin{align}
\widetilde C_{\theta,(x,y),(x',u)} = 
\begin{cases}
    m_\theta(x,y,u) & \text{ if }x = x', \\
    0               & \text{ otherwise}.
\end{cases}
\label{eq:C_theta_elementwise}
\end{align}

In the next section, we will express $\widetilde C_\theta$ for a specific models, 
and discuss how to formulate the restrictions in terms of $(A_\theta,b_\theta)$.

\subsection{Maximum score}
\label{app:computation_maximum_score}

We consider a binary choice model with a binary regressor and an error term with 3 points of support,
\[
    Y \in \{0,1\}, \; X \in \{x_1,x_2\} \subset \mathbb R, \; U \in \{-1,0,1\}.
\]
We choose the support of $X$ and $U$ small enough so that we can print $(A_\theta,b_\theta,C_\theta)$.
Everything that follows is trivially extended to arbitrary support for $(U,X)$ for the purpose of computation. 

The model is of the maximum score type, and we have analyzed it in detail in Section \ref{sec:example_manski}. 
Recall that the outcome equation is
\[
    Y = 1\{\beta_0 + X \beta_2 - U \geq 0\}.
\]
For this model, 
\begin{align*}
m_\theta(y,x,u) 
&=
\begin{cases} 
    1\{\widetilde x^\prime \theta - u \geq 0\} & \text{ if } y = 1 \\
    1\{\widetilde x^\prime \theta - u < 0\}    & \text{ if } y = 0
\end{cases} \\
&= y \times 1\{\widetilde x^\prime \theta - u \geq 0\} + (1-y) \times 1\{\widetilde x^\prime \theta - u < 0\}.
\end{align*}
where $\widetilde x = (1, x)$.
The pushforward representation 
$p_Z^{(\theta,\gamma)} = \widetilde C_\theta p_W$ in \eqref{eq:pushforward_pmf} is
\tiny
\[
\begin{bmatrix}
    p_{Z}^{(\theta,\gamma)}(x_1,1) \\ 
    p_{Z}^{(\theta,\gamma)}(x_1,0) \\ 
    p_{Z}^{(\theta,\gamma)}(x_2,1) \\ 
    p_{Z}^{(\theta,\gamma)}(x_2,0)
\end{bmatrix}
= 
\begin{bmatrix}
    1\{\widetilde x_1^\prime \theta + 1 \geq 0\} & 1\{\widetilde x_1^\prime \theta \geq 0\} &  1\{\widetilde x_1^\prime \theta - 1 \geq 0\} & 0 & 0 & 0 \\
    1\{\widetilde x_1^\prime \theta + 1 < 0\} & 1\{\widetilde x_1^\prime \theta < 0\} &  1\{\widetilde x_1^\prime \theta - 1 <    0\} & 0 & 0 & 0 \\
    0 & 0 & 0 & 1\{\widetilde x_2^\prime \theta + 1 \geq 0\} & 1\{\widetilde x_2^\prime \theta \geq 0\} &  1\{\widetilde x_2^\prime \theta - 1 \geq 0\} \\
    0 & 0 & 0 & 1\{\widetilde x_2^\prime \theta + 1 < 0\} & 1\{\widetilde x_2^\prime \theta < 0\} &  1\{\widetilde x_2^\prime \theta - 1 < 0\}
\end{bmatrix}
\begin{bmatrix}
    p_{W}(x_1,-1) \\ 
    p_{W}(x_1,0) \\ 
    p_{W}(x_1,1) \\ 
    p_{W}(x_2,-1) \\
    p_{W}(x_2,0) \\ 
    p_{W}(x_2,1)
\end{bmatrix}.
\]
\normalsize

The restrictions in this model are $A_\theta p_{X,U} = b_\theta$,
with
\[
A_\theta
=
\begin{bmatrix}
1 & 1 &  1 & 1 & 1 &  1 \\
1 & 0 & -1 & 0 & 0 &  0 \\
0 & 0 &  0 & 1 & 0 & -1
\end{bmatrix},
\;
b_\theta
=
\begin{bmatrix}
1 \\
0 \\
0
\end{bmatrix},
\]
where the first constraint makes sure that $p_W$ is a probability vector, $\sum_{u,x} P(X=x,U=u) = 1$. 
We do not need to enforce knowledge of the marginal probability $P(X)$, as this information is embedded in $T(\theta)$.
Constraints 2 and 3 ensure that the median is zero, i.e. that 
\[
\sum_{u < 0} P(X=x,U=u) = \sum_{u > 0} P(X=x,U=u),
\]
thus implementing \eqref{medzero}.

\subsubsection{Results: regression coefficients}

Figure \ref{fig:max_score_computation}, panel ``Design 1'', shows the results for the model described here, with
regressor values $(x_1 = 0, x_2 = 1)$,
true regression coefficients $(1,-0.5)$, 
and a grid 
$$\{\theta = (1,\theta_2): \theta_2 \in \{-1.5,-1.49,\cdots,0.5\}\}$$
of 201 values for $\theta$ that imposes the normalization that $\theta_1 = 1$.
In design 1, the true distribution of $X$ is discrete uniform on its support,
and the true conditional distribution of the error term is $P(U=u|X=x) \propto 1/(1+u^2)$.

The results easily generalize to the case where $(X,U)$ have more points of support.
Design 2 has $\mathcal U = \{-5,-4.9,\cdots,4.9,5\}$, but is otherwise like Design 1.
Design 3 is like design 2, but has $\mathcal X = \{-3,-2,\cdots,2,3\}$.
Design 4 is also like designs 2 and 3, but with $\mathcal X = \{-3,-2.75,\cdots,2.75,3\}$.
The results are reported in Figure \ref{fig:max_score_computation}.

\begin{figure}
  \centering

  \begin{subfigure}{0.45\textwidth}
    \centering
    \includegraphics[width=\textwidth]{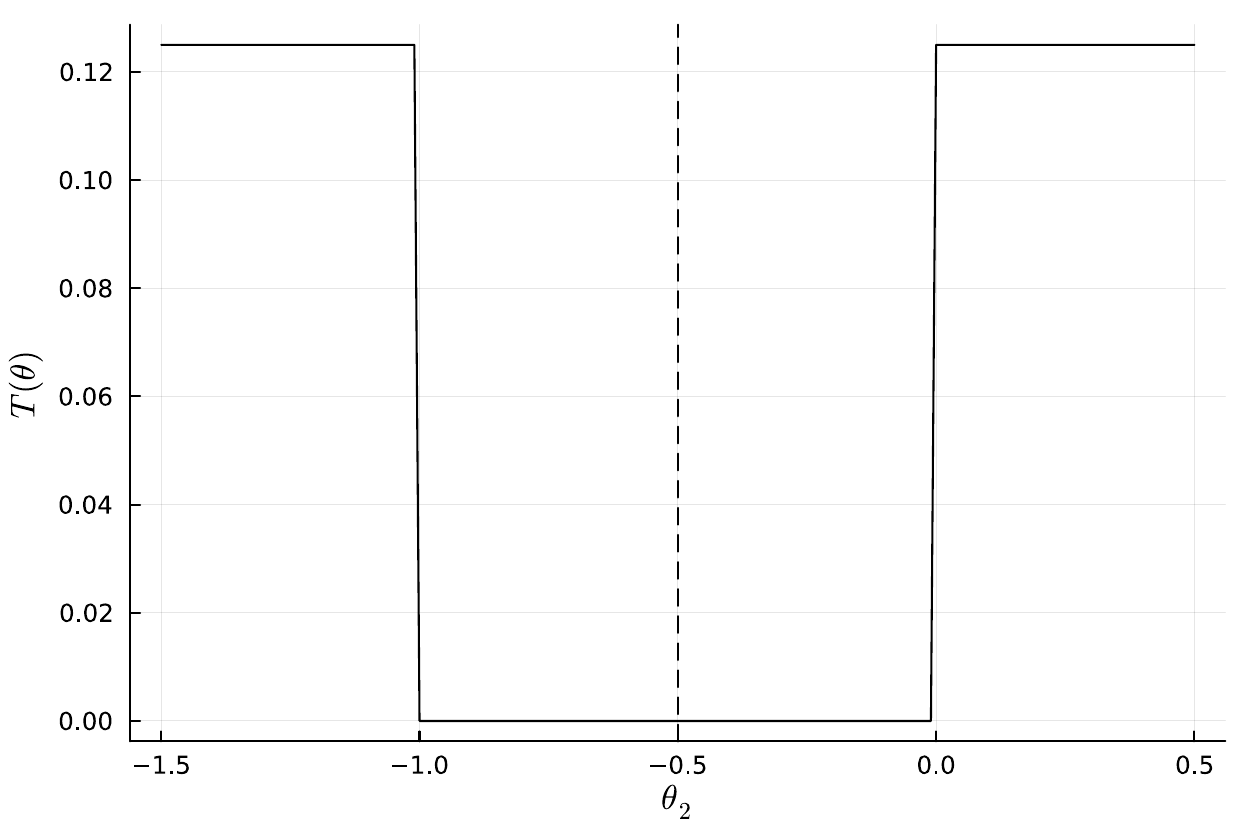} 
    \caption{Design 1}
    \label{fig:maxscore_figure1}
  \end{subfigure}
  \hfill
  \begin{subfigure}{0.45\textwidth}
    \centering
    \includegraphics[width=\textwidth]{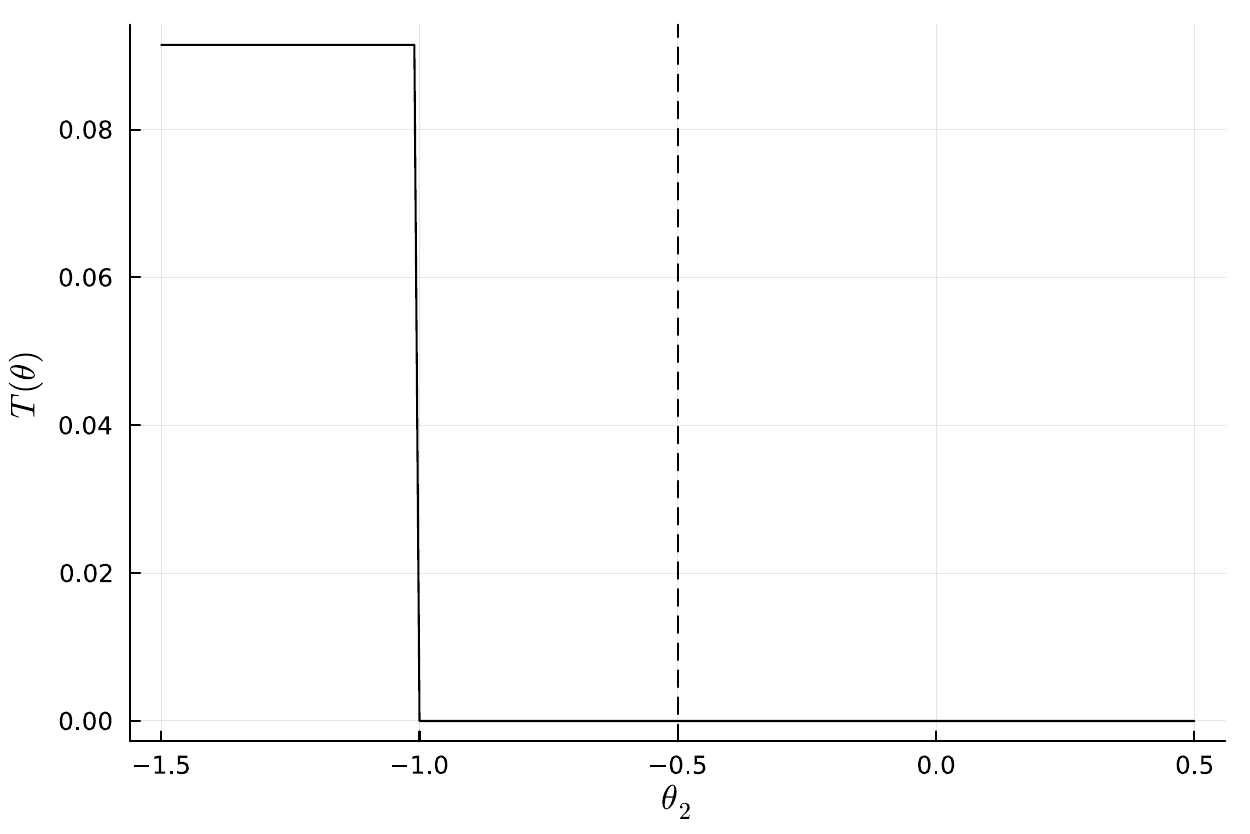} 
    \caption{Design 2: $\mathcal U = \{-5,-4.9,\cdots,5\}$}
    \label{fig:maxscore_figure2}
  \end{subfigure}

  \begin{subfigure}{0.45\textwidth}
    \centering
    \includegraphics[width=\textwidth]{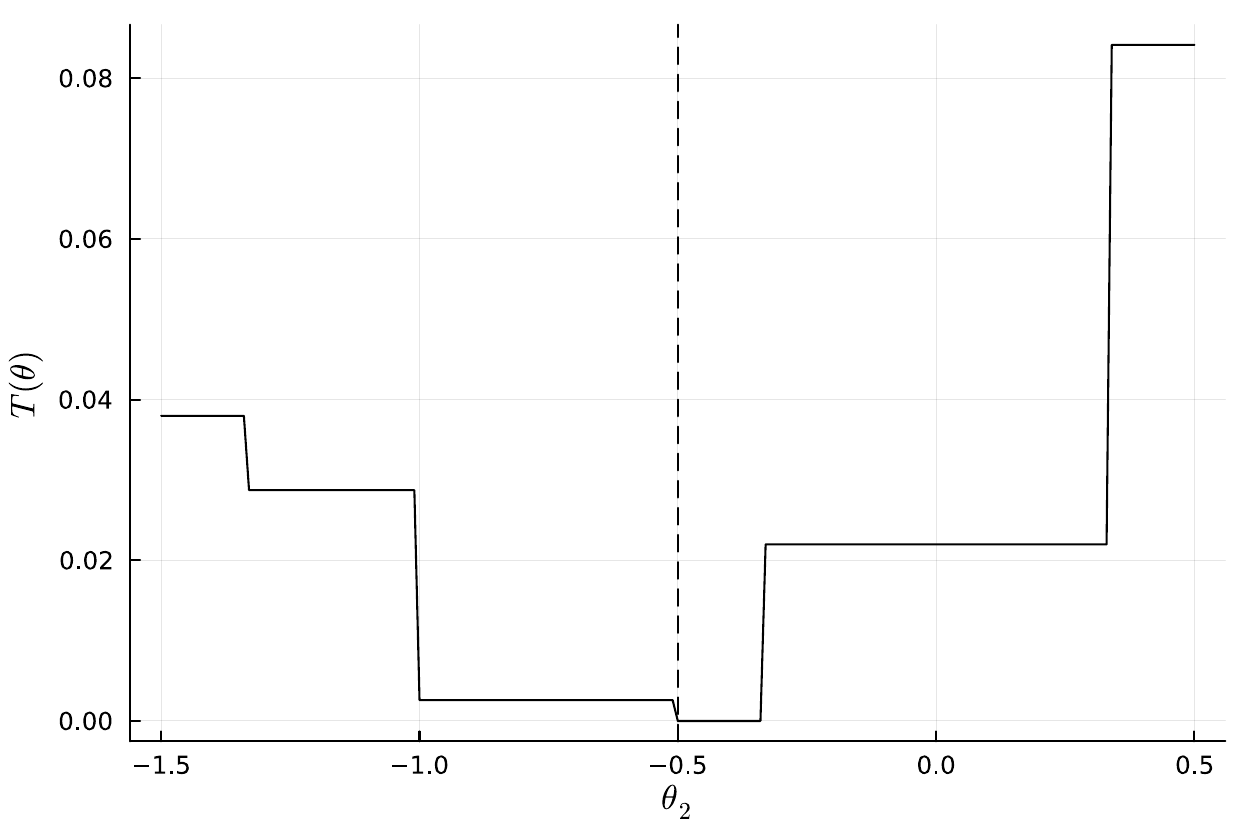}  
    \caption{Design 3: $\mathcal X = \{-3,-2,\cdots,3\}$}
    \label{fig:maxscore_figure3} 
  \end{subfigure}
  \hfill
  \begin{subfigure}{0.45\textwidth}
    \centering
    \includegraphics[width=\textwidth]{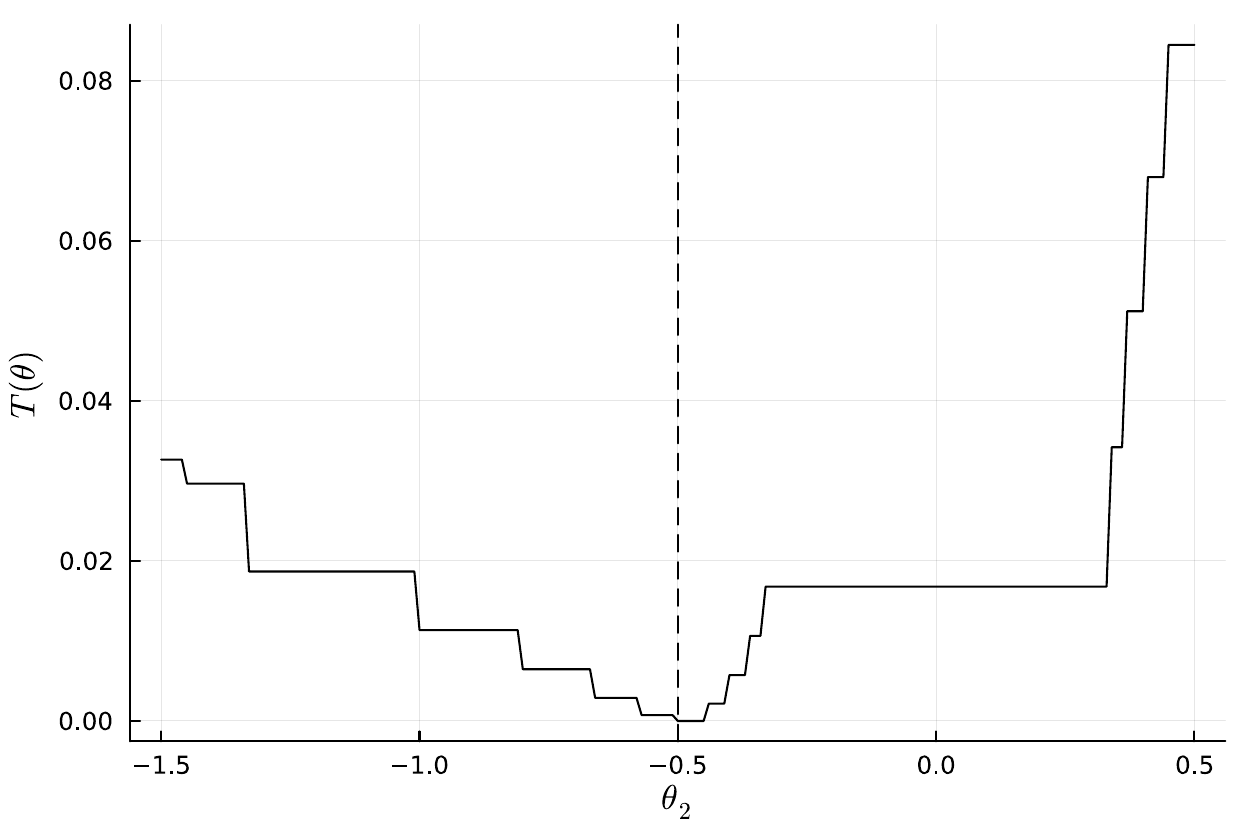} 
    \caption{Design 4: $\mathcal X = \{-3,-2.75,\cdots,3\}$}
    \label{fig:maxscore_figure4}  
  \end{subfigure}  

  \caption{$T(\theta)$ for maximum score.}
  \label{fig:max_score_computation}
\end{figure}

\subsubsection{Partial effects}

We now discuss computation of the identified sets of average partial effects, by describing how they can be incorporated in the constraint set by modifying $A_\theta$ and $b_\theta$. 

Denote by
\[
Y(x,u) = 1\{\beta_0 + x \beta_1 - u \geq 0\}
\]
the value of $Y$ that an individual with error term value $u$ would have under regressor value $x$. 
Then
$\Delta(x,u) = Y(x+1,u) - Y(x,u)$
is the change in $Y$ the individual experiences when increasing their regressor by one unit.

The average partial effect is given by 
\[
E[\Delta(X,U)] = \sum_{(x,u)} \Delta(x,u) p_{X,U}(x,u).
\]
and conditional partial effects are given by 
\[
E[\Delta(X,U)|X=x] = \sum_{u} \Delta(x,u) \frac{p_{X,U}(x,u)}{P(X=x)}.
\]

We now modify the setup described above to incorporate the identification of the conditional partial effect for $X=0$:
\begin{enumerate}
\item expand $\theta = (\theta_1,\theta_2,\theta_3)$, 
where
$\theta_3 = E[\Delta(X,U)|X=0]$
\item pushforward matrix $\widetilde C_{\theta}$ is unchanged
\item add to $A_\theta$ and $b_\theta$ one row for the partial effect.
\end{enumerate}
Only item 3 requires discussion. 
For Design 1, the modified objects are
\[
A_\theta
=
\begin{bmatrix}
1 & 1 &  1 & 1 & 1 &  1 \\
1 & 0 & -1 & 0 & 0 &  0 \\
0 & 0 &  0 & 1 & 0 & -1 \\
\Delta(0,-1) & \Delta(0,0) & \Delta(0,1) &  0 & 0 & 0 
\end{bmatrix}, 
b_\theta = (1, 0, 0, \theta_3 P(X=0)). 
\]
The value $P(X=0)$ is known.

Let $\Theta_{I1}$ be the identified set of values of $(\theta_1,\theta_2)$ determined in the previous section.
Figure \ref{fig:maximum_score_partial_effects} plots $\max_{(\theta_1,\theta_2)} T(\theta)$, 
the maximum value of $T$ over $(\theta_1,\theta_2) \in \Theta_{I1}$,
as a function of the conditional partial effect parameter $\theta_3$. The left panel is for Design 4. The right panel is for Design 4b, which has the conditional distribution of $U|X$ as discrete uniform.

\begin{figure}
    \centering
    \begin{subfigure}{0.45\textwidth}
        \centering
        \includegraphics[width=\textwidth]{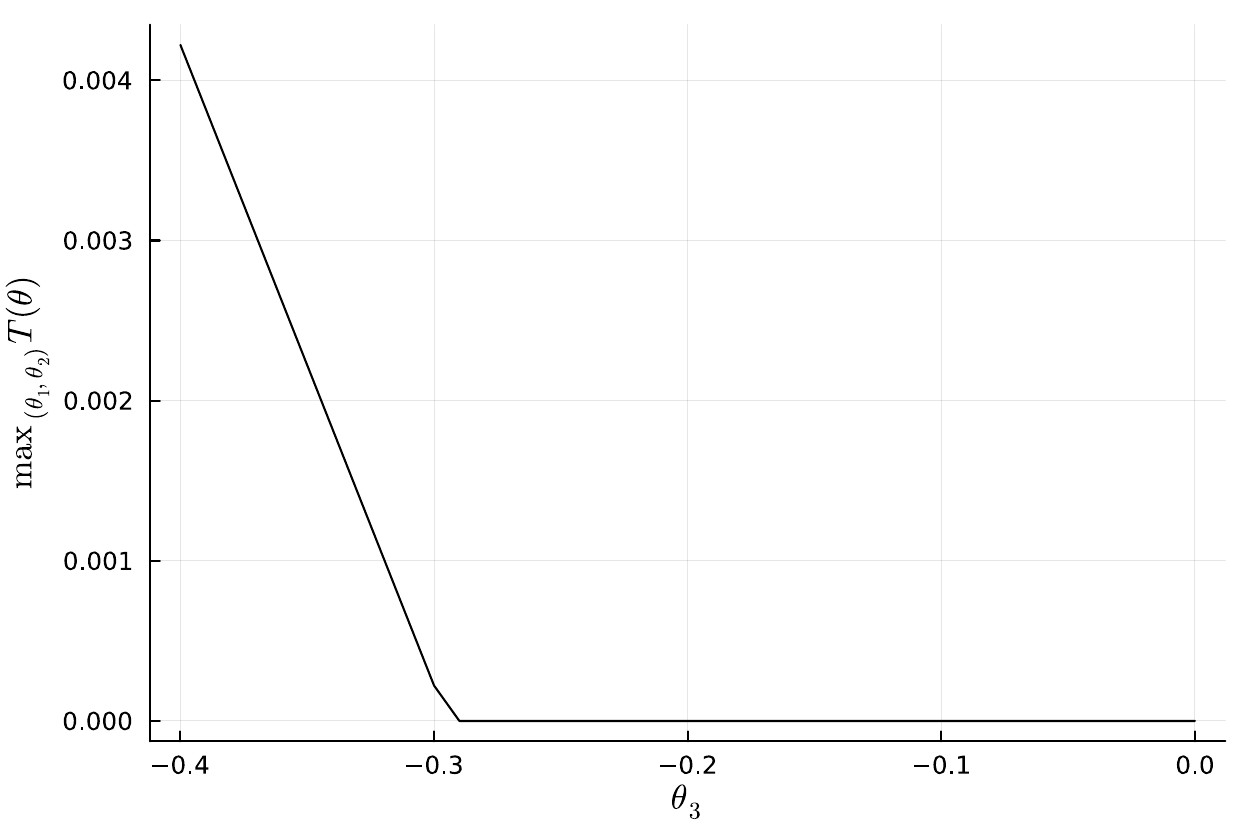} 
        \caption{Design 4}
        \label{fig:maximum_score_partial_effects_4}
    \end{subfigure}
    \hfill
    \begin{subfigure}{0.45\textwidth}
        \centering
        \includegraphics[width=\textwidth]{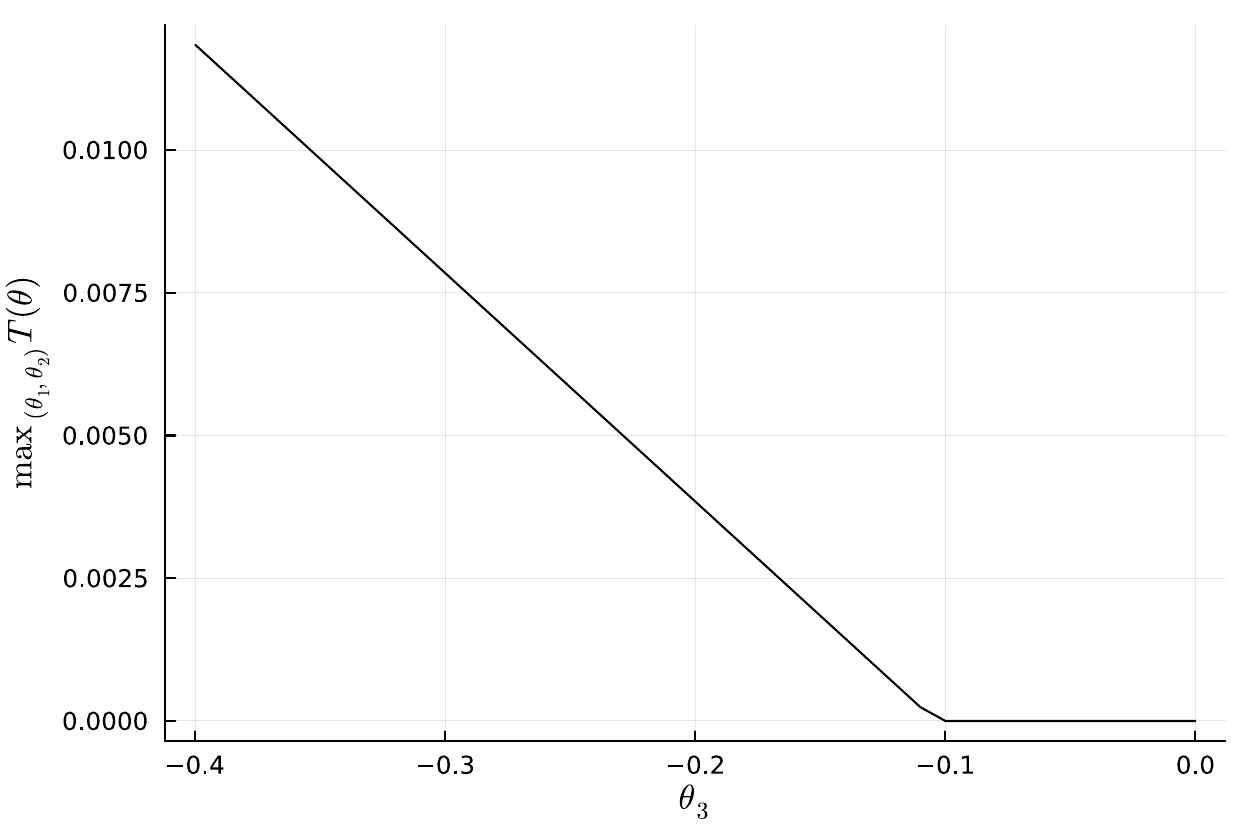} 
        \caption{Design 4b: $U|X \sim$ discrete uniform}
        \label{fig:maximum_score_partial_effects_4b}
    \end{subfigure}
    \caption{Identified set for the partial effect in the maximum score model, Design 4.}
    \label{fig:maximum_score_partial_effects}
\end{figure}

\subsubsection{Computation speed}
\label{app:computation_speed_maxscore_cs}

Each panel in Figure \ref{fig:max_score_computation} requires computing $T(\theta)$ by solving the LP in \eqref{eq:LP_formulation_for_Theta_I}.
for 201 candidate values $\theta$.
No optimizations specific to the model were done: we simply call an off the shelf LP solver (Gurobi 11.0.2).
All timing done on a single core Intel i7-11370H at 3.30GHz.

Table \ref{tab:computation_times}, column $\theta_2$, lists the computation time, in seconds, per evaluation of $T(\theta)$. 
Clearly, the computation of $T(\theta)$ is very fast. 
It appears to be less than linear in $K_u$, and slightly worse than linear in $K_x$.
Table \ref{tab:computation_times}, column $\theta_3$ reports computation times for Figure \ref{fig:maximum_score_partial_effects}. 

\begin{table}
    \centering
    \begin{tabular}{ccccc}
    \toprule
    Design & $\theta_2$ & $\theta_3$ & $K_u$ & $K_x$ \\
    \midrule
    1      & 0.0024 & 0.0016 & 3   & 2 \\
    2      & 0.0033 & 0.0023 & 101 & 2 \\
    3      & 0.0083 & 0.0077 & 101 & 7 \\
    4      & 0.0522 & 0.0536 & 101 & 25 \\
    \bottomrule
    \end{tabular}
    \caption{Time, in seconds, for one evaluation of $T(\theta)$.}
    \label{tab:computation_times}
\end{table}

\subsubsection{Panel data}
\label{app:compute_binary_panel_semiparametric}

Computation for the panel model follows the template of the cross-sectional model, with the following modifications.

\textbf{Puhsforward matrix.} For this model, the pushforward matrix is given by 
\[
\widetilde C_{\theta,(x_1,x_2,y_1,y_2),(x_1',x_2',u_1,u_2}
=
\begin{cases}
m_\theta(x_1,x_2,y_1,y_2,u_1,u_2) & \text{ if }x_1 = x_1',x_2 = x_2', \\
0 & \text{ otherwise},
\end{cases}
\]
and 
\[
m_\theta(x_1,x_2,y_1,y_2,u_1,u_2) = \prod_{t=1}^2 y_t \times 1\{\widetilde x_t^\prime \theta + u_t \geq 0\} + (1-y_t) \times 1\{\widetilde x_t^\prime \theta + u_t < 0\}.
\]

\textbf{Restrictions.} 
The stationarity restriction is given by a matrix $A_1$ that has a row for each value of $X=x$, and for every value $u$ that $U_t$ may take. Every column corresponds to a value of $(X=x,U_1=u_1,U_2=u_2)$. The element
\[
A_{1,(x,u),(x',u_1,u_2)} = 
\begin{cases}
    1  & \text{ if }x=x', u_1 =    u \neq u_2 \\
    -1 & \text{ if }x=x', u_1 \neq u =    u_2 \\
    0  & \text{ otherwise}.
\end{cases}
\]
The corresponding $b_1 = 0$. These restrictions replace the median-zero restrictions.

The adding-up constraint is unchanged. 
The restrictions associated with the counterfactual choice probability must be modified to reflect the panel data setting.

\subsection{Parametric models} \label{sec:comparison_computation}

We now discuss how to extend
the LP approach developed in Section \ref{sec:adversarial} and Appendix \ref{sec:computation_examples} to the parametric models of Section \ref{sec:nonlinearpanels}.
To see this, 
denote regressors, dependent variables, and nonparametric unobservables by
\[
X \in \mathcal X = \{x_1,\cdots,x_{K_x}\}, \;
Y \in \mathcal Y = \{y_1,\cdots,y_{K_y}\}, \;
\alpha \in \mathcal A = \{a_1,\cdots,a_{K_a}\},
\]
with $Z = (X,Y)$ and $\overline W = (X,\alpha)$,
$\mathcal Z = \mathcal X \times \mathcal Y$ and $L = K_x K_y$, and
$\overline{\mathcal W}= \mathcal X \times \mathcal A$ and $M = K_x K_a$.
Compared to the semiparametric regression setup in Section \ref{sec:computation_examples}, we have unobserved heterogeneity $\alpha$ instead of $U$. 

The parametric model starts from $U = (\alpha,V)$. It is given that $V$ is subject to parametric restrictions, and can be integrated out, see Section \ref{sec:nonlinearpanels}.
This treatment of $V$ requires a modification of the $m$ function, to
\[
\widetilde m_\theta(x,y,a) = f_{Y|X,\alpha}(y|x,a;\beta).
\]
Now define the matrix
\begin{align}
\widetilde C_{\theta,(x,y),(x',a)} = 
\begin{cases}
    \widetilde m_\theta(x,y,a) & \text{ if }x = x', \\
    0               & \text{ otherwise}.
\end{cases}
\label{eq:C_theta_parametric}
\end{align}
This is the direct analog of \eqref{eq:C_theta_elementwise} for the parametric model. 

Note that \eqref{eq:C_theta_parametric} does not represent a pushforward operation, but rather it is the matrix representation of an integral operator, see Section \ref{sec:nonlinearpanels}.
Thus, any model in Section \ref{sec:nonlinearpanels} can be represented by 
\[
p_Z = \widetilde C_{\theta} p_{\overline W},
\]
where $p_{\overline W}$ is the pmf associated with a discretized $\overline W$, cf.  \eqref{eq:pushforward_pmf}.
The computational tools in Section \ref{sec:computation_examples} only use linearity.
In conclusion, the LP approach extends to the parametric models in Section \ref{sec:nonlinearpanels}.

\subsubsection{Parametric binary choice models with fixed effects}
\label{app:computation_parametric_binary_T2}

We now obtain explicit expressions for $(A,b,C)$ in the parametric binary choice model with $T=2$. 
In this case, the support of $Y$ is $\mathcal{Y}=\{0,1\}^2 = \left\{ (0,0),(0,1),(1,0),(1,1)\right\}$, and
\begin{equation*}
f_{Y|X,\alpha}(y|X;\beta,\alpha)=\begin{cases}
(1-H(X_{1}^{\prime}\beta+\alpha))(1-H(X_{2}^{\prime}\beta+\alpha)) & \text{if }Y=(0,0)\\
(1-H(X_{1}^{\prime}\beta+\alpha))H(X_{2}^{\prime}\beta+\alpha) & \text{if }Y=(0,1)\\
H(X_{1}^{\prime}\beta+\alpha)(1-H(X_{2}^{\prime}\beta+\alpha)) & \text{if }Y=(1,0)\\
H(X_{1}^{\prime}\beta+\alpha)H(X_{2}^{\prime}\beta+\alpha) & \text{if }Y=(1,1).
\end{cases}
\end{equation*}

Consider a setting with a single value of $X = x = (x_1, x_2)$, where $x_1$ is the period-1 value and $x_2$ is the period-2 value. In the model with only a time dummy, $x = (x_1,x_2) = (0,1).$

Then the $2^T \times K_a$ matrix
\begin{equation}
\widetilde C_\theta(x) =
\begin{bmatrix}
    (1-H(a_1+x_1\beta))(1-H(a_1+x_2\beta)) & \cdots & (1-H(a_M+x_1\beta))(1-H(a_M+x_2\beta)) \\
    (1-H(a_1+x_1\beta))H(a_1+x_2\beta)     & \cdots & (1-H(a_M+x_1\beta))H(a_M+x_2\beta) \\
    H(a_1+x_1\beta))(1-H(a_1+x_2\beta))    & \cdots & H(a_M+x_1\beta)(1-H(a_M+x_2\beta)) \\
    H(a_1+x_1\beta))H(a_1+x_2\beta)        & \cdots & H(a_M+x_1\beta)H(a_M+x_2\beta),
\end{bmatrix}
\label{eq:pushforward_binary_binary}
\end{equation}
maps any distribution of fixed effects to a distribution of $(Y_1,Y_2)$ for a given parameter value $\theta$.
Initially, we do not impose additional assumptions on $p_W$ beyond the adding-up constraints $A_\theta = \iota_M^\prime$ and $b_\theta = 1$.

The description of $C_\theta$, $A_\theta$, and $b_\theta$ is complete, and provides a full description of how to compute the identified set for $\beta$ for the parametric binary choice model with fixed effects, $T=2$, and a single-valued regressor sequence.

We now describe how to generalize this procedure to general $\mathcal X$ and $T$, and how to include the ATE.
To construct $\widetilde C_\theta(x)$ for general $T$, allocate one row for each $y \in \{0,1\}^T$, and fill it with the probabilities given in \eqref{eq:static_binary_model_probabilities}.
For non-singleton $\mathcal X$, 
the matrix $\widetilde C_\theta$ is blockdiagonal, with each block as in \eqref{eq:pushforward_binary_binary}:
\[
\widetilde C_\theta = \mathrm{diag}((\widetilde C_\theta(x))_{x \in \mathcal X}).
\]
Setup $p_W$ and $p_Z^*$ analogously, from blocks for each value of $x$.
In the absence of parameters beyond $\beta$, the constraints are unchanged, $A_\theta = \iota^\prime, \; b_\theta = 1$, except that $A$ has $K_x K_a$ columns. 
Finally, denote by $\overline \tau$ a hypothesized value of the ATE
$E[H(\alpha + \beta) - H(\alpha)$.
Add one constraint by adding a row $A_\tau^\prime$ to $A_\theta$, with $A_\tau = (A_\tau(x,a))$, and $A_\tau(x,a) = H(a + \beta) - H(a)$.
Add $\overline \tau$ as the corresponding element in $b_\theta$.


\newpage 

\section{Inference}
\label{sec:inference}

\subsection{Inferential results}

We present results for inference on the identified set $\Thetaw$ through hypothesis testing and confidence sets for $\Thetam$. Our main insight is that the asymptotic distribution of an empirical analog of the discrepancy function, denoted $T_n(\theta)$, can be estimated by imposing a penalty function on the space of features, combining information on the features themselves with the behavior of $\mathcal{M}_\theta$ local to $\mu^*_Z$. We provide conditions under which a test statistic based on $T_n(\theta)$ has a limiting distribution that can be estimated via the bootstrap. These conditions are applied uniformly over a class of possible distributions $\mu_Z^*$, allowing for the construction of uniformly valid confidence sets for $\theta$.

In Section \ref{inference_lit} we briefly place our inference procedure in the literature. In Section \ref{sec:inference_preconditions}, we introduce the necessary notation and preconditions for our inferential results. The main inferential results are then stated in Section \ref{sec:inference_mainresults}. Given the crucial role of features in both identification and inference, we further discuss their dual role in Section \ref{sec:inference_dualrole}.

\subsection{Related Literature}
\label{inference_lit}

Our discrepancy function naturally leads to a test statistic for valid inference. Our analysis frames the inference problem in terms of an infinite number of moment inequalities. Contributions to the literature on inference with moment inequalities include \citet{GUGGENBERGER2008}, \citet{andrews2010}, \citet{S2012}, \citet{bugni2015}, \citet{CNS2023}.
Our inferential procedure differs by aggregating over potentially infinite moment inequalities into one test statistic - the sample analog of the discrepancy function. Additionally, our test statistic consists of an outer maximization of an inner minimization.\footnote{\citet{loh2024} discuss a general implementation of this strategy to min-max type test statistics.} As the inner minimization is tractable, this simplifies the computational burden of our inference strategy and makes it broadly applicable to a range of models.\footnote{For more on advantages of this, see the discussion and references in \citet{marcoux2024}.} The asymptotic distributions of our test statistic are also straightforward to estimate and may be used to form uniformly valid confidence sets over sets of underlying probability distributions \`a la \citet{im2004}.

\subsubsection{Notation and Preconditions}
\label{sec:inference_preconditions}
We begin by introducing notation required for the inferential results.
For conciseness, we denote by $\mu^*$ the true distribution $\mu^*_Z$ and by $\EE{n}{\cdot}$ the sample expectation based on $n$ observations.
Given a class of distributions $\mathcal{P} \subset \mathcal{P}(\mathcal{Z})$, we say that a set of random variables $\{A_{n,\mu^*}: \mu^* \in \mathcal{P}\}$ is $o_p(1)$ uniformly in $\mu^*$ if, for all $c > 0,$ 
\begin{align} \label{E:op1}
    \limsup_{n \ra \infty } \sup_{\mu^* \in \mathcal{P}} \mu^*(\{|A_{n , \mu^*}| > c \}) = 0,
\end{align}
and $O_p(1)$ uniformly in $\mu^*$ if the right hand side of \eqref{E:op1} can be made arbitrarily small by taking $c$ to be sufficiently large. 

Let $\ell^\infty(\Xi)$ be the set of uniformly bounded maps from $\Xi$ to $\R$, equipped with the uniform norm. Let $\mathrm{BL}_1$ denote the set of all Lipschitz functions from $\ell^\infty(\Xi)$ to $\R$ uniformly bounded by $1$. Similarly, let $\mathrm{BL}_1(\R)$ denote the set of bounded Lipschitz maps from $\R$ to $\R$. A set $\{\mathbb{G}_{n,\mu^*}: \mu^* \in \mathcal{P}\}$ of empirical processes in $\ell^\infty(\Xi)$, each indexed by $\phi \in \Xi$, is uniformly Donsker, or weakly converges uniformly to a set of limit processes $\{\mathbb{G}_{\mu^*}: \mu^* \in \mathcal{P}\}$ in $\ell^\infty(\Xi)$, if \begin{align} \label{E:bl11}
    \limsup_{n \ra \infty} \sup_{\mu^* \in \mathcal{P}}\sup_{h \in \mathrm{BL}_1} | \EE{\mu^*}{h(\mathbb{G}_{n,\mu^*})} - \E{h(\mathbb{G}_{\mu^*})} | = 0.
\end{align}
Probabilistic statements concerning empirical processes are meant to hold in outer measure.\footnote{See \citet{VW1996}, \S1.2, c.f.\ \citet{S2012}, Remark A.1.} 

Bootstrap analogs $\mathbb{G}_{n,\mu^*}^*$ of $\mathbb{G}_{n, \mu^*}$ are uniformly consistent for $\mathbb{G}_{\mu^*}$ if 
\begin{align} \label{E:bootstrapconsistent}
    \sup_{h \in \mathrm{BL}_1} | \EE{n}{h(\mathbb{G}_{n,\mu^*}^*)} - \E{h(\mathbb{G}_{\mu^*})} | = o_p(1) \text{ uniformly in }\mu^*.
\end{align}
If instead \eqref{E:bootstrapconsistent} holds only for a fixed $\mu^* \in \mathcal{P}$, we write $\mathbb{G}_{n,\mu^*}^* \overset{\mu^*}{\rs} \mathbb{G}_{\mu^*}$.

\subsubsection{Asymptotic distribution and bootstrap consistency}
\label{sec:inference_mainresults}

For each $\mu^* \in \mathcal{P}$, the identified set is given by $
    \Thetaw(\mu^* ) = \{\theta \in \Theta: \mu^* \in \overline{\mathcal{M}}_\theta\}$,
and our test statistic is the sample analog of $T(\theta)$ defined as
\begin{align}
\label{E:firstform}
    T_n(\theta) = \sup_{\phi \in \Xi}  \inf_{\mu \in \mathcal{M}_\theta} (\EE{n}{\phi} - \EE{\mu}{\phi}).
\end{align}
It is also convenient to define the penalty function 
\begin{align}
\label{test_penalty}
    \eta_{\theta, \mu^*}( \phi) &\equiv \inf_{\mu \in \mathcal{M}_\theta} (\EE{\mu^*}{\phi} - \EE{\mu}{\phi}), 
\end{align}
which can be estimated by its empirical analog $\eta_{\theta, n} (\phi) \equiv \inf_{\mu \in \mathcal{M}_\theta} (\EE{n}{\phi} - \EE{\mu}{\phi})$. Note that $\eta_{\theta, \mu^*}$ is always nonpositive when $\theta \in \Thetaw(\mu^*)$. We denote the nonpositive part of $\eta_{\theta, n}$ by: 
\begin{align}
    \eta_{\theta, n }^-(\phi) = \min\{ \eta_{\theta, n}(\phi), 0\}.
\end{align}

\begin{asm} 
\label{A:partialident}
    For $\mu^* \in \mathcal{P}$, let $\mathbb{G}_{n, \mu^*}(\phi) = \sqrt{n}(\EE{n}{\phi} - \EE{\mu^*}{\phi})$. Then:
    \begin{enumerate}
        \item $\Xi$ is a convex set of uniformly bounded Borel functions containing $0$. There exists a topology $\mathcal{U}$ on $\Xi$ such that $\Xi$ is compact, and the map $\phi \mapsto \EE{\mu^*}{\phi} - \EE{\mu}{\phi}$ is continuous for all $\theta \in \Thetaw$ and $\mu \in \mathcal{M}_\theta$.
        \item $\{\mathbb{G}_{n, \mu^*}: \mu^* \in \mathcal{P}\}$ are uniformly Donsker with tight limits $\{\mathbb{G}_{\mu^*}: \mu^* \in \mathcal{P}\}$ in $\ell^\infty(\Xi)$.
        \item There exist seminorms $\{\rho_{\mu^*}: \mu^* \in \mathcal{P}\}$ such that $\{\mathbb{G}_{n, \mu^*}: \mu^* \in \mathcal{P}\}$ are asymptotically equicontinuous.
        \item Bootstrap analogs  $\{\mathbb{G}_{n, \mu^*}^*: \mu^* \in \mathcal{P}\}$ are uniformly consistent for $\{\mathbb{G}_{\mu^*}: \mu^* \in \mathcal{P}\}$.
    \end{enumerate}
\end{asm}

The primary aim of Assumption \ref{A:partialident} is to regularize the set of features $\Xi$. The first condition requires that $\Xi$ be compact in a topology that ensures the continuity of the expectations $\EE{\mu}{\phi}$. The second condition imposes that the empirical processes $\mathbb{G}_{n,\mu^*}$ indexed by $\Xi$ are uniformly Donsker. Assumption \ref{A:partialident}(3) further requires the asymptotic equicontinuity of these empirical processes with respect to the chosen topology on $\Xi$. Assumption \ref{A:partialident}(4) provides uniformly consistent bootstrap estimates for $\mathbb{G}_{\mu^*}$. Lemma A.2 of \citet{LS2010} establishes this uniform bootstrap consistency under Assumption \ref{A:partialident}(2) and a slightly stronger, uniform version of Assumption \ref{A:partialident}(3). In Section \ref{sec:inference_dualrole}, we show that a simple regularity condition on the sets $\mathcal{M}_\theta$ is sufficient to guarantee that all of our inference assumptions hold for a suitable choice of $\Xi$.

Assumption \ref{A:partialident} allows us to  characterize the asymptotic distribution of $T_n(\theta)$ in the following proposition.

\begin{proposition}
\label{P:inf}[Inference under partial identification]
    Let Assumption \ref{A:partialident} hold and $\lambda_n \le \sqrt{n}$ be $o(\sqrt{n})$. For $\mu^* \in \mathcal{P}$ and $\theta \in \Theta$, define $K_{\mu^*}(\theta) = \{\phi \in \Xi: \eta_{\theta, \mu^*}(\phi) = \sup_{\phi'\in \Xi } \eta_{\theta, \mu^*}(\phi' )\}$, where $\eta_{\theta, \mu^*}(\phi)$ is defined in \eqref{test_penalty}. Then, 
    \begin{align} \label{E:inf1}
        &\sqrt{n} T_n(\theta)  \le \sup_{\phi \in \Xi}(\mathbb{G}_{n, \mu^*}(\phi) + \lambda_n \eta_{\theta, \mu^*}(\phi))\text{, and} \nonumber\\
        &\sup_{h \in \mathrm{BL}_1(\R)} \Big| \EE{\mu^*}{h(\sup_{\phi \in \Xi} (\mathbb{G}_{n, \mu^*}(\phi) + \lambda_n \eta_{\theta, \mu^*}(\phi)))} - \EE{n}{ h(\sup_{\phi \in \Xi}(\mathbb{G}_{n,\mu^*}^* + \lambda_n \eta_{\theta,n}^-(\phi)))} \Big| = o_p(1)
    \end{align}
    uniformly in $\mu^* \in \mathcal{P}$ and $\theta \in \Thetaw(\mu^*)$.
    Moreover, for all $\mu^*$ and $\theta$, $K_{\mu^*}(\theta) \neq \emptyset$ and  
    \begin{align} 
        &\sqrt{n}( T_n(\theta) - \max_{\phi \in \Xi} \eta_{\theta, \mu^*}(\phi) ) \rs \sup_{\phi \in K_{\mu^*}(\theta)} \mathbb{G}_{n, \mu^*}(\phi)\text{, and } \nonumber\\
        &\sup_{\phi \in \Xi}(\mathbb{G}_{\mu^*}^* + \lambda_n \eta_{\theta,n}^-(\phi)) \overset{\mu^*}{\rs} \sup_{\phi \in K_{\mu^*}(\theta)} \mathbb{G}_{\mu^*}(\phi) \text{ whenever } \theta \in \Thetaw(\mu^*). \label{E:inf2}
    \end{align}
\end{proposition}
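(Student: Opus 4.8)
\textbf{Proof plan for Proposition \ref{P:inf}.}

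The strategy is to decompose $\sqrt{n} T_n(\theta)$ around $\max_{\phi \in \Xi} \eta_{\theta,\mu^*}(\phi)$ and to exploit the continuity/compactness structure in Assumption \ref{A:partialident} to push everything through the continuous mapping theorem and its bootstrap analog. First I would establish the inequality in \eqref{E:inf1}. Writing $\EE{n}{\phi} = \mathbb{G}_{n,\mu^*}(\phi)/\sqrt n + \EE{\mu^*}{\phi}$ inside the definition \eqref{E:firstform} of $T_n(\theta)$, and using the pointwise bound $\inf_\mu(\EE{n}{\phi} - \EE{\mu}{\phi}) \le \mathbb{G}_{n,\mu^*}(\phi)/\sqrt n + \eta_{\theta,\mu^*}(\phi)$ (just move the infimum inside), gives $\sqrt n T_n(\theta) \le \sup_{\phi \in \Xi}(\mathbb{G}_{n,\mu^*}(\phi) + \sqrt n\, \eta_{\theta,\mu^*}(\phi))$; since $\theta \in \Thetaw(\mu^*)$ implies $\eta_{\theta,\mu^*} \le 0$ everywhere and $\lambda_n \le \sqrt n$, replacing $\sqrt n$ by $\lambda_n$ only enlarges the right side, yielding the stated bound. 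The nonemptiness of $K_{\mu^*}(\theta)$ follows from Assumption \ref{A:partialident}(1): $\Xi$ is compact and $\phi \mapsto \eta_{\theta,\mu^*}(\phi)$ is upper semicontinuous as an infimum of continuous maps, hence attains its supremum.

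Next I would identify the exact limit of $\sqrt n(T_n(\theta) - \max_{\phi\in\Xi}\eta_{\theta,\mu^*}(\phi))$. The key observation is that $\sqrt n T_n(\theta) = \sup_{\phi \in \Xi} (\mathbb{G}_{n,\mu^*}(\phi) + \sqrt n\, \eta_{\theta,n}(\phi))$ where $\eta_{\theta,n}(\phi) = \eta_{\theta,\mu^*}(\phi) + (\eta_{\theta,n} - \eta_{\theta,\mu^*})(\phi)$ and $\sqrt n(\eta_{\theta,n} - \eta_{\theta,\mu^*})(\phi)$ is itself controlled by $\mathbb{G}_{n,\mu^*}$ uniformly over $\phi$ (an infimum of affine functionals of the empirical measure, so the $\sqrt n$-rescaled fluctuation is $\sup$/$\inf$-Lipschitz in $\mathbb{G}_{n,\mu^*}$). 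As $\sqrt n \to \infty$, the term $\sqrt n\, \eta_{\theta,\mu^*}(\phi)$ drives the supremum to concentrate on the set $K_{\mu^*}(\theta)$ where $\eta_{\theta,\mu^*}$ is maximal: outside any neighborhood of $K_{\mu^*}(\theta)$, upper semicontinuity plus compactness gives $\eta_{\theta,\mu^*}(\phi) \le \max - \epsilon$ for some $\epsilon > 0$, so those $\phi$ contribute $-\infty$ asymptotically after recentering. Formally this is an argmax/epi-convergence argument: combine the uniform Donsker property (Assumption \ref{A:partialident}(2)) with asymptotic equicontinuity (Assumption \ref{A:partialident}(3)) to get that $\sqrt n(T_n(\theta) - \max_\phi \eta_{\theta,\mu^*}(\phi))$ has the same limit as $\sup_{\phi \in K_{\mu^*}(\theta)}\mathbb{G}_{n,\mu^*}(\phi)$, which by continuity of $\phi \mapsto \mathbb{G}_{\mu^*}(\phi)$ on the compact set $K_{\mu^*}(\theta)$ is the stated weak limit. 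This also gives the joint statement \eqref{E:inf1} relating the law of $\sup_{\phi\in\Xi}(\mathbb{G}_{n,\mu^*}(\phi)+\lambda_n\eta_{\theta,\mu^*}(\phi))$ to its bootstrap counterpart, via Assumption \ref{A:partialident}(4): $\mathbb{G}^*_{n,\mu^*}$ is uniformly consistent for $\mathbb{G}_{\mu^*}$, $\eta_{\theta,n}^- \to \eta_{\theta,\mu^*}$ uniformly (using $\eta_{\theta,\mu^*}\le 0$ so the truncation is asymptotically inactive on the relevant region), and $\lambda_n = o(\sqrt n)$ keeps the penalty from blowing up faster than the process; then the bounded-Lipschitz distance between the two laws vanishes by the extended continuous mapping theorem applied to $h \circ \sup_{\phi\in\Xi}(\cdot + \lambda_n(\cdot))$, uniformly in $\mu^*$ and $\theta$.

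Finally, for the bootstrap limit in \eqref{E:inf2}, I would argue that when $\theta \in \Thetaw(\mu^*)$, the penalized bootstrap process $\sup_{\phi\in\Xi}(\mathbb{G}^*_{\mu^*} + \lambda_n\eta_{\theta,n}^-(\phi))$ concentrates, by the same argmax localization, on $\{\phi : \eta_{\theta,n}^-(\phi) \text{ near its max, which is } 0\}$; consistency of $\eta_{\theta,n}$ for $\eta_{\theta,\mu^*}$ together with $\lambda_n \to \infty$ forces the effective index set down to $K_{\mu^*}(\theta) = \{\phi : \eta_{\theta,\mu^*}(\phi) = 0\}$, and on that set the bootstrap process converges to $\mathbb{G}_{\mu^*}$, giving $\sup_{\phi \in K_{\mu^*}(\theta)}\mathbb{G}_{\mu^*}(\phi)$. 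The main obstacle I anticipate is making the localization uniform: showing that the concentration of the supremum onto (a vanishing neighborhood of) $K_{\mu^*}(\theta)$ happens at a rate that does not depend on $\mu^* \in \mathcal{P}$ or $\theta \in \Thetaw(\mu^*)$, which requires a uniform-in-$(\mu^*,\theta)$ version of the "well-separated maximum" condition — here the compactness of $\Xi$ in the fixed topology $\mathcal{U}$ and the uniform boundedness/equicontinuity hypotheses in Assumption \ref{A:partialident} are doing the real work, and care is needed because $K_{\mu^*}(\theta)$ itself varies with $(\mu^*,\theta)$ and the rescaled penalty $\lambda_n\eta_{\theta,\mu^*}$ need not be uniformly well-separated without an additional argument controlling how flat $\eta_{\theta,\mu^*}$ can be near its maximizer.
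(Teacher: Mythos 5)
Your proposal is correct and follows essentially the same route as the paper's proof: the exact decomposition $\sqrt{n}\,T_n(\theta)=\sup_{\phi\in\Xi}\bigl(\mathbb{G}_{n,\mu^*}(\phi)+\sqrt{n}\,\eta_{\theta,\mu^*}(\phi)\bigr)$ combined with $\eta_{\theta,\mu^*}\le 0$ and $\lambda_n\le\sqrt{n}$ for the first bound, the uniform estimate $\lambda_n\sup_{\phi\in\Xi}|\eta_{\theta,n}^-(\phi)-\eta_{\theta,\mu^*}(\phi)|\le \lambda_n n^{-1/2}\,O_p(1)=o_p(1)$ feeding into the bounded-Lipschitz bootstrap approximation, and the argmax localization onto $K_{\mu^*}(\theta)$ via upper semicontinuity, compactness of $\Xi$, and asymptotic equicontinuity. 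The obstacle you anticipate in your last paragraph is moot: the localization argument is only needed for \eqref{E:inf2}, which the paper states pointwise in $(\mu^*,\theta)$ (note the fixed-$\mu^*$ bootstrap convergence notation there), while the uniform statement \eqref{E:inf1} is a one-sided bound plus a bootstrap approximation of the \emph{un}-localized supremum, so no uniform well-separation of the maximizer of $\eta_{\theta,\mu^*}$ is ever required.
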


\begin{proof}
    The proof can be found in Section \ref{proofofPinf}, page \pageref{proofofPinf}.
\end{proof}

Proposition \ref{P:inf} establishes that $\sqrt{n} T_n(\theta)$ can be bounded by a random term, which is consistently approximated by the bootstrap uniformly in both $\mu^*$ and $\theta$. Equation \eqref{E:inf2} shows that this bootstrap estimate and $\sqrt{n} T_n(\theta)$ share the same limiting distribution, making the bound tight as $n \to \infty$ for any specific $\mu^*$ and $\theta \in \Thetaw$. This upper bound provides critical values for the asymptotic distribution of $\sqrt{n} T_n(\theta)$ uniformly over $\mu^* \in \mathcal{P}$ and $\theta \in \Thetaw(\mu^*)$.

An additional assumption, stating that $\Xi$ is sufficiently rich to detect deviations of $\mu^*$ from $\overline{\mathcal{M}}_\theta$, guarantees the consistency of our testing procedure. If this condition fails, the test can still detect deviations from the set $\Thetam$.\footnote{Specifically, \eqref{E:boot12} below holds with $\Thetam(\mu^*)$ replacing $\Thetaw(\mu^*)$.}

\begin{asm}
    \label{A:partialident2} For all $\mu^* \in \mathcal{P}$ and $\theta \in \Theta$, $\Xi$ is such that $\sup_{\phi \in \Xi}  \inf_{ \mu \in \mathcal{M}_\theta} (\EE{\mu^*}{\phi} - \EE{\mu}{\phi})> 0$ whenever $\mu^* \not\in \overline{\mathcal{M}}_\theta$ (i.e., when $\theta \not\in \Thetaw(\mu^*)$).
\end{asm}

\begin{corollary} 
\label{C:inf2}
    Let Assumption \ref{P:inf} hold, and let $\ve > 0$ be arbitrary. Let $\hat{c}_{1-\alpha}(\theta) = \inf\{x: \PP{n}{\sup_{\phi \in \Xi} (\mathbb{G}_{n,\mu^*}^* + \lambda_n \eta_{\theta,n}^-(\phi)) \le x} \ge  1- \alpha \}$ denote the $(1 - \alpha )^\text{th}$ quantile of $\sup_{\phi \in \Xi} (\mathbb{G}_{n,\mu^*}^* + \lambda_n \eta_{\theta,n}^-(\phi))$. Then, 
    \begin{align}
        \liminf_{n \ra \infty} \inf_{\substack{\mu^* \in \mathcal{P} \\ \theta \in \Thetaw(\mu^*)}} \mu^*(\{\sqrt{n} T_n(\theta) \le \hat{c}_{1-\alpha}(\theta) + \ve \} ) \ge 1 - \alpha. \label{E:crit1}
    \end{align}
If Assumption \ref{A:partialident2} also holds and $\theta \not\in \Thetaw(\mu^*)$, then 
\begin{align} \label{E:boot12}
    \limsup_{n \ra \infty} \mu^*(\{ \sqrt{n} T_n(\theta) \le \hat{c}_{1-\alpha}(\theta) + \ve\} ) = 0.
\end{align}
\end{corollary}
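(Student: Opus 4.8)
The plan is to derive Corollary \ref{C:inf2} from Proposition \ref{P:inf} in two pieces. For \eqref{E:crit1} (uniform size control over $\theta\in\Thetaw(\mu^*)$), the starting point is the first inequality in \eqref{E:inf1}, which gives $\sqrt n\, T_n(\theta)\le \sup_{\phi\in\Xi}(\mathbb G_{n,\mu^*}(\phi)+\lambda_n\eta_{\theta,\mu^*}(\phi))$ deterministically, so it suffices to control the quantiles of the right-hand side by $\hat c_{1-\alpha}(\theta)$. The bootstrap object defining $\hat c_{1-\alpha}(\theta)$ is $\sup_{\phi\in\Xi}(\mathbb G^*_{n,\mu^*}+\lambda_n\eta^-_{\theta,n}(\phi))$; by the second line of \eqref{E:inf1} the bounded-Lipschitz distance between the law of the ``infeasible'' statistic $\sup_{\phi}(\mathbb G_{n,\mu^*}+\lambda_n\eta_{\theta,\mu^*})$ and the law of the bootstrap statistic is $o_p(1)$ uniformly in $\mu^*$ and $\theta\in\Thetaw(\mu^*)$. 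The step I would carry out carefully here is the standard ``Polya-type'' argument converting a uniform bound on a bounded-Lipschitz metric into a one-sided bound on quantiles: if two sequences of laws are asymptotically close in $\mathrm{BL}_1(\R)$ and one of them has a continuous limit (which it does here, since by \eqref{E:inf2} the limit is the law of $\sup_{\phi\in K_{\mu^*}(\theta)}\mathbb G_{\mu^*}(\phi)$, a supremum of a tight Gaussian process, hence has no atoms except possibly at points that the $\ve$-slack absorbs), then the $(1-\alpha)$ quantile of one eventually lies below the $(1-\alpha)$ quantile of the other plus $\ve$. Applying this with the $\ve$-cushion and taking $\liminf_n\inf_{\mu^*,\theta}$ yields \eqref{E:crit1}. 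I would phrase this via the event $\{\sqrt n T_n(\theta)\le \hat c_{1-\alpha}(\theta)+\ve\}$, bounding its complement's probability by $\alpha + o(1)$ uniformly, using that the infeasible statistic dominates $\sqrt n T_n(\theta)$ and that the quantile comparison is uniform.

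For \eqref{E:boot12} (consistency against $\theta\notin\Thetaw(\mu^*)$), the mechanism is divergence of the test statistic against a bounded critical value. Under Assumption \ref{A:partialident2}, $\mu^*\notin\overline{\mathcal M}_\theta$ forces $\sup_{\phi\in\Xi}\eta_{\theta,\mu^*}(\phi)=:T(\theta)>0$. Then $T_n(\theta)\to T(\theta)>0$ (this follows from $\mathbb G_{n,\mu^*}=O_p(1)$ in $\ell^\infty(\Xi)$ together with the continuity/compactness in Assumption \ref{A:partialident}(1), giving $\sup_\phi(\EE{n}{\phi}-\EE{\mu}{\phi})$ uniformly close to $\sup_\phi\eta_{\theta,\mu^*}(\phi)$; alternatively one directly notes $T_n(\theta)\ge \eta_{\theta,n}(\phi_0)\to \eta_{\theta,\mu^*}(\phi_0)$ for the maximizing $\phi_0$), so $\sqrt n\,T_n(\theta)\to+\infty$ in probability. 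Meanwhile $\hat c_{1-\alpha}(\theta)=O_p(1)$: the bootstrap statistic $\sup_\phi(\mathbb G^*_{n,\mu^*}+\lambda_n\eta^-_{\theta,n}(\phi))$ is bounded above by $\sup_\phi \mathbb G^*_{n,\mu^*}(\phi)$ (since $\eta^-_{\theta,n}\le 0$), which is $O_p(1)$ by the uniform bootstrap consistency in Assumption \ref{A:partialident}(4) and tightness of $\mathbb G_{\mu^*}$; hence its $(1-\alpha)$ quantile is bounded in probability. Combining, $\mu^*(\{\sqrt n T_n(\theta)\le \hat c_{1-\alpha}(\theta)+\ve\})\to 0$, which is \eqref{E:boot12}.

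The main obstacle I anticipate is the uniform quantile-comparison step for \eqref{E:crit1}: turning the uniform $\mathrm{BL}_1(\R)$ closeness in \eqref{E:inf1} into a uniform (over $\mu^*\in\mathcal P$ and $\theta\in\Thetaw(\mu^*)$) statement about the $(1-\alpha)$ quantiles requires care because the limiting law $\mathcal L(\sup_{\phi\in K_{\mu^*}(\theta)}\mathbb G_{\mu^*}(\phi))$ varies with $(\mu^*,\theta)$ and could in principle have an atom at its $(1-\alpha)$ quantile, which is why the $\ve$-slack is built into the statement; I would handle this by the usual argument that for any $\ve>0$ the quantile function is continuous ``up to $\ve$'' and the $o_p(1)$-in-$\mathrm{BL}_1$ convergence, being uniform, forces the feasible and infeasible quantiles to agree within $\ve$ for $n$ large, uniformly. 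A secondary (routine) point is verifying that $\sqrt n T_n(\theta)$ is dominated by the infeasible statistic on a set of probability tending to $1$ uniformly — but this is immediate from the deterministic inequality in the first line of \eqref{E:inf1}, so the only real work is the quantile transfer.
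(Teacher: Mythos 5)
Your proposal is correct and follows essentially the same route as the paper: the deterministic domination $\sqrt{n}T_n(\theta)\le\sup_{\phi}(\mathbb{G}_{n,\mu^*}(\phi)+\lambda_n\eta_{\theta,\mu^*}(\phi))$ from the first line of \eqref{E:inf1}, the uniform $\mathrm{BL}_1(\R)$ transfer to the bootstrap statistic from the second line, and divergence of $\sqrt{n}T_n(\theta)$ against an $O_p(1)$ critical value for \eqref{E:boot12}. The paper implements your ``Polya-type'' quantile step concretely by sandwiching the indicator between data-dependent Lipschitz functions $h_{n,\mu^*}$ supported on $[\hat{c}_{1-\alpha}(\theta),\hat{c}_{1-\alpha}(\theta)+\ve]$ (valid because the bound in \eqref{E:inf1} is uniform over $\mathrm{BL}_1(\R)$), so no continuity of the limit law is needed --- the $\ve$-slack does all the work, exactly as you suspected.
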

\begin{proof}
    The proof can be found in Section \ref{proofofCinf2}, page \pageref{proofofCinf2}.
\end{proof}

Corollary \ref{C:inf2} provides conditions for constructing uniformly valid confidence sets for $\Thetaw(\mu^*)$. Let
\begin{align*}
    \mathrm{CS}_{\alpha}(\mu^*) =   \{\theta \in \Theta: \sqrt{n} T_n(\theta) \le \hat{c}_{1-\alpha}(\theta) + \ve\} 
\end{align*}
where $\hat{c}_{1-\alpha}(\theta)$ is a bootstrap-based critical value. Under the assumptions of Corollary \ref{C:inf2}, $\mathrm{CS}_{\alpha}(\mu^*)$ achieves the coverage properties of the confidence intervals proposed in \citet{im2004} uniformly over $\mu^* \in \mathcal{P}$, ensuring it contains the true value of the parameter $\theta \in \Thetaw(\mu^*)$ with probabilities uniformly bounded below by $1-\alpha$. 

We calculate $T_n(\theta)$ and $\hat{c}_{1-\alpha}(\theta)$ for a range of $\theta$ in a variation of DGP2 of Section \ref{sec:numerical_panel} with $\mathcal{X} = \{0,1,2\}$ and $\theta = \beta_2$. 
Setting $\theta_0 = 0.7$, $n = 2000$, $\lambda_n = 10$, and $\alpha = 0.05$, Figure \ref{fig:inf_0} plots the discrepancy function $T(\theta)$ alongside the estimated probability that $\sqrt{n}T_n(\theta)$ exceeds $\hat{c}_{1-\alpha}(\theta) + 0.001$ across a range of candidate values $\theta$. Figure \ref{fig:inf_1} repeats this exercise for several $\theta_0$. In these simulations, tests for the null hypothesis $H_0: \theta \in \Theta_{\mathrm{I}}$ based on \eqref{E:crit1} have uniform size control over $\Theta_{\mathrm{I}}$ and ample statistical power. Because $\hat{c}_{1-\alpha}$ can also be calculated by solving a linear program in each bootstrap sample, implementation of the tests is also computationally efficient. 

The fact that the tests depicted in Figure \ref{fig:inf} appear conservative is consistent with a central finding of \cite{Chamberlain2010}. When $V$ is not constrained to be logistically distributed, the sets $\mathcal{M}_\theta$ of model probabilities associated with DGP2 have nonempty interior, and a typical point $\theta$ in $\Theta_\mathrm{I}$ has $\mu^*$ in the interior of $\mathcal{M}_\theta$. When this is the case, $K_{\mu^*}(\theta)$ is the singleton $\{0\}$, and the test statistic $\sqrt{n} T_n(\theta)$ described by Proposition \ref{P:inf} converges in probability to $0$. Size control then comes from the tolerance parameter $\ve$ in Corollary \ref{C:inf2}, which induces Type I error rates less than the nominal rate over $\Theta_{\mathrm{I}}$. 

\begin{figure}
    \centering
    \begin{subfigure}{0.5\textwidth}
        \centering
        \includegraphics[width=\textwidth]{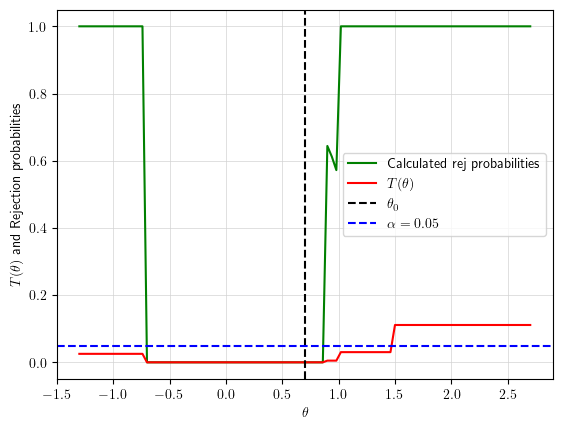} 
        \caption{Rejection probabilities vs $T(\theta)$ for $\theta_0 = 0.7$.}
        \label{fig:inf_0}
    \end{subfigure}\hfill
    \begin{subfigure}{0.5\textwidth}
        \centering
        \includegraphics[width=\textwidth]{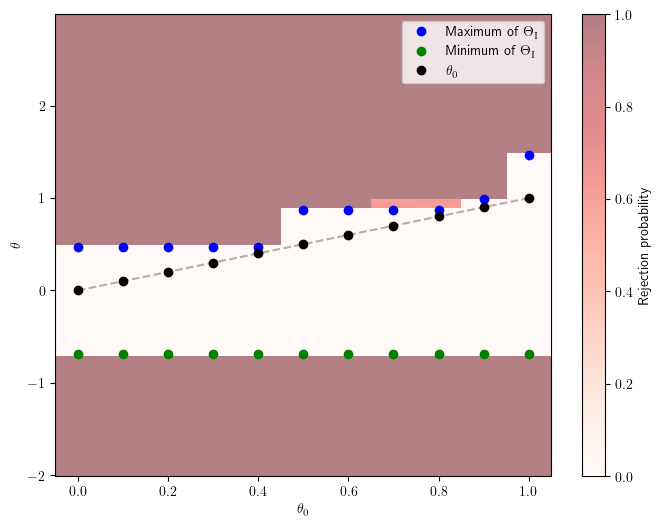} 
        \caption{Rejection probabilities with identified sets.}
        \label{fig:inf_1}
    \end{subfigure}
    \caption{Inference with bootstrap critical values in DGP2 of Section \ref{sec:numerical_panel}.}
    \label{fig:inf}
\end{figure}

\begin{remark}
    The assumptions employed in the appendix of \citet{loh2024} could be applied to guarantee the uniform convergence of the distribution of $\mathbb{G}_{n, \mu^*}(\phi) + \lambda_n \eta_{\theta, \mu^*}(\phi)$ to the distribution $\sup_{\phi \in K_{\mu^*}(\theta)} \mathbb{G}_{\mu^*}(\phi)$. Then, a straightforward uniform equicontinuity assumption on the distributions of $\sup_{\phi \in K_{\mu^*}(\theta)} \mathbb{G}_{\mu^*}(\phi)$, $\mu^* \in \mathcal{P}$ would be sufficient to take $\ve = 0$ in the statement of Corollary \ref{C:inf2} (see Assumption B.7 in \citet{Zhu2020}). 
\end{remark}

\subsubsection{The role of features in identification and inference}
\label{sec:inference_dualrole}

The functions $\phi \in \Xi$ are central to both identification and inference. On one hand, the identification problem requires a rich enough set $\Xi$ to capture the discrepancies between the observed measures and the model probabilities. On the other hand, the $\phi$ functions must be sufficiently regularized to construct test statistics for hypothesis testing. This section provides relaxations $\Xi$ of the set $\Phi_b$ introduced in Section \ref{sec:mainIDresult} which satisfy both criteria.  



Theorem \ref{P:main} establishes that the set of features $\Xi = \Phi_b(\mathcal{Z})$ is sufficient to detect deviations of $\mu^*$ from $\overline{M}_\theta$
Under this choice of $\Xi$, $T(\theta)$ represents the total variation distance between $\mu^*$ and $\mathcal{M}_\theta$. 
Under additional assumptions on the elements of $\mathcal M_\theta$, it is possible to consider a smaller set of functions that can be used to characterize the identified set. 


\begin{corollary} 
\label{C:reg1}
    Let $p \in (1,\infty]$ with $1/p + 1/q = 1$. Suppose Assumption \ref{A:one} holds, and let $f[\mu]$ denote the density of $\mu$ with respect to $\lambda_\theta$. If $f[\mu^*] \in L^p(\mathcal{Z}, \lambda_\theta)$ and $\sup_{\mu \in \mathcal{M}_\theta} \| f[\mu] \|_{L^p(\mathcal{Z}, \lambda_\theta)} < \infty$, let $\Xi \subseteq L^q$ be any set of functions such that the $L^p(\mathcal{Z}, \lambda_\theta)$-closure of the set of positive dilations $\bigcup_{c > 0} c \Xi$ includes all compactly supported Borel functions $\phi: \mathcal{Y} \to [0,1]$. Then $\theta$ is in the identified set if and only if, for all $\phi \in \Xi$, 
    \begin{align} 
    \label{E:Creg1}
        \EE{\mu^*}{\phi} \le \sup_{\mu \in \mathcal{M}_\theta} \EE{\mu}{\phi}.
    \end{align}
\end{corollary}
\begin{proof}[Proof of Corollary \ref{C:reg1}]
Suppose first that $\theta$ is in the identified set, so that \eqref{E:ineqthm} holds for all compactly supported and Borel $\phi: \mathcal{Z} \ra [0,1]$.  Write $C = \sup_{\mu \in \mathcal{M}_\theta} \norm{ f[\mu]}_{L^p}$. By H\"{o}lder's inequality, for all $c > 0$ one has 
\begin{align}\label{E:reg1}
    \sup_{\mu \in \mathcal{M}_\theta} |\EE{\mu}{c\phi} - \EE{\mu}{c_1\phi + c_2}| \le C \norm{c \phi - c_1\phi - c_2}_{L^q}
\end{align}
for $c_1, c_2 \in \R$, $\phi \in \Xi$. Assume that $C \ge \norm{f[\mu^*]}_{L^p}$ so that \eqref{E:reg1} pertains to $|\EE{\mu^*}{c\phi} - \EE{\mu^*}{c_1\phi + c_2}|$ as well. Now, pick $c, c_1 > 0$, $c_2 \in \R$, and $\phi: \mathcal{Z} \ra [0,1]$ such that $\norm{c\phi - c_1\phi - c_2}_{L^q} < \ve$ for any arbitrary $\ve > 0$ (for instance, consider truncating $\phi$ above and below). Then \eqref{E:ineqthm} implies that 
\[
\EE{\mu^*}{c \phi}- C \ve \le \EE{\mu^*}{c_1\phi + c_2} \le \sup_{\mu \in \mathcal{M}_\theta} \EE{\mu}{c_1\phi + c_2} \le \sup_{\mu \in \mathcal{M}_\theta} \EE{\mu}{c \phi} + C\ve.
\]
As $\ve$ was arbitrary, \eqref{E:Creg1} follows. In the same way, by approximating functions $\phi: \mathcal{Z} \ra [0,1]$ by functions $c\phi \in c \Xi, c > 0$ in the $L^q$-norm, the ``if" part of the Corollary may be established. 
\end{proof}

If $\mathcal{Z} = \R^{d_Z}$ and $\lambda_\theta$ is the Lebesgue measure, Corollary \ref{C:reg1} allows for $\Xi$ to be the set of smooth, compactly supported functions on $\R^{d_Z}$ (\cite{F1999}, \S8.2) bounded in magnitude by $1$ (or a subset whose positive dilations include all such functions). If $\mathcal{Z}$ is a bounded domain in $\R^{d_Z}$, $\Xi$ can be a set of functions with bounded entropy (\cite{VW1996}, \S2.7) corresponding to various Sobolev constraints.



Corollary \ref{C:reg1} may be employed to provide a host of examples which satisfy our inferential assumptions.
Assumption \ref{A:partialident} requires that \(\Xi\) be structured in a way that the empirical processes indexed by \(\phi\) converge uniformly to a limit. This condition is crucial for \(T_n(\theta)\) to have a well-defined limiting distribution. Assumption \ref{A:partialident2} further stipulates that \(\Xi\) must be rich enough to detect deviations of \(\mu^*\) from \(\overline{\mathcal{M}}_\theta\), ensuring the consistency of the testing procedure.

When \(\mathcal{Z}\) is a compact subset of Euclidean space, Example \ref{Ex:sobolev} illustrates a choice of \(\Xi\) that satisfies the assumptions required for both sharp identification and valid inference.

\begin{example} \label{Ex:sobolev}
    Suppose the measures \(\mu \in \mathcal{M}_\theta, \theta \in \Theta\) all share a compact support \(\mathcal{Z} \subseteq \R^{d_Z}\). Assume \(\mathcal{Z}\) is convex with a nonempty interior (or use its convex hull). By Theorems 2.7.1 and 2.8.3 in \citet{VW1996}, the empirical processes \(\mathbb{G}_{n,\mu^*}\) are uniformly Donsker if \(\Xi\) is defined as the class of smooth functions \(\phi\) over \(\mathcal{Z}\) that are partially differentiable up to order \(L > d_Z / 2\) and satisfy the following Lipschitz condition:
    \begin{align} \label{E:sobolev}
        \norm{\phi}_{L,\mathcal{Z}} \equiv \max_{
        \ell_\cdot < \underline{L}} \sup_{z \in \mathrm{int}(\mathcal{Z})} |D^\ell \phi(z)| + \max_{\ell_\cdot = \underline{L}} \sup_{z,z' \in \mathrm{int}(\mathcal{Z})} \frac{|D^\ell \phi(z) - D^\ell \phi(z')|}{\norm{z - z' }^{L - \underline{L}}} \le 1,
    \end{align}
    where \(\ell = (\ell_1, \ldots, \ell_{d_Z})\) is a multiindex, \(\ell_\cdot = \sum_{\iota =1 }^{d_Z} \ell_\iota\), \(\underline{L}\) is the greatest integer strictly smaller than \(L\), and \(D^\ell\) denotes the differential operator associated with \(\ell\). This norm \(\norm{\cdot}_L\) makes \(\Xi\) a convex set that is also totally bounded under the sup-norm.
\end{example}

Lemma \ref{L:reg1} now extends this example, showing that compactness of \(\mathcal{Z}\) is not necessary to define a suitable class \(\Xi\). 


\begin{lemma}
\label{L:reg1}
    Suppose every \(\mu \in \bigcup_{\theta \in \Theta} \mathcal{M}_\theta\) has a density \(f[\mu]\) with respect to some \(\sigma\)-finite Borel measure \(\lambda\) on a second-countable space \(\mathcal{Z}\). Further, assume \(\mathcal{M}_\theta\) is convex for each \(\theta \in \Theta\) and \(\sup_{\mu \in \mathcal{M}_\theta} \norm{f[\mu]}_{L^p(\mathcal{Z}, \lambda)} < \infty\) for some \(p \in (1, \infty]\). Then, there exists a set \(\Xi\) that satisfies Assumptions \ref{A:partialident} and \ref{A:partialident2}.

    If, in addition, \(\mathcal{Z}\) is a bounded subset of \(\R^{d_Z}\) and \(\lambda\) is bounded on bounded sets, \(\Xi\) can be taken as the \(L^\infty(\mathcal{Z}, \lambda)\)-closure of the restrictions of smooth, compactly supported functions \(\phi \in C_0^\infty(\R^{d_Z})\) satisfying \(\norm{ \phi}_{L,\mathcal{Z}} \le 1\) for any \(L > d_Z/2\), where \(\norm{\cdot}_{L,\mathcal{Z}}\) is as defined in \eqref{E:sobolev}.
\end{lemma}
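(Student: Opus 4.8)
The plan is to produce one set $\Xi$ that does double duty. On the identification side, I will make $\Xi$ satisfy the hypotheses of Corollary \ref{C:reg1}; its contrapositive then says that whenever $\theta \notin \Thetaw(\mu^*)$ there is $\phi \in \Xi$ with $\EE{\mu^*}{\phi} > \sup_{\mu \in \mathcal M_\theta}\EE{\mu}{\phi}$, i.e.\ $\sup_{\phi \in \Xi}\inf_{\mu \in \mathcal M_\theta}(\EE{\mu^*}{\phi}-\EE{\mu}{\phi}) > 0$, which is exactly Assumption \ref{A:partialident2}. On the inference side, I will make $\Xi$ a convex, uniformly bounded class containing $0$ that is (i) totally bounded, hence compact, in the sup-norm topology $\mathcal U$, and (ii) a universal Donsker class, meaning it has a finite uniform entropy integral. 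Given (i)--(ii), Assumption \ref{A:partialident} follows from standard empirical-process results: the functional $\phi \mapsto \EE{\mu^*}{\phi}-\EE{\mu}{\phi}$ is $2$-Lipschitz for the sup-norm (because $\mu^*$ and $\mu$ are probability measures), hence $\mathcal U$-continuous, giving \ref{A:partialident}(1); the uniform entropy bound makes $\{\mathbb G_{n,\mu^*}\}_{\mu^*\in\mathcal P}$ uniformly Donsker with tight Gaussian limits (\citet{VW1996}, Theorem 2.8.3), giving \ref{A:partialident}(2), and uniformly asymptotically equicontinuous for the $L^2(\mu^*)$ seminorms $\rho_{\mu^*}$, giving \ref{A:partialident}(3); and uniform bootstrap consistency, \ref{A:partialident}(4), then follows from Lemma A.2 of \citet{LS2010}, whose slightly stronger equicontinuity hypothesis is again implied by the uniform entropy bound. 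Note also that since the $f[\mu]$ are probability densities, $\sup_{\mu}\norm{f[\mu]}_{L^\infty}<\infty$ implies $\sup_{\mu}\norm{f[\mu]}_{L^{p'}}<\infty$ for every finite $p'$ by interpolation, so we may and do assume $p<\infty$ throughout.

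\textbf{General case.} Because $\mathcal Z$ is second-countable and $\lambda$ is $\sigma$-finite, $L^p(\mathcal Z,\lambda)$ is separable, so I can fix a countable family $\{\phi_k\}_{k\ge1}$ of compactly supported Borel functions $\phi_k:\mathcal Z\to[0,1]$ (chosen, as is possible under the mild requirement that $\lambda$ be finite on compacta, to lie in $L^p\cap L^q$) whose linear span is $L^p(\mathcal Z,\lambda)$-dense in the set of all compactly supported $[0,1]$-valued Borel functions. I then set $\Xi$ to be the sup-norm closed convex hull of $\{0\}\cup\{\pm 2^{-k}\phi_k:k\ge1\}$ (intersected with $L^q$ if $\lambda$ is infinite). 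Then $\bigcup_{c>0}c\Xi$ contains every finite linear combination of the $\phi_k$, so its $L^p$-closure contains all compactly supported $[0,1]$-valued Borel functions, which is the richness condition of Corollary \ref{C:reg1}; and $\Xi\subseteq L^q$, $\Xi$ is convex, $0\in\Xi$, and $\Xi$ is uniformly bounded in sup-norm by $1$. It is sup-norm compact, being the closed convex hull of the sup-norm-compact set $\{0\}\cup\{\pm2^{-k}\phi_k\}$ (whose only accumulation point is $0$). Finally, truncating a typical element $\sum_k b_k(2^{-k}\phi_k)$, $\sum_k|b_k|\le1$, at a level $K$ of order $\log(1/\varepsilon)$ costs sup-norm error at most $2^{-K}$, and the truncations form the image of the $\ell^1$-ball of $\R^K$ under a norm-nonincreasing map, so $\log N(\varepsilon,\Xi,\norm{\cdot}_\infty)\le C(\log(1/\varepsilon))^2$; the same bound holds for $L^2(Q)$-covering numbers uniformly over probability measures $Q$, and $\int_0^1\log(1/\varepsilon)\,d\varepsilon<\infty$, giving the finite uniform entropy integral needed in (ii).

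\textbf{Euclidean case.} When $\mathcal Z\subseteq\R^{d_Z}$ is bounded and $\lambda$ is finite on bounded sets, I take $\Xi$ exactly as in the statement: the $L^\infty(\mathcal Z,\lambda)$-closure of $\{\phi|_{\mathcal Z}:\phi\in C_0^\infty(\R^{d_Z}),\ \norm{\phi}_{L,\mathcal Z}\le1\}$ for a fixed $L>d_Z/2$, with $\norm{\cdot}_{L,\mathcal Z}$ the H\"older--Sobolev norm of \eqref{E:sobolev}. Here $\lambda(\mathcal Z)<\infty$, so $L^\infty(\mathcal Z,\lambda)\subseteq L^p\cap L^q$ and $\Xi\subseteq L^q$. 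Every $g\in C_0^\infty(\R^{d_Z})$ equals $\norm{g}_{L,\mathcal Z}\,(g/\norm{g}_{L,\mathcal Z})\in\norm{g}_{L,\mathcal Z}\,\Xi$, so $\bigcup_{c>0}c\Xi$ contains $C_0^\infty(\R^{d_Z})|_{\mathcal Z}$; and $C_0^\infty(\R^{d_Z})|_{\mathcal Z}$ is $L^p(\mathcal Z,\lambda)$-dense in the compactly supported $[0,1]$-valued Borel functions because a Borel measure on $\R^{d_Z}$ finite on bounded sets is Radon, so mollification converges in $L^p(\lambda)$. This verifies the richness hypothesis of Corollary \ref{C:reg1}, while $\sup_{\mu\in\mathcal M_\theta}\norm{f[\mu]}_{L^p(\mathcal Z,\lambda)}<\infty$ is assumed, so Assumption \ref{A:partialident2} holds. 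The set $\Xi$ is convex, contains $0$, is uniformly bounded, and by Arzel\`a--Ascoli on $\overline{\mathcal Z}$ (passing to its convex hull if needed) is totally bounded, hence compact, in sup-norm; and Theorems 2.7.1 and 2.8.3 of \citet{VW1996} bound $\log N(\varepsilon,\Xi,L^2(Q))\le C\varepsilon^{-d_Z/L}$ uniformly over $Q$, with $d_Z/L<2$ yielding the finite uniform entropy integral. This is the class of Example \ref{Ex:sobolev}, and (i)--(ii) hold as in the first paragraph.

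\textbf{Main obstacle.} The substance of the argument is the uniform-in-$\mathcal P$ empirical-process step (ii): checking that the covering-number bounds above are genuinely distribution-free and strong enough to deliver uniform Donskerness, uniform asymptotic equicontinuity, and --- through Lemma A.2 of \citet{LS2010} --- uniform bootstrap consistency all at once, rather than merely for a single fixed $\mu^*$. A secondary, more bookkeeping-type difficulty is reconciling the abstract richness condition of Corollary \ref{C:reg1} (an $L^p$-closure of positive dilations) with the concrete sup-norm and $L^\infty$-closures used to build $\Xi$; this reduces to regularity of $\lambda$ (Radon-ness on $\R^{d_Z}$, or separability of $L^p(\mathcal Z,\lambda)$ and finiteness of $\lambda$ on compacta in the general case) and, when the hypothesis is stated with $p=\infty$, to the harmless interpolation reduction to a finite exponent.
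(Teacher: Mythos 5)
Your proposal is correct and follows essentially the same route as the paper's proof: in the general case a countable dense family combined into a geometrically-weighted closed convex hull, a covering-number bound of order $(\log(1/\varepsilon))^2$ feeding the uniform-entropy Donsker theorems of \citet{VW1996} and Lemma A.2 of \citet{LS2010}, and in the Euclidean case the H\"older--Sobolev ball with the $\varepsilon^{-d_Z/L}$ entropy bound from Theorem 2.7.1 of \citet{VW1996}; your derivation of Assumption \ref{A:partialident2} directly from the contrapositive of Corollary \ref{C:reg1} is if anything cleaner than the paper's detour through Sion's minimax theorem. The one slip is that the approximation/closure of $\bigcup_{c>0} c\Xi$ should be taken in $L^q(\mathcal{Z},\lambda)$ with $1/p+1/q=1$ (the dual exponent to the one bounding the densities, so that H\"older's inequality controls $|\EE{\mu}{\phi}-\EE{\mu}{\phi'}|$), not in $L^p$ as you write --- this mirrors a typo in the statement of Corollary \ref{C:reg1} and changes nothing in your argument, since $q<\infty$ automatically and your interpolation reduction to finite $p$ is then unnecessary.
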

\begin{proof}
    The proof can be found in Section \ref{proofofLreg1}, on page \pageref{proofofLreg1}.
\end{proof}

\subsection{Proofs for inferential results}

\begin{proof}[Proof of Lemma \ref{L:reg1}]
\label{proofofLreg1}
    Let $q < \infty$ be such that $1/p + 1/q = 1$. By Proposition 3.4.5 of \citet{cohn2013measure}, $L^q(\mathcal{Z}, \lambda)$ is separable, so there exists some sequence $(f_\ell)_{\ell \in \N}$ of Borel functions whose $L^q$-closure contains the compactly supported Borel maps $\phi: \mathcal{Z} \ra [0,1]$. Because the $L^q$-distance of $f_\ell$ to any point in the latter set only decreases if we replace it with the projection $\Pi_{[0,1]}f_\ell$, $\Pi_{[0,1]}$ denoting the metric projection of $\R$ onto $[0,1]$, we may assume without loss of generality that every $f_\ell$ has range $[0,1]$.

    Let $\Xi$ be the (sup-norm, i.e.\ $L^\infty(\mathcal{Z}, \lambda)$) closed convex hull of the points $\{f_\ell/2^\ell: \ell \in \N\}.$ Fix $\ve > 0$ and let $L$ be such that $\ve > 2^{-L}$ (note that the smallest choice of such an $L$ satisfies $L \le - \log \ve / \log 2 + 1$, which is bounded above by $-2 \log \ve$ for $\ve$ sufficiently small). For all $\ell \le L$, define $A_\ell$ to be a set of cardinality at most $(L+1) 2^{L + 1 - \ell}$ contained in $[0,1]$ that contains $0$ and divides the interval into subintervals of length at most $2^{\ell - L - 1} / (L+1)$. Then, because each $f_\ell$ is bounded in magnitude by $1$, the set 
    \begin{align*}
        \left\{ \sum_{\ell = 1}^{L+1} a_{\ell}\frac{f_\ell}{2^\ell}: a_{\ell} \in A_\ell \, \forall \ell  \right\}
    \end{align*}
    defines a $2^{-L}$-cover of $\Xi$ (in the supremum norm). The cardinality of this set is at most $\prod_{\ell = 1}^{L+1} |A_\ell| = (L+1)^{L+1} 2^{L(L+1)/2}$. Letting $N(\cdot, \Xi, \norm{\cdot}_\infty)$ denote the covering number of $\Xi$ under the $L^\infty(\mathcal{Z}, \lambda)$ norm (\citet{VW1996}, \S2.1), conclude that 
    \begin{align*}
        \int_0^\infty \sqrt{\log N(\ve, \Xi, \norm{\cdot}_\infty) } \, \mathrm{d} \ve &\le \sum_{L = 1}^\infty 2^{-L} \sqrt{ \log( (L+1)^{L+1} 2^{L(L+1)/2})  } \\
        & \le \sum_{L = 1}^\infty 2^{-L} \sqrt{(L+1)^2 + L(L+1)\log(2) / 2 } < \infty.
    \end{align*}
    As the functions in $\Xi$ are uniformly bounded by $1$, the bracketing numbers $N_{[]}(\ve, \Xi, \norm{\cdot}_\infty)$ are bounded by $N(\ve/2, \Xi, \norm{\cdot}_\infty)$ (\citet{VW1996}, \S2.7) and $\norm{\cdot}_\infty$ upper bounds $\norm{\cdot}_{L^2(\mathcal{Z}, P)}$ for any probability measure $P$, Theorem 2.8.4 of \citet{VW1996} implies that the class $\Xi$ is Donsker and pre-Gaussian uniformly in $\mu^* \in \mathcal{P}$, and Lemma A.2 of \citet{LS2010} implies that Assumption \ref{A:partialident}(4) holds. The uniform $O_p(1)$ requirement of Assumption \ref{A:partialident}(2) is fulfilled by Markov's inequality and the fact that $\Xi$ is pre-Gaussian uniformly in $\mu^*$ (\citet{VW1996}, \S2.8). Theorem 2.8.2 of \citet{VW1996} implies that the asymptotic equicontinuity condition of Assumption \ref{A:partialident}(3) is satisfied uniformly in $\mu^* \in \mathcal{P}$ (which is stronger than our requirement) with seminorms 
    \begin{align} \label{E:rhonu}
    \rho_{\mu^*}(\phi, \phi' ) \equiv \EE{\mu^*}{(\phi - \phi' - \EE{\mu^*}{\phi - \phi'} )^2}.
    \end{align}
    Moreover, $\Xi$ is clearly totally bounded and closed in $L^\infty(\mathcal{Z}, \lambda)$, so it satisfies the requirements of Assumption \ref{A:partialident}(1) with $\mathcal{U}$ the $L^\infty$-topology. This choice of $\mathcal{U}$ clearly makes the map $\phi \mapsto \EE{\mu^*}{\phi} - \EE{\mu}{\phi}$ continuous for all $\theta \in \Theta$, $\mu \in \mathcal{M}_\theta$. The seminorms $\rho_{\mu^*}$ defined in \eqref{E:rhonu} are also continuous with respet to the $L^\infty$-norm, so Assumption \ref{A:partialident}(3) is satisfied. Finally, by Corollary \ref{C:reg1}, a point $\theta \in \Theta$ is in $\Thetaw$ if and only if 
    \begin{align} \label{E:minimaxlemma}
        \inf_{\phi \in \Xi} \sup_{\mu \in \mathcal{M}_\theta} ( \EE{\mu^*}{\phi} - \EE{\mu}{\phi}) \le 0. 
    \end{align}
    Here, $\Xi$ may be regarded as a compact and convex subset of $L^\infty(\mathcal{Z}, \lambda)$, and $\mathcal{M}_\theta$ may be regarded as a convex subset of $L^1(\mathcal{Z} , \lambda )$ by identifying measures with their densities with respect to $\lambda$. An application of the minimax theorem (\citet{Sion1958}) implies that \eqref{E:minimaxlemma} is equivalent to $\sup_{\mu \in \mathcal{M}_\theta} \inf_{\phi \in \Xi} (\EE{\mu^*}{\phi} - \EE{\mu}{\phi}) \le 0$, so that Assumption \ref{A:partialident2} also holds. 

    Now we consider the second claim of the lemma. Under the stated conditions, the class of smooth and compactly supported functions $C_0^\infty(\R^{d_Z})$ is dense in $L^q(\mathcal{Z}, \lambda)$ (\citet{bogachev2007measure} Corollary 4.2.2). Let $\tilde{\Xi}$ be the set of restrictions $\phi|_{\mathcal{Z}}$, $\phi \in C_0^\infty(\R^{d_Z}): \norm{\phi}_{L, \mathcal{Z}} \le 1$. Let $\Xi$ be the $L^\infty(\mathcal{Z}, \lambda)$-closure of $\tilde{\Xi}$. The set of positive dilations of functions in $\Xi$ contains $C_0^\infty(\R^{d_Z})$ so that Corollary \ref{C:reg1} implies that Assumption \ref{A:partialident2} is fulfilled if we can likewise show that $\Xi$ is compact in the norm $L^\infty(\mathcal{Z}, \lambda)$. Assume without loss of generality, potentially by extending the set $\mathcal{Z}$, that $\mathcal{Z}$ is bounded, convex, and has nonempty interior. Theorem 2.7.1 of \citet{VW1996} implies that
    \begin{align*}
        \log N(\ve, \Xi, \norm{\cdot}_\infty) \le \log N(\ve/2, \tilde{\Xi}, \norm{\cdot}_\infty) \le K \ve^{-d_Z / L},
    \end{align*}
    where $K$ depends only on $L$, $d_Z$, and $\mathcal{Z}$. Hence, $\Xi$ is totally bounded and closed in the sup-norm topology $\mathcal{U}$. Finally, because $d_Z/L < 2$, Theorems 2.8.2 and 2.14.10 of \citet{VW1996} still apply, and imply that $\Xi$ meets the requirements of Assumptions \ref{A:partialident} and \ref{A:partialident2}. 
\end{proof}

\begin{proof}[Proof of Proposition \ref{P:inf}]
\label{proofofPinf}
    The first inequality of \eqref{E:inf1} follows because $\eta_{\theta, \mu^*}(\phi) \le 0$ for all $\mu^*$, $\theta \in \Thetaw(\mu^*)$, and $\phi \in \Xi$, so that
    \begin{align} \nonumber
        \sqrt{n} T_n(\theta) & = \sqrt{n} \sup_{\phi \in \Xi} \inf_{\mu \in \mathcal{M}_\theta} ((\EE{n}{\phi} - \EE{\mu^*}{\phi}) + (\EE{\mu^*}{\phi} - \EE{\mu}{\phi})) \\
        & = \sup_{\phi \in \Xi} (\mathbb{G}_{n, \mu^*} (\phi) + \sqrt{n} \eta_{\theta, \mu^*}(\phi)) \le \sup_{\phi \in \Xi}  (\mathbb{G}_{n, \mu^*} (\phi) + \lambda_n\eta_{\theta, \mu^*}(\phi)). \label{E:tnsqn}
    \end{align}
    Using Assumption \ref{A:partialident}(2), we may write 
    \begin{align*}
        \sup_{\phi \in \Xi} |\eta_{\theta, \mu^*}(\phi) - \eta_{\theta, n}^-(\phi)| & \le \sup_{\phi \in \Xi} |\eta_{\theta, \mu^*}(\phi) - \eta_{\theta, n}(\phi)| 
        \le \sup_{\phi \in \Xi} | \EE{n}{\phi} - \EE{\mu^*}{\phi}|\\
        &= n^{-1/2} \norm{\mathbb{G}_{n, \mu^*}}_{\ell^\infty} = n^{-1/2} O_p(1)
    \end{align*}
    uniformly in $\mu^*$ and $\theta \in \Thetaw(\mu^*)$. Hence, 
    \begin{align} 
    \label{E:ps100}
        | \sup_{\phi \in \Xi}  (\mathbb{G}_{n, \mu^*}^* (\phi) + \lambda_n\eta_{\theta,n}^-(\phi)) - \sup_{\phi \in \Xi}  (\mathbb{G}_{n, \mu^*}^* (\phi) + \lambda_n\eta_{\theta, \mu^*}(\phi)) | \\ 
        \le  \lambda_n\sup_{\phi \in \Xi} |\eta_{\theta, \mu^*}(\phi) - \eta_{\theta, n}^-(\phi)| \\
        = o_p(1)
    \end{align}
    uniformly in $\mu^*$ and $\theta \in \Thetaw(\mu^*)$. Moreover, for any $h \in \mathrm{BL}_1(\R)$, the map $\mathbb{G}_{n,\mu^*} \mapsto h((\sup_{\phi \in \Xi} \mathbb{G}_{n,\mu^*}(\phi) + \lambda_n \eta_{\theta, \mu^*}(\phi)))$ is in $\mathrm{BL}_1$, so that Assumption \ref{A:partialident}(4) implies that 
    \begin{align}
        \sup_{h \in \mathrm{BL}_1(\R)} | \EE{\mu^*}{h(\sup_{\phi \in \Xi} (\mathbb{G}_{n, \mu^*}(\phi) + \lambda_n \eta_{\theta, \mu^*}(\phi)))} - \EE{n}{h(\sup_{\phi \in \Xi}(\mathbb{G}_{n,\mu^*}^* + \lambda_n \eta_{\theta, \mu^*}(\phi)))} | = o_p(1) \label{E:ps101}.
    \end{align}
    uniformly in $\mu^*$. The triangle inequality, \eqref{E:ps100}, and \eqref{E:ps101} imply the second line of \eqref{E:inf1}.

    By Assumption \ref{A:partialident}(1), $\eta_{\theta, \mu^*}$ is the infimum of a collection of $\mathcal{U}$-continuous functions, and is therefore $\mathcal{U}$-upper semicontinuous. Therefore, it achieves its maximum on $\Xi$, and $K_{\mu^*}(\theta)$ is always nonempty.
    
    Now, we prove the second line of \eqref{E:inf2}. Fix $\mu^*$ and $\theta \in \Thetaw(\mu^*)$. The preceding arguments imply that $\sup_{\phi \in \Xi} ( \mathbb{G}_{n,\mu^*}^*(\phi) + \lambda_n \eta_{\theta, n}^- (\phi)) =\sup_{\phi \in \Xi} ( \mathbb{G}_{n,\mu^*}^*(\phi) + \lambda_n \eta_{\theta, \mu^*} (\phi)) + o_p(1)$, uniformly in $\mu^*$. Let $(b_n)$ be a diverging sequence that is $o(\lambda_n)$, and let $\PP{n}{\cdot}$ denote the sample probability associated with a (random) sample of size $n$ drawn from $\mu^*$. By Assumption \ref{A:partialident}(4), $\PP{n}{\sup_{\phi \in \Xi} \mathbb{G}_{n,\mu^*}^*(\phi) > b_n}  = o_p(1)$. By Assumption \ref{A:partialident}(1), $T_n(\theta)$ is nonnegative, so by \eqref{E:inf1}, $\EE{\mu^*}{\PP{n}{\sup_{\phi \in \Xi} (\mathbb{G}_{n,\mu^*}^*(\phi) + \lambda_n \eta_{\theta, \mu^*}(\phi)) < c}} = o(1)$ for any $c < 0$. Accordingly, we may choose a sequence $(c_n) \uparrow 0$ such that $$\PP{n}{\sup_{\phi \in \Xi} (\mathbb{G}_{n,\mu^*}^*(\phi) + \lambda_n \eta_{\theta, \mu^*}(\phi)) < c_n} = o_p(1).$$ By the union bound, we may conclude that 
    \begin{align} \label{E:ps103}
        \PP{n}{\sup_{\phi \in \Xi} (\mathbb{G}_{n,\mu^*}^*(\phi) + \lambda_n \eta_{\theta, \mu^*}(\phi))  = \sup_{\phi : \eta_{\theta}(\phi) \ge (c_n - b_n)/\lambda_n} (\mathbb{G}^*_{n,\mu^*}(\phi) + \lambda_n \eta_{\theta, \mu^*}(\phi))  } \\
        = 1 - o_p(1).
    \end{align}
    By Assumption \ref{A:partialident}(3), for any $\delta > 0$ and $\phi \in \Xi$, there is some neighborhood $U_\phi$ of $\phi$ satisfying that $\rho_{\mu^*}(\phi, \phi') < \delta$ whenever $\phi' \in U_\phi$. The union $U_K \equiv \bigcup_{\phi \in K_{\mu^*}(\theta)} U_\phi$ is an open neighborhood of $K_{\mu^*}(\theta)$ satisfying that 
    \begin{align*}
        \sup_{\phi' \in U_K}\inf_{\phi \in K_{\mu^*}(\theta)} \rho_{\mu^*}(\phi, \phi') \le \delta.
    \end{align*}
    By upper semicontinuity and compactness, $\eta_{\theta, \mu^*}$ attains its maximum on $\Xi \setminus U_K$, and by consequence $\sup_{\phi \in \Xi \setminus U_K} \eta_{\theta, \mu^*}(\phi) < 0$. Thus, for all $n$ large enough such that $(c_n - b_n) / \lambda_n > \sup_{\phi \in \Xi \setminus U_K} \eta_{\theta, \mu^*}(\phi)$, the set $\{\phi : \eta_{\theta, \mu^*}(\phi) \ge (c_n - b_n)/\lambda_n\}$ is contained in $U_K$, and  
    \begin{align}
        \sup_{\phi : \eta_{\theta}(\phi) \ge (c_n - b_n)/\lambda_n} (\mathbb{G}^*_{n,\mu^*}(\phi) + \lambda_n \eta_{\theta, \mu^*}(\phi)) \le \sup_{\substack{\phi: \exists \phi' \in K_{\mu^*}(\theta) \text{ s.t.} \\ \rho_{\mu^*}(\mu, \mu') < \delta }} \mathbb{G}^*_{n,\mu^*}(\phi). \label{E:ps102}
    \end{align}
    Let $\ve > 0$ be arbitrary. By Addendum 1.5.8 of \citet{VW1996}, Assumption \ref{A:partialident}(2) and (3) imply that
    \begin{align*}
        \lim_{\delta \downarrow 0} \mu^*(\{\sup_{\rho_{\mu^*} (\phi , \phi') < \delta} | \mathbb{G}_{\mu^*}(\phi) - \mathbb{G}_{\mu^*}(\phi') | > \ve/2 \}) = 0.
    \end{align*}
    Because the map $\mathbb{G}_{\mu^*} \mapsto \sup_{\rho_{\mu^*} (\phi , \phi') < \delta} | \mathbb{G}_{\mu^*}(\phi) - \mathbb{G}_{\mu^*}(\phi') |$ is $\ell^\infty$-continuous, Assumption \ref{A:partialident}(4) implies that there is some $\delta$ sufficiently small so that 
    \begin{align*}
       \PP{n}{\sup_{\rho_{\mu^*} (\phi , \phi') < \delta} | \mathbb{G}_{n,\mu^*}^*(\phi) - \mathbb{G}_{n,\mu^*}^*(\phi') | > \ve } < \ve + o_p(1).
    \end{align*}
    For this choice of $\delta$, 
    \begin{align*}
        \PP{n}{\sup_{\substack{\phi: \exists \phi' \in K_{\mu^*}(\theta) \text{ s.t.} \\ \rho_{\mu^*}(\mu, \mu') < \delta }} \mathbb{G}^*_{n,\mu^*}(\phi) > \sup_{\phi \in K_{\mu^*}(\theta)} \mathbb{G}^*_{n,\mu^*}(\phi) + \ve} < \ve + o_p(1).  
    \end{align*}
    In conjunction with \eqref{E:ps102} and \eqref{E:ps103}, this implies that 
    \begin{align*}
        \PP{n}{\sup_{\phi \in \Xi} (\mathbb{G}_{n,\mu^*}^*(\phi) + \lambda_n \eta_{\theta, \mu^*}(\phi)) > \sup_{\phi \in K_{\mu^*}(\theta)} \mathbb{G}_{n,\mu^*}^*(\phi) + \ve }< \ve + o_p(1).
    \end{align*}
    As $\ve$ was arbitrary, the right hand side of the previous display may be amended to be simply $o_p(1)$. By definition of $K_{\mu^*}(\theta)$, one also has $\sup_{\phi \in \Xi} (\mathbb{G}_{n,\mu^*}^*(\phi) + \lambda_n \eta_{\theta, \mu^*}(\phi)) \ge \sup_{\phi \in K_{\mu^*}(\theta)} \mathbb{G}_{n,\mu^*}^*(\phi)$, so that 
    \begin{align*}
       & \PP{n}{|\sup_{\phi \in \Xi} (\mathbb{G}_{n,\mu^*}^*(\phi) + \lambda_n \eta_{\theta, \mu^*}(\phi)) - \sup_{\phi \in K_{\mu^*}(\theta)} \mathbb{G}_{n,\mu^*}^*(\phi) | > \ve } = o_p(1)\text{, whence} \\
        &\sup_{h \in \mathrm{BL}_1(\R)}\left|\EE{n}{h ( \sup_{\phi \in \Xi} (\mathbb{G}_{n,\mu^*}^*(\phi) + \lambda_n \eta_{\theta, \mu^*}(\phi))) } - \EE{n}{ h(\sup_{\phi \in K_{\mu^*}(\theta)}\mathbb{G}_{n,\mu^*}^*(\phi))  }  \right| \le \ve + o_p(1).
    \end{align*}
    Again, we may use the fact that $\ve$ is arbitrary and $\eta_{\theta, \mu^*}$ is well approximated by $\eta_{\theta,n}^-$ to conclude that 
    \begin{align*}
        \sup_{h \in \mathrm{BL}_1(\R)}\left|\EE{n}{h ( \sup_{\phi \in \Xi} (\mathbb{G}_{n,\mu^*}^*(\phi) + \lambda_n \eta_{\theta, n}^-(\phi))) } - \EE{n}{ h(\sup_{\phi \in K_{\mu^*}(\theta)} \mathbb{G}_{n,\mu^*}^*(\phi)) } \right| = o_p(1)
    \end{align*}
    as desired. The first line of \eqref{E:inf2} is proved similarly by decomposing $\sqrt{n}T_n(\theta)$ as in the first line of \eqref{E:tnsqn} and applying the convergence and asymptotic equicontinuity properties of $\mathbb{G}_{n,\mu^*}$ along with upper semicontinuity of $\eta_{\theta, \mu^*}$. 
\end{proof}

\begin{proof}[Proof of Corollary \ref{C:inf2}]
\label{proofofCinf2}
    For all $n$ and $\mu^*$, let $h_{n,\mu^*}$ be a nonincreasing Lipschitz function such that $h_{n,\mu^*}(x) = 1$ for all $x \le \hat{c}_{1- \alpha }(\theta)$ and $h_{n,\mu^*}(x) = 0$ for all $x \ge \hat{c}_{1- \alpha }(\theta) + \ve$. Then, \eqref{E:inf2} implies that 
    \begin{align*}
        &\inf_{\substack{\mu^* \in \mathcal{P} \\ \theta \in \Thetaw(\mu^*)}}\mu^*({\sqrt{n} T_n(\theta) \le \hat{c}_{1 - \alpha }(\theta) + \ve } ) \ge \inf_{\substack{\mu^* \in \mathcal{P} \\ \theta \in \Thetaw(\mu^*)}} \EE{\mu^*}{h_{n,\mu^*}(\sqrt{n} T_n(\theta)) } \\
        & \quad \ge \inf_{\substack{\mu^* \in \mathcal{P} \\ \theta \in \Thetaw(\mu^*)}} \EE{\mu^*}{ h_{n,\mu^*}(\sup_{\phi \in \Xi}(\mathbb{G}_{n, \mu^*}(\phi) + \lambda_n \eta_{\theta, \mu^*}(\phi)) )} \\
        &\quad \ge \inf_{\substack{\mu^* \in \mathcal{P} \\ \theta \in \Thetaw(\mu^*)}} \EE{\mu^*}{ \EE{n}{h_{n,\mu^*}(\sup_{\phi \in \Xi}(\mathbb{G}_{n, \mu^*}^*(\phi) + \lambda_n \eta_{\theta,n}^-(\phi)) )}} \\
        &\qquad - \sup_{\substack{\mu^* \in \mathcal{P} \\ \theta \in \Thetaw(\mu^*)}} \mathrm{E}_{\mu^*} \Big[ \Big| \EE{n}{h_{n,\mu^*}(\sup_{\phi \in \Xi}(\mathbb{G}_{n, \mu^*}^*(\phi) + \lambda_n \eta_{\theta,n}^-(\phi)) ) } \\
         &\qquad \qquad \qquad- \EE{\mu^*}{h_{n,\mu^*}(\sup_{\phi \in \Xi} (\mathbb{G}_{n, \mu^*}(\phi) + \lambda_n \eta_{\theta, \mu^*}(\phi))) }\Big| \Big].
    \end{align*}
    By boundedness of $h_{n,\mu^*}$ and the second line of \eqref{E:inf1}, the final term of the previous display is $o(1)$. Now $\EE{n}{h_{n,\mu^*}(\sup_{\phi \in \Xi}(\mathbb{G}_{n, \mu^*}^*(\phi) + \lambda_n \eta_{\theta,n}^-(\phi)) )} \ge 1 - \alpha$ by construction, so \eqref{E:crit1} follows.

    Now suppose that $\theta \not\in \Thetaw(\mu^*)$ and Assumption \ref{A:partialident2} holds. For $\ve > 0$ smaller than $\alpha$, let $c_{1-\alpha + \ve}$ denote the $(1 - \alpha + \ve)^\text{th}$ quantile of $\sup_{\phi \in \Xi} \mathbb{G}_{\mu^*}$. An argument replicating the one above implies that 
    \begin{align*}
        \PP{n}{\sup_{\phi \in \Xi} (\mathbb{G}_{n,\mu^*}^* + \lambda_n \eta_{\theta,n}^-(\phi)) \le c_{1 - \alpha + \ve} + \ve} &\ge \PP{n}{\sup_{\phi \in \Xi} \mathbb{G}_{n,\mu^*}^* \le c_{1- \alpha + \ve} + \ve}  \\
        & \ge 1 - \alpha + \ve - o_p(1). 
    \end{align*}
    With probability approaching $1$, the quantity above is bounded below by $1 - \alpha$, and so one has $\hat{c}_{1-\alpha}(\theta) \le c_{1 - \alpha + \ve} (\theta)+ o_p(1)$. On the other hand,  
    \begin{align*}
        \sqrt{n} T_n(\theta) & \ge \sqrt{n}\sup_{\phi \in \Xi} \inf_{\mu \in \mathcal{M}_\theta} (\EE{\mu^*}{\phi} - \EE{\mu}{\phi} ) - \sup_{\phi \in \Xi} \mathbb{G}_{n,\mu^*}(\phi) \\
        & = \sqrt{n}\sup_{\phi \in \Xi} \inf_{\mu \in \mathcal{M}_\theta} (\EE{\mu^*}{\phi} - \EE{\mu}{\phi} ) - O_p(1).
    \end{align*}
    Hence, 
    \begin{align*}
        &\mu^*\big(\{\sqrt{n} T_n(\theta) \le \hat{c}_{1-\alpha}(\theta) + \varepsilon\}\big) \\
        &\le \mu^*\big(\{\sqrt{n} \sup_{\phi \in \Xi} \inf_{\mu \in \mathcal{M}_\theta} (\mathbb{E}_{\mu^*}[\phi] - \mathbb{E}_{\mu}[\phi]) \le c_{1 - \alpha + \varepsilon} + \varepsilon + O_p(1) \}\big) \to 0.
    \end{align*}
\end{proof}

\section{Constraining the search over a smaller space of features}
\label{A:cont_compact_phi}

We show in Lemma \ref{lem:dim_reduc_cont} that, under additional regularity conditions, the dimensionality of the search can be further reduced by considering all continuous compactly supported functions $\phi:\mathcal{Z}\to [0,1]$.

\begin{asm}\label{A:extremepts} 
(i) $\mathcal{Z}$ is locally compact; (ii) $\lambda_\theta \in \mathfrak{B}(\mathcal{Z})$ is finite on compact sets (iii) for some $p \in (1,\infty]$ the densities $f[\mu]$ of measure $\mu \in \mathcal{M}_\theta$ with respect to $\lambda_\theta$ satisfy $\sup_{\mu \in \mathcal{M}_\theta}\norm{f[\mu]}_{L^p(\lambda_\theta)} < \infty$; (iv) $\psi_\theta: \mathcal{W} \ra \mathcal{Z}$ is continuous 
\end{asm}

We provide a discussion of this assumption following the result.

Let $C_c(\mathcal{Z})$ denote the space of continuous functions on $\mathcal{Z}$ with compact support, and $\overline{\Gamma}_\theta^w$ denote the weak closure of $\Gamma_\theta$ in $\mathcal{P}(\mathcal{W})$.

\begin{lemma}
\label{lem:dim_reduc_cont}
    Let the assumptions of Theorem \ref{thm:convexity} hold and suppose that Assumption \ref{A:extremepts} also holds. Then, $\theta \in \Thetaw$ if and only if $\EE{\mu^*_Z}{\phi} \le \sup_{\gamma \in (\overline{\Gamma}_\theta^w)} \EE{\gamma}{\phi \circ \psi_\theta}$ for all $\phi \in C_c(\mathcal{Z})$, and by consequence, $\Gamma_\theta$ can be replaced by $\overline{\Gamma}_\theta^w$ in \eqref{IDset_W} and $\Phi_b$ can be replaced by $C_c(\mathcal{Z})$ in \eqref{IDset_W} and \eqref{IDset_restricted}. 
\end{lemma}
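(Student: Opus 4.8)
The plan is to show that, under Assumption \ref{A:extremepts}, two reductions are simultaneously available: (i) the supremum over $\mu\in\mathcal M_\theta$ in the characterization of $\Thetaw$ provided by Theorem \ref{thm:convexity} (equivalently, the supremum over $\gamma\in\Gamma_\theta$ appearing via the pushforward in \eqref{E:ps12}) can be replaced by a supremum over the weak closure $\overline\Gamma_\theta^w$; and (ii) the test functions $\phi$ can be restricted from $\Phi_b(\mathcal Z)$ to $C_c(\mathcal Z)$. I would begin from the inequality characterization established in Proposition \ref{P:main} / Theorem \ref{thm:convexity}: $\theta\in\Thetaw$ if and only if $\EE{\mu^*_Z}{\phi}\le \sup_{\gamma\in\Gamma_\theta}\EE{\gamma}{\phi\circ\psi_\theta}$ for all $\phi\in\Phi_b(\mathcal Z)$.

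For reduction (i), the key observation is that for a fixed \emph{continuous} $\phi$ with compact support, $\phi\circ\psi_\theta$ is continuous and bounded on $\mathcal W$ because $\psi_\theta$ is continuous by Assumption \ref{A:extremepts}(iv); hence the functional $\gamma\mapsto\EE{\gamma}{\phi\circ\psi_\theta}$ is weak*-continuous on $\mathcal P(\mathcal W)$, so its supremum over $\Gamma_\theta$ equals its supremum over $\overline\Gamma_\theta^w$. This is exactly why we must first pass to $C_c(\mathcal Z)$ before enlarging $\Gamma_\theta$ to its weak closure — for a merely bounded Borel $\phi$ the functional need not be weak*-continuous, and the enlargement could strictly increase the right-hand side. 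For reduction (ii), the argument is the one underlying Corollary \ref{C:reg1} (which cites Proposition \ref{P:main_appendix}): Assumption \ref{A:extremepts}(ii)–(iii) give that $f[\mu^*_Z]\in L^p(\lambda_\theta)$ and $\sup_{\mu\in\mathcal M_\theta}\|f[\mu]\|_{L^p(\lambda_\theta)}<\infty$ (here I would note that $\mu^*_Z\in\overline{\mathcal M}_\theta$ forces $f[\mu^*_Z]$ into the same $L^p$-ball by lower semicontinuity of the norm under TV convergence, or simply restrict to $\theta$ for which this holds since otherwise $\theta\notin\Thetaw$ trivially). Then, by duality of $L^p$ and $L^q$ and the fact that $\mathcal Z$ is locally compact so that $C_c(\mathcal Z)$ is dense in $L^q(\lambda_\theta)$ for $q<\infty$ (and the $p=\infty$, $q=1$ case is standard), the $L^p$-closure of the positive dilations $\bigcup_{c>0}c\,\Xi$ with $\Xi=\{\phi\in C_c(\mathcal Z):0\le\phi\le 1\}$ contains all compactly supported Borel $\phi:\mathcal Z\to[0,1]$, which is the hypothesis of Corollary \ref{C:reg1}. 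Applying that corollary yields that checking \eqref{E:Creg1} over $\phi\in C_c(\mathcal Z)$ is equivalent to $\theta$ being in the (essentially) identified set.

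Combining the two pieces: $\theta\in\Thetaw$ iff $\EE{\mu^*_Z}{\phi}\le\sup_{\mu\in\mathcal M_\theta}\EE{\mu}{\phi}$ for all $\phi\in C_c(\mathcal Z)$ (reduction (ii) via Corollary \ref{C:reg1}), and for such $\phi$ the right-hand side equals $\sup_{\gamma\in\Gamma_\theta}\EE{\gamma}{\phi\circ\psi_\theta}=\sup_{\gamma\in\overline\Gamma_\theta^w}\EE{\gamma}{\phi\circ\psi_\theta}$ (the pushforward identity \eqref{E:ps12} together with reduction (i)). This gives the stated iff. The final sentence of the lemma — that $\Gamma_\theta$ may be replaced by $\overline\Gamma_\theta^w$ in \eqref{IDset_W} and $\Phi_b$ by $C_c(\mathcal Z)$ in \eqref{IDset_W} and \eqref{IDset_restricted} — then follows by re-running the extreme-point arguments of Theorem \ref{thm:convexity} with $\overline\Gamma_\theta^w$ in place of $\Gamma_\theta$: in case (i) the extreme points of $\overline{\mathcal P(\mathcal W)}^w=\mathcal P(\mathcal W)$ are still the Dirac measures, so the $\sup_{w\in\mathcal W}$ form is unchanged; in case (ii) the Richter–Rogosinski/Bauer-type argument bounding the number of atoms by $d_g+1$ applies verbatim to the weakly closed constraint set.

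The main obstacle I anticipate is reduction (i) — specifically, making rigorous that enlarging $\Gamma_\theta$ to $\overline\Gamma_\theta^w$ does not change the supremum of $\EE{\gamma}{\phi\circ\psi_\theta}$ while being careful about which topology is "weak" (weak* on $\mathcal P(\mathcal W)$ against $C_b(\mathcal W)$) and that $\phi\circ\psi_\theta\in C_b(\mathcal W)$ genuinely requires both continuity of $\psi_\theta$ and compact support of $\phi$ (so that $\phi\circ\psi_\theta$ is bounded even if $\psi_\theta$ is not proper). A secondary subtlety is ensuring the $L^p$-ball hypothesis of Corollary \ref{C:reg1} covers $\mu^*_Z$ and not merely the elements of $\mathcal M_\theta$; the cleanest route is to observe that if $\mu^*_Z\notin\overline{\mathcal M}_\theta$ then $\theta\notin\Thetaw$ and the iff holds vacuously on that side, while if $\mu^*_Z\in\overline{\mathcal M}_\theta$ then a TV-approximating sequence from $\mathcal M_\theta$ has densities in a fixed $L^p$-ball and Fatou/weak-lower-semicontinuity places $f[\mu^*_Z]$ in that ball as well.
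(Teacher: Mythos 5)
Your proposal is correct and follows essentially the same route as the paper's proof: density of $C_c(\mathcal{Z})$ in $L^q(\lambda_\theta)$ combined with Corollary \ref{C:reg1} to reduce the test functions to $C_c(\mathcal{Z})$, then continuity of $\psi_\theta$ (Assumption \ref{A:extremepts}(iv)) to make $\phi\circ\psi_\theta$ bounded and continuous so that the supremum is unchanged when $\Gamma_\theta$ is replaced by $\overline{\Gamma}_\theta^w$, with the extreme-point arguments carrying over verbatim. Your additional care about placing $f[\mu^*_Z]$ in the $L^p$-ball (vacuous truth when $\mu^*_Z\notin\overline{\mathcal{M}}_\theta$, lower semicontinuity of the norm otherwise) is a reasonable refinement of a point the paper's proof leaves implicit.
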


\begin{proof} 
\label{prooflemdim_reduc_cont}
Suppose that Assumption \ref{A:extremepts} holds. 
Let $q$ be such that $1/p + 1/q = 1$. Because $q < \infty$ and $\lambda_\theta$ is finite on compact sets, the space $C_c(\mathcal{Z})$ of compactly supported functions over $\mathcal{Z}$ is dense in $L^q(\lambda_\theta)$ (see e.g.\ \citet{DriverApproximationConvolution}, Theorem 22.8). By Corollary \ref{C:reg1}, $\theta$ is in $\Thetaw$ if and only if \eqref{E:ps12} holds for all $\phi \in C_c(\mathcal{Z})$. Because the composition $\phi\circ \psi_\theta$ is bounded and continuous by Assumption \ref{A:extremepts}(iv), the right hand side of \eqref{E:ps12} is equal to $\sup_{\gamma \in \overline{\Gamma}_\theta^w} \EE{\gamma}{\phi \circ \psi_\theta}$. The arguments establishing \eqref{IDset_W} and \eqref{IDset_restricted} with these substitutions are exactly the same in this setting. 
\end{proof} 

\begin{remark}
    Assumption \ref{A:extremepts}(iv) might seem restrictive (e.g., it fails in the binary choice model). However, by adjusting the topology on the domain \(\mathcal{W}\), the Borel function \(\psi_\theta\) can be made continuous under this new topology, ensuring that all our previous results still apply. The intuition behind this adjustment is outlined as follows.
    
    Let $(\mathcal{W}, \mathcal{T})$ denote a topological space, with $\mathcal{W}$ a Polish space and $\mathcal{T}$ the system of open sets of $\mathcal{W}$. Let $(B_n)_{n \in \mathbb{N}}$ be a sequence of Borel sets in $\mathcal{W}$. There exists a Polish topology $\mathcal{T}' \supset \mathcal{T}$ on $\mathcal{W}$ such that the Borel $\sigma$-algebra generated by $\mathcal{T}'$ is the same as the one generated by $\mathcal{T}$, and every $B_n$ is both closed and open in $\mathcal{T}'$ (\citet{kechris1995classical}, Exercise 13.5). Consider the topology $\mathcal{T}^\prime$ that is is constructed by including all sets $\psi_\theta^{-1}(A)$, where $A \in \mathcal{Z}$ is an open set. Formally, $\mathcal{T}^\prime$ is the coarsest topology on $\mathcal{W}$ that makes $\psi_\theta$ continuous and includes the original topology $\mathcal{T}$.

    Because $\mathcal{Z}$ is Polish, there is a countable basis for its topology consisting of open sets $U_n \subseteq \mathcal{Z}$. Set $B_n = \psi_\theta^{-1}(U_n)$ for every $n$, and extend the topology on $\mathcal{W}$ to $\mathcal{T}'$ as above, so that every $B_n$ is open, but the Borel $\sigma$-algebra on $\mathcal{W}$ is not changed. Note that every open set $U \subseteq \mathcal{Z}$ can be written as a union $\bigcup_{n \in N_U} U_n$ of the sets $U_n$ over an index set $N_U \subseteq \mathbb{N}$, and one has $\psi_\theta^{-1}(U) = \bigcup_{n \in N_U} B_n$. As the latter set is open in the extended topology on $\mathcal{W}$, $\psi_\theta: \mathcal{W} \to \mathcal{Z}$ is continuous in the extended topology. Therefore, \textit{there always exists a Polish topology on $\mathcal{W}$ which preserves its Borel structure and makes $\psi_\theta$ continuous}.
\end{remark}

\end{document}